\keywords{unification, convergent string-rewriting systems, fixed point problem, common term problem, common equation problem, conjugacy problem, common multiplier problem}
\theoremstyle{plain} 
\newcommand\irregularcircle[2]{
  \pgfextra {\pgfmathsetmacro\len{(#1)+rand*(#2)}}
  +(0:\len pt)
  \foreach \a in {10,20,...,350}{
    \pgfextra {\pgfmathsetmacro\len{(#1)+rand*(#2)}}
    -- +(\a:\len pt)
  } -- cycle
}
\tikzset{set/.style={draw,circle,inner sep=0pt,align=center}}
\newcommand{\ignore}[1]{}
\newcommand{\flr}{\rightarrow}
\newcommand{\Cross}{\mathbin{\tikz [x=1.4ex,y=1.4ex,line width=.2ex] \draw (0,0) -- (1,1) (0,1) -- (1,0);}}%
\newcommand{\Var}{\mathcal{V}\!\mathit{ar}}
\newcommand{\dom}{\mathcal{D}om}
\newcommand{\vran}{\mathcal{V\!R}\!\mathit{an}}
\newtheorem{defn}{Definition}
\newcommand{\normform}[1]{(#1)_\downarrow}
\newcommand{\strind}[1]{_{\mathsf{[\, #1 \,]}}}
\newcommand{\ssp}[1]{^{(#1)}}
\newcommand{\raw}{\rightarrow}
\newcommand{\rawrnorm}[1][R]{\rightarrow_{#1}^!}
\newcommand{\rawrrtc}[1][R]{\rightarrow_{#1}^*}
\newcommand{\drawr}[1][R]{\rightsquigarrow_{#1}}
\newcommand{\drawrnorm}[1][R]{\rightsquigarrow_{#1}^!}
\newcommand{\drawrtc}[1][R]{\rightsquigarrow_{#1}^+}
\newcommand{\drawrrtc}[1][R]{\rightsquigarrow_{#1}^*}
\def\presuper#1#2%
\def\presub#1#2%
\def\presubsuper#1#2#3%
\newcommand*{\rom}[1]{\expandafter\@slowromancap\romannumeral #1@}
\begin{document}

\title[On Problems Dual to Unification: The String-Rewriting Case]{On Problems Dual to Unification:\\
\large The String-Rewriting Case}
\titlecomment{{\lsuper*}A variant of the paper has also been published in \cite{zumrutDissertation, 2017arXiv, UNIF2017Akcam}. Some of the results reported here are a partial fulfillment of the Ph.D. requirements of the fourth author, and will be part of his dissertation.}

\author[Akcam]{Z\"{u}mr\"{u}t Ak{\c c}am}	
\address{Stevens Institute of Technology, Hoboken, NJ, US}	
\email{zakcamki@stevens.edu}  
\thanks{Thanks to Dr. Daniel J. Dougherty for his feedback}	

\author[Cornell]{Kimberly A. Cornell}
\address{University at Albany, SUNY\\ Albany, NY, US}	
\email{kacornell@albany.edu}  
\author[Hono]{Daniel S. Hono II}	
\address{University at Albany, SUNY\\ Albany, NY, US}	
\email{dhono@albany.edu}  

\author[Narendran]{Paliath Narendran}	
\address{University at Albany, SUNY\\ Albany, NY, US}	
\email{pnarendran@albany.edu}

\author[Pulver]{Andrew Pulver}	
\address{University at Albany, SUNY\\ Albany, NY, US}	
\email{apulver@albany.edu}





\begin{abstract}
  \noindent In this paper, we investigate problems which are dual to the
  unification problem, namely the Fixed Point~(FP) problem, Common Term~(CT) problem and
  the Common Equation~(CE) problem for string rewriting
  systems (SRS). Our main motivation is computing fixed points in
  systems, such as loop invariants in programming languages. We
  show that the fixed point~(FP) problem is reducible to the common term
  problem. Our new results are:
  (i) the fixed point problem is undecidable for finite convergent
  string rewriting systems(SRS) whereas it is decidable in polynomial time for finite, convergent and dwindling string rewriting systems,
  (ii) the common term problem is undecidable
  for the class of dwindling string rewriting systems, and
  (iii) for the
  class of finite, monadic and convergent systems, the common equation
  problem is decidable in polynomial time but for the class of dwindling string rewriting systems, common equation problem is undecidable.
\end{abstract}

\maketitle
\section*{Introduction}\label{S:one}

  Unification, with or without background theories such as
associativity and commutativity, is an area of great theoretical
and practical interest. The former problem, called \emph{equational}
or \emph{semantic} unification, has been studied from several different angles.
Here we investigate some problems that can clearly be viewed as \emph{dual}
to the unification problem.
Our main motivation for this work is theoretical, but, as explained below,
we begin with a practical application that is shared by
many fields.

In every major research field, there are variables or other parameters that change over
time. These variables are modified --- increased
or decreased --- as a
result of a change in the environment. 
Computing \emph{invariants,} or expressions whose values do not change
under a transformation, is very important in many areas 
such as Physics, e.g., invariance under the \emph{Lorentz} transformation.

In Computer Science, the issue of obtaining invariants arises
in \emph{axiomatic semantics} or \emph{Floyd-Hoare semantics},
in the context of formally proving a loop to be correct. A
\emph{loop invariant} is a
condition, over the program variables, 
that holds before and after each iteration. Our research
is partly motivated by the related question of
finding expressions, called \emph{fixed points}, whose values
will be the same before and after each iteration, i.e., will
remain unchanged as long as the iteration goes on. For instance,
for a loop whose body is \[ \text{\tt X = X + 2; Y = Y - 1;} \] the
value of the expression {\tt X + 2Y} is a fixed point.

In this paper, we explore the Fixed Point, Common Term and Common Equation problems in \emph{convergent} string rewriting systems (SRS), more specifically in the subclass of \emph{dwindling} string rewriting systems. 

We prove that the Fixed Point problem is undecidable for convergent SRS, yet polynomial for dwindling SRS. In the case of the Common Term and Common Equation problems, we will demonstrate their undecidability within the context of dwindling SRS. Additionally, we will showcase that the Common Equation problem attains polynomial complexity when considering monadic SRS. Our results and the current literature for these three problems are  summarized in Table~\ref{complexityTable}, the \emph{rectangle} boxes indicating \emph{our own
results}:

\begin{table}[h]
\begin{center}
\begin{tabular}{|c|c|c|c|c|} \hline
	& Convergent & Length-reducing & Dwindling & Monadic \\[6pt] \hline
    &            &                 &           &         \\
FP  & \doublebox{\textbf{undecidable}} & NP-complete   & \doublebox{\textbf{P}}       &  P      \\
    &            &                 &           &         \\
CT  & undecidable & undecidable    & \doublebox{\textbf{undecidable}} & P \\
    &            &                 &           &         \\
CE  & undecidable & undecidable    & \doublebox{\textbf{undecidable}} & \doublebox{\textbf{P}} \\
    &            &                 &           &         \\ \hline
\end{tabular}
\end{center}
\caption{Complexity results of the problems in String Rewriting Systems.}
\label{complexityTable}
\end{table}

An important contribution of this paper is completing the complexity results for these three problems for convergent, length-reducing, dwindling and monadic SRS. 
Length-reducing and monadic string rewriting systems are thoroughly investigated by \cite{Otto1986}, \cite{NOW}, \cite{OND98}, \cite{NarendranOtto97}. The explanation of these different subclasses of string rewriting systems can be found under the section ``Notation and Preliminaries."~\ref{S:defns}

Dwindling convergent systems are especially important because they
are a special case of \emph{subterm-convergent theories} which are
widely used in the field of protocol analysis~\cite{abadi2006deciding,
  Baudet2005, ciocaba2009, cortier2009}. Tools such as TAMARIN
subterm-convergent theories since these theories have nice properties
(e.g., finite basis property~\cite{chevalier2010compiling}) and
decidability results~\cite{abadi2006deciding}.

\section*{Notation and Preliminaries}\label{S:defns}
We start by presenting some notation and definitions on term rewriting
systems and particularly string rewriting systems. Only some
definitions are given in here, but for more details, refer to the
books~\cite{Term} for term rewriting systems and~\cite{Botto} for
string rewriting systems.

A signature $\Sigma$ consists of finitely many ranked function symbols.
Let $X$~be a (possibly infinite) set of variables. The set
of all terms over~$\Sigma$ and~$X$ is denoted~as $T(\Sigma, X)$. 
$\Var(t)$ is the set of variables for term $t$ and a term is a ground term iff $\Var(t) = \emptyset$. 
The set of \emph{ground terms}, or terms with no variables
is denoted~$T(\Sigma)$.
A term
rewriting system~(TRS) is a set of rewrite rules that are defined on
the signature~$\Sigma$, in the form of $l \rightarrow r$, 
where $l$ and $r$
are called the left-hand-side and right-hand-side (\emph{lhs} and \emph{rhs}) of the rule, respectively. The rewrite relation induced by a term rewriting system~$R$ is denoted by~$\rightarrow_R^{}$. 
The reflexive
and transitive closure of~$\rightarrow_R^{}$ is denoted
$\rightarrow_R^{*}$. A TRS~$R$
is called \emph{terminating} iff there is no infinite chain of terms.
A TRS $R$~is \emph{confluent} iff, for all terms $t$,
$s_1$, $s_2$, if $s_1$ and $s_2$ can be derived from~$t$, i.e.,
$s_1 \leftarrow_R^{*} t \rightarrow_R^{*} s_2$, then there exists a 
term~$t'$ such that $s_1 \rightarrow_R^{*} t' \leftarrow_R^{*} s_2$.  A TRS
$R$ is \emph{convergent\/} iff it is both terminating and confluent.

A term is \emph{irreducible} iff no rule of TRS $R$ can be
applied to that term. The set of terms that are irreducible modulo~$R$
is denoted by $IRR(R)$ and also called as terms in their \emph{normal
  form}s.  A term~$t'$ is said to be an \emph{R-normal form} of a term
$t$, iff it is irreducible and reachable from~$t$ in a finite number
of steps; this is written as $t \rightarrow_{R}^{!} t'$. Let $t_\downarrow$
denote the normal form of $t$ when $R$ is understood from the context.

String rewriting systems (SRS) are a restricted class of term rewriting
systems where all functions are unary. These unary operators, that are
defined by the symbols of a string, applied in the order in which
these symbols appear in the string, i.e., if $g, h \in \Sigma$, the
string $gh$ will be seen as the term $h(g(x))$. The set of all strings
over the alphabet $\Sigma$ is denoted by $\Sigma^*$ and the empty
string is denoted by the symbol $\lambda$. Thus the term rewriting system
$\{ p(s(x)) \rightarrow x , \; s(p(x)) \rightarrow x \}$ is equivalent to
the string-rewriting system \[ \{ sp \rightarrow \lambda , \;
ps \rightarrow \lambda \} \]

If $R$ is a string
rewriting system over alphabet $\Sigma$, then the single-step
reduction on $\Sigma^*$ can be written as:

For any $u,v \in \Sigma^*$, $u \rightarrow_R v$ iff there exists 
a rule~$l \rightarrow r
\in R$ such that 
$u = xly$ and $v = xry$ for some~$x, y \in \Sigma^*$; 
i.e., \[ {\rightarrow}_R^{} \; = \; \{ (xly, \, xry) \; \mid \; (l \rightarrow r) \in R, 
  \, x,y \in {{\Sigma}^*} \} \] 

For any string rewrite system~$R$ over $\Sigma$, the set of all
irreducible strings, $IRR(R)$, is a regular
language: in fact, $IRR(R) = \Sigma_{}^* \smallsetminus \{\Sigma_{}^*
l_1 \Sigma_{}^* \,\cup ... \cup \, \Sigma_{}^* l_n \Sigma_{}^*\}$,
where $l_1,\dots, l_n$ are the left-hand sides of the rules in~$R$.

Throughout the rest of the paper, 
$a, b, c, \dots, h$  will denote elements of the alphabet~$\Sigma$, 
and $l, r, u, v, w, x, y,z$ will denote strings over~$\Sigma$. Concepts such as {\em normal form,\/}  {\em terminating,\/} 
{\em confluent,\/} and {\em convergent\/} have the same definitions for string rewriting
systems as they have for term rewriting systems.
An SRS~$T$ is called {\em canonical\/} if and only if it is convergent and
{\em inter-reduced,\/} i.e., no lhs is a substring of another lhs and each rhs is an irreducible string.

For a string $x\in \Sigma^*$, the element at position $i$ is denoted
$x_{[i]}$, and the substring from position $i$ to position $j$
(inclusive) is denoted as $x_{[i:j]}$ where $i\leq j$ and this shall
denote the empty string when $i>j$. We will write $x_{[i:]}$ to denote
$x_{[i:|x|]}$ when it is cumbersome to use~$|x|$. Additionally, for an index sequence
$\beta = (\beta_1,\beta_2,\dots , \beta_c)$, we use $x_{[\beta]} := x_{[\beta_1]}x_{[\beta_2]}\dots x_{[\beta_c]}$.
Parenthesized superscripts shall be a general way to index elements in a sequence of
strings e.g.,  $(x^{(1)},x^{(2)},x^{(3)},\dots)$.

A  string rewrite system $T$ is said to be:
\begin{itemize}
\item[-] {\em monadic\/} iff the rhs of each rule in $T$ is either a single
 symbol or the empty string, e.g., $abc \rightarrow b$. \par 
\item[-] {\em dwindling\/} iff, for every rule $l \flr r$ in $T$, the rhs $r$ 
  is a {\em proper prefix\/} of its lhs $l$, e.g., $abc \rightarrow ab$. \par
\item[-] \emph{length-reducing} iff $| l | > | r |$ for all rules~$l \flr r$ in~$T$, e.g., $abc \rightarrow ba$.
\end{itemize}

\begin{figure}[h]
\centering
\pagestyle{empty}
\def\firstcircle{(-1,0) ellipse (1.5cm and 1cm)}
\def\secondcircle{(60:0cm) ellipse (4cm and 2cm)}
\def\thirdcircle{(0:1cm) ellipse (1.5cm and 1cm)}
\begin{tikzpicture}
    \begin{scope}[shift={(3cm,-5cm)}]
        \node (M) at (0,0){}; 
        \draw \firstcircle node[left=0.5cm of M] {$Monadic$};
        \draw \secondcircle node [above=1.0cm of M] {$\text{\emph{Length-Reducing}}$};
        \draw \thirdcircle node [right=0.3cm of M] {$Dwindling$};
    \end{scope}
\end{tikzpicture}
\caption{Some of the classes of String Rewriting Systems \protect\footnotemark}
\label{SRSets}
\end{figure}
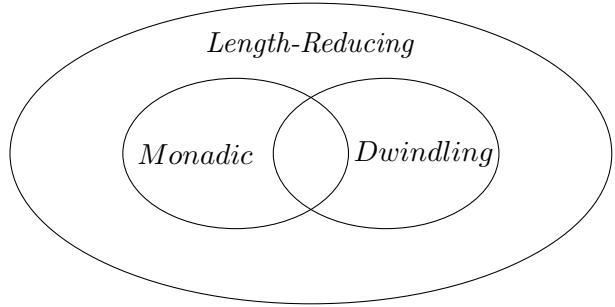

\footnotetext{The trivial forms of monadic rules such as $a \rightarrow b$ can be ignored. We can get rid of such rules by changing every occurrence of $a$ to $b$.}

\section*{Motivation and Problem Statements}
The Fixed Point, Common Term and Common Equation problems can be formulated in terms of properties of substitutions
\emph{modulo} a term rewriting system. We plan to assist the reader by providing these formulations and examples to clarify the problems for readers with no particular expertise in String Rewriting Systems.

\subsection*{Fixed Point Problem (FP)}

\begin{description}[align=left]
\item[{\bf Input}] A substitution $\theta$ and an equational theory~$E$.
\item[{\bf Question}] Does there exist a non-ground term~$t \in T(Sig(E), \, {\dom}(\theta))$ 
such that~$\theta (t) \, \approx_E^{} t$?
\end{description}

\vspace{0.05in}
\noindent
\underline{Example 1}: Suppose~$E$ is a theory of integers which contains linear
arithmetic. As an example, a similar equational theory can be found in the Example 3 below. \\ Let $\theta =\{x \mapsto x-2, \, y \mapsto y+1\}$ and we would like to find a term~$t$ such that 
$\theta (t) \approx_E^{} t$. Note that $x + 2y$ is such a term, since \[ \theta (x+2y) =
(x - 2) + 2*(y + 1) \approx_E^{} x+2y \] \\

\noindent
\underline{Example 2}: What is the fixed point/invariant of the given loop?
\begin{algorithm}
\caption{Fixed Point Loop}
	\begin{algorithmic}[1]
	    \State{} $\{x = X_0, y = Y_0\}$
	    \While{} $x > 0$
	    	\State{} $x = x -1$ \Comment \hspace{3pt} $\theta= \{x \mapsto x-1,\; y \mapsto y+x-1\}$
		\State{} $y = y + x $
	    \EndWhile
\end{algorithmic}
\end{algorithm}

Note that the value of the expression $y + \dfrac{x*(x-1)}{2}$ is unchanged, since

\begin{itemize}
\item Before Iteration: $y + \dfrac{x*(x-1)}{2}$
\item After Iteration: $$y + x - 1 + \dfrac{(x-1)*(x-2)}{2} = y + \dfrac{x^2 - 3x + 2 + 2(x-1)}{2} =  y + \dfrac{x*(x-1)}{2}$$
\end{itemize}

Thus $y + \dfrac{x*(x-1)}{2}$ is a fixed point of~$\theta$ as defined in Algorithm 1.

Note that fixed points may not be unique. Consider the term rewriting system \[ \{a(b(y)) \rightarrow a(y),\; c(b(z)) \rightarrow c(z) \} \] and
let $\theta = \{x \mapsto b(x)\}$. We can see that both $a(x)$ and $c(x)$ are fixed points of $\theta$.  

We plan to explore two related formulations, both of which can be 
viewed as \emph{dual} to the well-known unification problem. Unification
deals with solving symbolic equations: thus a typical input would be either two
terms, say $s$ and $t$, or an equation~$s \approx_{}^? t$. The task is to
find a substitution such that $\theta (s) \approx \theta(t)$.
For example, given two terms $s_1^{} = f(a, y)$ and $s_2^{} = f(x, b)$, where $f$~is
a binary function symbol, $a$ and $b$ are constants, and $x$ and $y$
are variables, the substitution $\sigma = \{ x \mapsto a, ~ y \mapsto
b\}$ unifies $s_1^{}$ and~$s_2^{}$, or equivalently, $\sigma$ is a unifier
for the equation~$s_1^{} =_{}^? s_2^{}$.

There are two ways to ``dualize'' the unification problem:\\

\vspace{1cm}
\noindent
{\large\bf Common Term Problem (CT)}:

\begin{description}[align=left]
\item [Input] Two ground substitutions $\theta_1^{}$ and $\theta_2^{}$, and an
  equational theory~$E$. (i.e., $\vran(\theta_1) = \emptyset$ and $\vran(\theta_2) = \emptyset$ )
\item [Question] Does there exist a non-ground term~$t  \in 
T(Sig(E), \, \dom(\theta_1) \cup \dom(\theta_2))$ such that
\mbox{$\theta_1^{} (t) \, \approx_E^{} \theta_2^{} (t)$}?
\end{description} 

\ignore{
We will be considering equational theories that 
are decomposable into a set of identities~$Ax$ and
a set of rewrite rules such that \emph{equational} rewriting 
modulo~$Ax$ is convergent. The most widely used case of
equational rewriting is where $Ax$ consists of associativity and
commutativity axioms~(\emph{AC}).
The key concepts are defined in the next section.
}

\vspace{0.05in}
\noindent
\underline{Example 3}: Consider the two substitutions 
\[
\theta_{1} = \{ x
\mapsto p(a), \, y \mapsto p(b) \} \textrm{ and } \theta_{2} = \{ x \mapsto a,
\, y \mapsto b \}.
\]

The following term rewriting system $R_1^{{lin}}$ specifies a fragment of linear arithmetic using
\emph{successor} and \emph{predecessor} operators:
\begin{eqnarray*}
 x - 0 & \rightarrow & x \\
 x - x & \rightarrow & 0 \\
 s(x) - y & \rightarrow & s(x-y)\\
 p(x) - y & \rightarrow & p(x-y)\\
 x - p(y) & \rightarrow & s(x-y)\\
 x - s(y) & \rightarrow & p(x-y)\\
 p(s(x)) & \rightarrow & x\\
 s(p(x)) & \rightarrow & x
\end{eqnarray*}

If we take the term rewriting system~$R_1^{lin}$ as our background equational theory~$E$, then the
common term $t =
x-y$ satisfies $\theta_{1}(t) \approx_E^{}
\theta_{2}(t)$. \[ \theta_1 ( x-y ) \approx_E^{} p(a)- p(b) \approx_E^{}
a-b \] and \[\theta_2 ( x-y ) \approx_E^{} a-b \]
\vspace{0.05in}

\noindent
We can easily show that the fixed point problem can be reduced to the CT problem, as seen below in Lemma~\ref{FPtoCTLemma}.\\[-5pt]

\begin{lem}\label{FPtoCTLemma}
The fixed point problem is reducible to the common term problem.
\end{lem}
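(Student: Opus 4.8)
The plan is to give a many--one reduction that ``freezes'' the variables of $\theta$ by fresh constants so that the two substitutions required for a CT instance are ground. Given an FP instance $(\theta, E)$ with $\dom(\theta) = \{x_1,\dots,x_n\}$, put $V = \dom(\theta) \cup \vran(\theta)$ and introduce, for each $v \in V$, a fresh constant $c_v \notin Sig(E)$; write $\sigma = \{\, v \mapsto c_v \mid v \in V \,\}$ for the resulting ground ``freezing'' substitution. The CT instance I would build uses the theory $E$ (over the signature $Sig(E)$ enlarged by the new constants, while still asking for $t$ over $Sig(E)$) together with the two ground substitutions
\[
\theta_1 := \{\, x_i \mapsto \sigma(\theta(x_i)) \mid 1 \le i \le n \,\}, \qquad
\theta_2 := \{\, x_i \mapsto c_{x_i} \mid 1 \le i \le n \,\}.
\]
Both are ground, since $\sigma$ sends every variable occurring in $\ran(\theta)$ to a constant; and $\dom(\theta_1) = \dom(\theta_2) = \dom(\theta)$, so the construction is effective (in fact polynomial in the size of the input).

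The correctness statement I would isolate is: for every $t \in T(Sig(E), \dom(\theta))$ one has $\theta_1(t) \approx_E \theta_2(t)$ iff $\theta(t) \approx_E t$. First I would record the two bookkeeping identities $\theta_1(t) = \sigma(\theta(t))$ and $\theta_2(t) = \sigma(t)$, which hold because $\Var(t) \subseteq \dom(\theta) \subseteq V$, so $\sigma$ only acts on the subtrees substituted for the $x_i$ and on the variables of $t$. The implication from $\theta(t) \approx_E t$ to $\theta_1(t) \approx_E \theta_2(t)$ is then immediate, as $\approx_E$ is stable under instantiation and hence $\sigma(\theta(t)) \approx_E \sigma(t)$. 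For the converse I would use that the $c_v$ do not occur in $E$: uniformly replacing each $c_v$ by a fresh variable $v$ preserves $\approx_E$, and applying this operation to $\sigma(\theta(t)) \approx_E \sigma(t)$ recovers exactly $\theta(t) \approx_E t$ (the $c_v$'s in $\sigma(\theta(t))$ are precisely the frozen variables of $\theta(t)$, and those of $\sigma(t)$ are the frozen variables of $t$). Non-groundness transfers for free, because both problems range over the same set of terms $t$ over $\dom(\theta)$, so the two instances have literally the same solution set. This establishes that the FP instance is a ``yes'' instance iff the constructed CT instance is, completing the reduction.

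The main obstacle I anticipate is the converse direction, and in particular making precise the ``constants behave like variables'' principle being invoked: if $s, s'$ are terms over $Sig(E) \cup \{\, c_v \mid v \in V \,\}$ with $s \approx_E s'$, then $s\rho \approx_E s'\rho$ where $\rho$ replaces each $c_v$ by a fresh variable $v$. This is a standard fact of equational logic over a signature disjoint from the new constants (by induction on an equational derivation, or model-theoretically by interpreting the $c_v$ freely), but it deserves careful statement since it is the only place freshness of the $c_v$ is used; everything else is routine substitution bookkeeping. I would also note that if the ``ground substitution'' hypothesis in the definition of CT is in fact inessential---as the footnote there suggests---then the freezing step is unnecessary and one may simply take $\theta_1 = \theta$ and $\theta_2$ the identity on $\dom(\theta)$, for which $\theta_2(t) = t$ and the equivalence is trivial.
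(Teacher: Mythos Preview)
Your proposal is correct and follows essentially the same approach as the paper: the paper also gives the identity-substitution shortcut (take $\theta_2$ to be the identity) and then the ``freeze variables by fresh constants'' construction, with $\theta_1$ the freezing map and $\theta_2 = \theta_1 \circ \theta$ (your $\theta_1,\theta_2$ are swapped, which is immaterial), invoking the same constants-as-variables principle for the converse. If anything, your treatment is slightly more careful in explicitly freezing $\dom(\theta)\cup\vran(\theta)$ rather than tacitly assuming $\vran(\theta)\subseteq\dom(\theta)$.
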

\begin{proof}
Let $\theta_2$ be
the identity substitution. 
Assume that the fixed point problem has a solution, i.e., there exists
a term $t$ such that $\theta (t) \, \approx_E^{} t$. Then the CT problem
for $\theta$ and $\theta_2$ has a solution since
$\theta_2(t) \approx_E^{} t$ (because $\theta_2 (s) = s$ for all~$s$). The ``only if'' part is trivial, again
because $\theta_2 (s) = s$ for all~$s$.

Alternatively, suppose
that $\dom(\theta)$ consists of $n$~variables, where $n \geq 1$. If we
map all the variables in $\vran(\theta)$ to new constants, this will
create a ground substitution \[\theta_1 = \{x_1
\mapsto a_1,\; x_2 \mapsto a_2,\; ..., \;x_n \mapsto
a_n\}.\] $\theta_1$ will be the one of the substitutions for the CT
problem. The other substitution, $\theta_2$, is the composition of the
substitutions $\theta$ and $\theta_1$. The substitution $\theta_1$
will replace all of the variables in~$\vran(\theta)$ with the new
constants, thus making $\theta_2$ a ground substitution. Now
if $\theta (t) \, \approx_E^{} \, t$, then
$\theta_2 (t) = \theta_1 ( \theta (t) ) \approx_E^{} \theta_1( t )$; in other words,
$t$ is a solution to the common term problem.

The ``only if'' part can also be explained in terms of the composition
above. Suppose that $\theta_1(s)$ and $\theta_2(s)$ are equivalent,
i.e., $\theta_1(s) \, \approx_E^{} \, \theta_2(s)$ for some~$s$. Since
$\theta_2 = \theta_1 \circ \theta$, the equation can be rewritten as
$\theta_1(\theta(s)) \approx_E^{} \theta_1(s)$.
Since $a_1, \, \ldots , \, a_n$ are new constants and are not included in the signature of
the theory, for all $t_1$ and $t_2$, $\theta_1(t_1) \, \approx_E^{}
\theta_1(t_2)$ holds if and only if $t_1\approx_E^{} t_2$
(See~\cite{Term}, Section~4.1, page~60)
Thus $\theta_1(\theta(s)) \approx_E^{} \theta_1(s)$ implies that $\theta(s) \approx_E^{} s$, making
$s$~a fixed point. 
\end{proof}

\noindent
{\large\bf Common Equation Problem (CE)}:

\begin{description}[align=left]
\item[Input] Two substitutions $\theta_1^{}$ and $\theta_2^{}$ with the 
 \emph{same domain}, and an
  equational theory~$E$.
\item[Question] Does there exist a non-ground, non-trivial $(t_1
  \not\approx_E^{} t_2)$ equation~$t_1^{} \approx_{}^? t_2^{}$, where
  $t_1^{}, t_2^{} \, \in \, T(Sig(E), \, \dom(\theta_1^{}))$ such that
  both~$\theta_1^{}$ and~$\theta_2^{}$ are \emph{E}-unifiers
  of~$t_1^{} \approx_{}^? t_2^{}$?\\
\end{description} 
By trivial equations, we mean equations which are identities
in the equational theory~$E$, i.e., an equation $s \approx_{E}^? t$ is
trivial if and only if $s \approx_{E}^{} t$.
We exclude this type of trivial equations in the formulation of
this question.



\vspace{0.05in}
\noindent
\underline{Example 4}: Let $\, E ~ = ~ \{ p(s(x)) \approx x, \; s(p(x)) \approx x \}$.
Given two substitutions \[\theta_{1} = \{ x_1
\mapsto s(s(a)), \, x_2 \mapsto s(a) \}\textrm{ and }\theta_{2} = \{ x_1
\mapsto s(a), \, x_2 \mapsto a \},\] we can see that $\theta_{1}(t_1)
\approx_E^{} \theta_{1}(t_2)$ and $\theta_{2}(t_1) \approx_E^{}
\theta_{2}(t_2)$, with the equation $p(x_1) \approx_E^{} x_2$. However,
there is no term~$t$ on which the substitutions agree, i.e., there
aren't any solutions for the common term problem in this example.
Thus, CT and CE problems are not equivalent as we observe in the example above.

In the subsequent sections, this paper delves into an in-depth examination of these three problems within the realm of string rewriting systems: the Fixed Point problem, the Common Term problem, and the Common Equation problem.

\section{Fixed Point Problem}\label{S:three}
For a string rewriting system $R$, the fixed point problem can be stated as follows. 

\begin{description}[align=left]
\item[Input] A string-rewriting system~$R$ on an alphabet~$\Sigma$, and 
a string~$\alpha \in \Sigma_{}^+$.
\item[Question] Does there exist a string~$W$ such that 
$\alpha W \; \stackrel{*}{{\longleftrightarrow}_R} \; W$?
\end{description}

Since the fixed point problem is a particular case of the \emph{common term problem}, it is decidable in polynomial time for finite, monadic and convergent string rewriting systems. The fixed point problem is also a subcase of the \emph{conjugacy problem}. The conjugacy problem seeks to determine whether two given words, w and $w'$, in the group G are conjugate. In other words, the question is whether there exists a word $z$ in G such that the conjugation of 
$w$ by $z$, i.e., $zwz'$, is equal to $w'$. The conjugacy problem is both decidable in \textbf{NP} and \textbf{NP}-hard for finite, length-reducing and convergent systems. ~\cite{NOW}  

\subsection{Fixed Point Problem for Finite and Convergent Systems:}

Theorem~\ref{FPConvergent} shows that the fixed point problem is undecidable for finite and convergent string rewriting systems.

\begin{thm}
\label{lba_emptiness}
The following problem is undecidable: \begin{quote}
\begin{itemize}
\item[Input:] A non-looping, deterministic linear bounded automaton~$\mathcal{M}$ that
restores its input for \emph{accepted} strings.
\item[Question:] Is $\mathcal{L(M)}$ empty?
\end{itemize}
\end{quote}
\end{thm}

\begin{proof}

Suitably modifying the construction given in~\cite{caron1991linear} in a straightforward way, we can prove the undecidability of this problem with a reduction from the well-known
Post Correspondence Problem~(PCP). Recall that an instance of PCP is a
collection of pairs of strings over an alphabet~$\Sigma$ (``dominos'' in~\cite{Sipser}) and
the question is if there exists a sequence of
indices~$i_1, i_2, \ldots i_n$ such that
$x_{i_1}  x_{i_2} \ldots x_{i_n} = y_{i_1} y_{i_2} \ldots y_{i_n}$.
Alternatively, we can define the PCP in terms
of two homomorphisms
$\psi$ and $\phi$ from $C^*=\{c_1,\;\ldots\;c_n\}^*$ to $\Sigma^* =
\{a,b\}^*$. The question now is whether there exists a non-empty
string~$m_C^{}$ such that
$\psi(m_C^{}) = \phi(m_C^{})$.

The reduction proof is done by validating the solution string ~$w
\in \Sigma^*$, for the two homomorphisms using the Deterministic Linear Bounded Automata (DLBA) $\mathcal{M}$~\cite{kuroda1964classes}
with the following configuration, $\mathcal{M} = (Q, \Sigma, \Gamma,
\delta, q_0, q_a, q_r)$. Assume $w=w_1 \; w_2$ where $w_1 \in C^*$ and
$w_2 \in \Sigma^*$. $\mathcal{M}$ will check the correctness of $w$ by
going over one symbol at a time in the string $w_1$ and its
corresponding mapping in~$w_2$. $\mathcal{M}$ uses a marking
technique, marking the symbols~$c_i$ and their correct mappings with
overlined symbols. If every letter in the string is overlined, then
clearly the string is ``okay'' according to the first homomorphism;
$\mathcal{M}$ will then replace the overlined letters with the same
non-overlined letters as before and repeat the same steps for the
second homomorphism. Thus the DLBA will accept and re-create~$w_1 \;
w_2$ if and only if it is a solution string. $\mathcal{L(M)}$ is empty if and only if the instance of PCP has no solution.
\end{proof}

We construct a string
rewriting system~$R$ from the above-mentioned DLBA~$\mathcal{M}$.  The construction of
$R$ is similar to the one in~\cite{bauer1984finite}. Let 
$\mathcal{M} = (Q, \Sigma, \Gamma, \delta, q_0, q_a, q_r)$. Here
$\Sigma$ denotes the input alphabet, $\Gamma$ is the tape alphabet, $Q$ is the set of states, $q_0 \in
Q$ is the initial state, $q_a \in Q$ the accepting state and
$q_r \in Q$ the rejecting state. We assume
that the tape has two end-markers: $\cent \in \Gamma$ denotes
the left end-marker and $\$ \in \Gamma$ is the right 
end-marker. We also assume that on acceptance
the DLBA comes to a halt at the left-end of the tape. Finally, 
$\delta:  Q \times \Gamma \rightarrow Q \times \Gamma \times \{L, R\}$
is the transition function of~$\mathcal{M}$. 

The alphabet of $R$ is $\Gamma = \Sigma \cup \Sigma^{\prime} \cup Q$
where $\Sigma^{\prime}$ is a replica of the alphabet
$\Sigma$ such that $\Sigma^{\prime} \cap \Sigma = \emptyset$.
$R$ has the rules
\begin{equation*}
\begin{aligned}
    q_i \, a_k \, &\rightarrow  a_k^{\prime}\, q_j 			&& \text{if }(q_i, a_k, q_j, a_k^{\prime}, R) \in \delta\\
    a_l^{\prime} \, q_i \, a_k \, &\rightarrow q_j \, a_l \, a_k	&& \text{for all } a_l^{\prime} \in \Sigma^{\prime} \text{, if }(q_i, a_k, q_j, a_l,  L) \in \delta\\
    q_a \, \cent \, &\rightarrow \cent 					&& \text{ }
\end{aligned}
\end{equation*}

Since the linear bounded automaton is deterministic, $R$ is locally
confluent. Besides, $\mathcal{M}$ ultimately always halts, and that means
there will be no infinite chain of rewrites for~$R$, and thus $R$ is
terminating.

\begin{lem}\label{initialAccepting}
$\mathcal{M}$ accepts $w$ iff
$q_0\, \cent \, w \, \$ \rightarrow_R^{+} q_a\, \cent \, w\, \$ $.
\end{lem}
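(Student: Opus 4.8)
The statement to prove is Lemma~\ref{initialAccepting}: $\mathcal{M}$ accepts $w$ iff $q_0\,\cent\,w\,\$ \rightarrow_R^{+} q_a\,\cent\,w\,\$$. The natural approach is to show that the rewrite system $R$ faithfully simulates the computation of $\mathcal{M}$ step by step. I would set up a correspondence between configurations of the DLBA and strings over $\Gamma = \Sigma \cup \Sigma' \cup Q$: a configuration in which the tape contents are $u_1 a_k u_2$ (including end-markers), the head scans $a_k$, and the control is in state $q_i$, corresponds to the string $u_1' \, q_i \, a_k \, u_2$, where $u_1'$ is $u_1$ with every symbol replaced by its primed copy. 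The primed alphabet $\Sigma'$ is exactly the mechanism recording ``the head has already passed over this cell,'' and the first two rule schemes are precisely the encodings of a right-move and a left-move of $\delta$.

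First I would prove the forward direction: if $\mathcal{M}$ accepts $w$, then since by assumption $\mathcal{M}$ is non-looping, deterministic, halts at the left end of the tape in state $q_a$, and restores its input on acceptance, the accepting computation is a finite sequence of moves $C_0 \vdash C_1 \vdash \cdots \vdash C_m$ with $C_0$ the initial configuration on input $w$ and $C_m$ the halting configuration with tape $\cent\,w\,\$$ and head at the left end scanning $\cent$ in state $q_a$. By induction on the length of the computation, using the claim that each move $C_t \vdash C_{t+1}$ is mirrored by exactly one rewrite step (a right-move by a $q_i a_k \to a_k' q_j$ rule, a left-move by an $a_l' q_i a_k \to q_j a_l a_k$ rule applied in context), one gets $q_0\,\cent\,w\,\$ \to_R^{*}$ the string encoding $C_m$, which is $q_a\,\cent\,w\,\$$ itself (no primed symbols remain because the head is at the far left). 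Since $\mathcal{M}$ performs at least one move — the acceptance rule $q_a\,\cent \to \cent$ is not used here, and a DLBA that accepts must at minimum make a move — this is a derivation of length $\ge 1$, giving $\rightarrow_R^{+}$. One subtlety to handle cleanly: left-moves require a primed symbol immediately to the left of $q_i$, i.e.\ the head is never at the extreme left when a left-move is attempted; this is guaranteed because the left end-marker $\cent$ is never overwritten and the head cannot move left past it, so the encoding is always well-formed.

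For the converse, I would argue that every string reachable from $q_0\,\cent\,w\,\$$ by $R$ is again a valid configuration encoding, and that $R$ is deterministic on such strings (at most one redex, inherited from determinism of $\delta$), so the unique maximal $R$-derivation from $q_0\,\cent\,w\,\$$ traces out exactly the computation of $\mathcal{M}$ on $w$. Hence if $q_0\,\cent\,w\,\$ \to_R^{+} q_a\,\cent\,w\,\$$, the target is a fixpoint-free string only if it is the encoding of the halting accepting configuration, so $\mathcal{M}$ accepts $w$; one also checks that the string $q_a\,\cent\,w\,\$$ genuinely arises (the $q_a\,\cent \to \cent$ rule would further reduce it, but the point is only that $q_a$ has appeared at the left end, which by the configuration-invariant means $\mathcal{M}$ reached its accepting halting state). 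I would also note length-preservation: the first rule scheme preserves length, the second preserves length, so all strings in the derivation have the same length $|w|+3$, which keeps the bookkeeping finite and confirms we stay within the ``linear bounded'' régime.

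\textbf{Main obstacle.} The routine part is the step-by-step simulation; the part demanding care is the bijective nature of the correspondence in the \emph{backward} direction — verifying that $R$-reductions starting from a well-formed configuration string can \emph{only} produce well-formed configuration strings (no ``junk'' strings with, say, two state symbols or a primed symbol to the right of the state), and that no $R$-rule is ever applicable except the one simulating the genuine next move of $\mathcal{M}$. This hinges on the precise shape of the rule left-hand sides ($q_i a_k$ and $a_l' q_i a_k$) and the fact that every reachable string contains exactly one occurrence of a state symbol with the primed region strictly to its left; proving this invariant by induction is the technical heart of the lemma.
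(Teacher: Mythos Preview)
Your proposal is correct and follows the same step-by-step simulation argument as the paper: the paper dispatches the lemma in a single line, asserting by inspection of the rules that $u_1\,q_1\,v_1 \vdash_{\mathcal M} u_2\,q_2\,v_2$ iff $u_1'\,q_1\,v_1 \rightarrow_R u_2'\,q_2\,v_2$. The well-formedness invariant you flag as the main obstacle for the backward direction is exactly the content the paper leaves implicit under ``by inspection.''
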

\begin{proof}
By inspection of the rules we can see that $\mathcal{M}$ makes the transition
$u_1\,q_1\,v_1 ~ \vdash_M^{} ~ {u_2}\, q_2\, v_2$ if and only if
$u_1^{\prime}\,q_1\,v_1 \rightarrow_R^{} u_2^{\prime}\, q_2\, v_2$.
\end{proof}

\begin{lem}\label{initialFixedPoint}
$\cent \,w\, \$$ is a fixed point for $q_0$ in $R$ iff
$\mathcal{M}$ accepts~$w$.
\end{lem}
\begin{proof}
For the ``if'' part, suppose $\mathcal{M}$ accepts~$w$.
Observe that $q_0\, \cent \, w \, \$ \rightarrow_R^{+} q_a\, \cent \,
w\, \$ $ by Lemma~\ref{initialAccepting}, and $R$ has the rule
$q_a \, \cent \, \rightarrow \cent$. Thus, we get
$q_0\, \cent \, w \, \$ \rightarrow_R^{+} q_a\, \cent \, w\, \$
\rightarrow \cent \, w \,\$ $.

For the ``only if'' part, suppose $\cent \,w\, \$$ is a fixed point
for $q_0$, i.e., $q_0\, \cent \, w \, \$ \rightarrow_R^{+} \cent \, w
\,\$ $. Now note that the \emph{only} rule that can remove a
state-symbol from a string is the rule $q_a \, \cent \, \rightarrow \,
\cent$. But once that rule is applied, no other rules are
applicable. Therefore, there must be a reduction sequence such that $q_0\,
\cent \, w \, \$ \rightarrow_R^{+} q_a\, \cent \, w\, \$$. This
proves that $w$~is accepted by the~DLBA $\mathcal{M}$. 
\end{proof}

\begin{thm}\label{FPConvergent}
The fixed point problem is undecidable for finite and convergent
string rewriting systems.
\end{thm}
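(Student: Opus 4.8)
The plan is to assemble the undecidability result directly from the ingredients already developed in this section. The first theorem established that it is undecidable whether $\mathcal{L}(\mathcal{M}) = \emptyset$ for a non-looping, deterministic LBA $\mathcal{M}$ that restores its input on accepted strings; the subsequent construction attaches to each such $\mathcal{M}$ a finite string-rewriting system $R$ over the alphabet $\Gamma = \Sigma \cup \Sigma' \cup Q$, and the two lemmas show that $\cent\, w\, \$$ is a fixed point for the single-symbol string $q_0$ in $R$ (i.e.\ $q_0\, \cent\, w\, \$ \fflrt_R^* \cent\, w\, \$$, or equivalently $q_0\, \cent\, w\, \$ \rightarrow_R^+ \cent\, w\, \$$ since $R$ is convergent) if and only if $\mathcal{M}$ accepts $w$. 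So the reduction target is already in hand: deciding the fixed point problem for $R$ with input string $q_0$ would decide nonemptiness of $\mathcal{L}(\mathcal{M})$.

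What remains is to verify that $R$ really lands in the class named in the statement: finite (immediate — $\delta$ is finite and $\Sigma'$ is finite, so only finitely many rules) and \emph{convergent}. Termination was already argued from the fact that $\mathcal{M}$ always halts, so every $R$-derivation mirrors a finite run of $\mathcal{M}$ and there is no infinite rewrite chain. For confluence I would invoke Newman's Lemma: $R$ is terminating, so it suffices to check local confluence, and the text already observes that determinism of $\mathcal{M}$ makes $R$ locally confluent. The one place that needs a careful word is overlapping left-hand sides — the rules $q_i\, a_k \flr a_k'\, q_j$, $a_l'\, q_i\, a_k \flr q_j\, a_l\, a_k$, and $q_a\, \cent \flr \cent$ — where one checks that any critical pair closes; because at most one state symbol occurs and $\delta$ is a function, the only genuine overlaps are benign (e.g.\ an $a_l'\, q_i\, a_k$ redex contains the $q_i\, a_k$ redex, but only one of $(q_i,a_k,q_j,L)$ or $(q_i,a_k,q_j,R)$ is in $\delta$, so only one rule ever applies), hence $R$ is confluent and therefore convergent.

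Then the reduction is stated cleanly: given an LBA instance $\mathcal{M}$, produce $R$ and the string $q_0$; by the preceding lemmas, $q_0$ has a fixed point of the form $\cent\, w\, \$$ exactly when $\mathcal{M}$ accepts $w$, and one checks that these are in fact the \emph{only} candidate fixed points — any fixed point string $W$ with $q_0 W \rightarrow_R^+ W$ must eliminate the state symbol, which only the rule $q_a\, \cent \flr \cent$ can do, forcing $W$ to have the shape $\cent\, w\, \$$ with $w$ accepted. Hence $q_0$ admits a fixed point in $R$ iff $\mathcal{L}(\mathcal{M}) \neq \emptyset$, and since the latter is undecidable, so is the fixed point problem for finite convergent string-rewriting systems.

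The main obstacle is the confluence verification: termination and finiteness are essentially free, but one must be genuinely careful that the overlaps among the three rule schemas produce only joinable critical pairs, and in particular that the replica alphabet $\Sigma'$ together with the "restores its input" property of $\mathcal{M}$ guarantees no stray non-confluent situation at the tape endmarkers. Everything else is bookkeeping on top of the two lemmas already proved.
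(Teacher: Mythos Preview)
Your proposal is correct and takes essentially the same approach as the paper: Theorem~\ref{FPConvergent} is stated there without a separate proof, being presented as the immediate consequence of the undecidability of emptiness for input-restoring non-looping DLBAs, the convergence of the constructed system~$R$, and the two lemmas linking acceptance by~$\mathcal{M}$ to fixed points of~$q_0$. Your explicit assembly---finiteness of~$R$, local confluence plus termination via Newman's Lemma, and the observation that any fixed point must invoke the state-erasing rule $q_a\,\cent \to \cent$---is exactly the argument the paper leaves implicit.
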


\subsection{Fixed Point Problem for Dwindling Convergent Systems:}

Given a string $\alpha \in \Sigma^*$ and a dwindling, convergent, and
finite string rewriting system~$R$, the fixed point problem is equivalent to determining whether there exists $W \in \Sigma^*$ such that $\alpha W \rawrnorm W$. 

We shall define a shorthand notation for working with dwindling systems. Observe that applying a dwindling rule $pq \raw p$ to a
string $X\ssp{1}pqX\ssp{2}$ has the effect of "deleting" the substring $q$, and we are left with $X\ssp{1} p X\ssp{2}$. We can say that for any $X$, if $X \rawrrtc X'$ then there exists an index sequence $j_1,\dots,j_n$ such that $X' = X\strind{j_1,\dots,j_n}$, where each $j_i$ is the original position of its symbol. There may be several index sequences that satisfy $X' = X\strind{j_1,\dots,j_n}$, but for any given sequence of reductions applied to $X$ to obtain $X\strind{j_1,\dots,j_n}$, each reduction is applied at a specific position and consequently yields a particular index sequence $j_1,\dots,j_n$, derived as a subseqeunce of the previous. To denote this, we write $X \drawrrtc X\strind{j_1,\dots,j_n}$ and extend this shorthand in the obvious manner e.g. $AX \drawrrtc A\strind{j_1,\dots,j_n}X\strind{k_1,\dots,k_m}$, with the understanding that $\drawr$ is actually a relation on tuples e.g.$(X,(1,2,\dots,|X|))$,  $(X,(j_1,\dots,j_n))$, $(AX, (j_1,\dots,j_n,k_1+|A|,\dots,k_m+|A|))$.

\begin{lem}\label{dwindling_fp_l1} 
  Let $A\in\Sigma^+$ irreducible, $X\in\Sigma^+$, and $R$ dwindling,
  convergent and finite. If $AX \rightarrow_R^* Y$, then $Y = A\strind{1:m}X\strind{j_1,\dots,j_n}$ for some $m,j_1,\dots,j_n$ such that 
  \[AX \drawrrtc A\strind{1:m} X\strind{j_1,\dots,j_n}\text{.}\] Additionally, if $AX \drawrtc A\strind{1:m}X\strind{j_1,\dots,j_n}$ then $n < |X|$. 
\end{lem}

\begin{proof}
  This is easily shown by an inductive argument. Since $A$ is irreducible, no rule $pq\raw p \in R$ can match a substring of only $A$, so any sequence of reductions on $AX$ will yield a string in the form $A\strind{1:m}X\strind{j_1,\dots,j_n}$
\end{proof}

\begin{thm} \label{dwindling_fp_theorem1} If some instance $R,\alpha$ of the fixed point problem has a solution and $R$ is
  dwindling, convergent and finite then a minimal-length solution can be found in polynomial time.
\end{thm}

\begin{proof}
  The size of a minimal-length solution is bounded by $2\delta|\alpha|$, as the fixed point problem is also a special case of the conjugacy problem.\cite{narendran1985complexity} Since $R$ is convergent, it may be assumed that $\alpha$ is irreducible without loss of generality. Suppose $W$ is a minimal-length solution.

  By lemma \ref{dwindling_fp_l1}, $\alpha X \drawrnorm A\strind{1:m} W\strind{j_1,\dots,j_n}$. Ignoring the trivial case where $\alpha$ is the empty string, we have $n <|W|$.

 Since $n < |W|$ and $\alpha\strind{1:m} W\strind{j_1,\dots,j_n} = W$, it must be that $W\strind{1} = \alpha\strind{1}$. Suppose now that the first $i < |W|$ characters of $W$ are known and let $\alpha\ssp{i} := \normform{\alpha\strind{1:m\ssp{i}} W\strind{i+1:|W|}}$. Since we assume $W$ is a minimal solution, $\alpha\ssp{i}W\strind{i+1:|W|}$ must be reducible and so $\alpha\ssp{i}W\strind{i+1:|W|} \drawrnorm \alpha\ssp{i}\strind{1:m\ssp{i+1}} W\strind{j\ssp{i+1}_1,\dots,j\ssp{i+1}_{n\ssp{i+1}}}$. Since $n\ssp{i+1} < |W\strind{i+1:|W|}|$, we must have that $|\alpha\ssp{i}\strind{1:m\ssp{i+1}}| \geq i+1 $ and so we have found $W\strind{i+1}$.
\end{proof}

The algorithm above is efficient and we can readily obtain an asymptotic bound:
\begin{thm} Any instance $R,\alpha$ of a fixed point problem where $R$ is dwindling, convergent and finite can be solved in
  $O(\delta^2 |\alpha|)$ time.
\end{thm}
\begin{proof}

  Letting $\alpha\ssp{0} := \alpha$ and $\alpha\ssp{k+1} := (\alpha\ssp{k}\alpha\ssp{k}_{[k+1]})_\downarrow$, either
  $\alpha\ssp{0}_{[1]}\alpha\ssp{1}_{[2]}\dots \alpha\ssp{k-1}_{[k]}$ is a solution for $1 \leq k \leq \delta |\alpha|$ or there is
  no solution. Since $\alpha\ssp{k+1}$ is by definition irreducible, the string $\alpha\ssp{k}\alpha\ssp{k}_{[k+1]}$ is only
  reducible if the left-hand side of a rule in $R$ can be matched to a substring including the last symbol. By the nature of
  dwindling rules, reducing $\alpha\ssp{k}\alpha\ssp{k}_{[k+1]}$ to its normal form will require at most a single rewrite. The rules
  of $R$ can be arranged into a tree structure so that any matching rules can be found and applied to
  $\alpha\ssp{k}\alpha\ssp{k}_{[k+1]}$ in $O(\delta)$ time.
  \end{proof}

\begin{cor} 
If the rhs of each rule in $R$ is also of length $\leq 1$ (the cases
where $R$ is a dwindling convergent system that is also special or
monadic) and solutions exist, then there must exist a minimal-length solution that's a prefix of $\alpha$.
\end{cor} 
\begin{proof} 
Assume solutions exist and that $W$ is a solution found using the procedure in Theorem \ref{dwindling_fp_theorem1}. Let $W_{[1:k]}$ be the
smallest prefix of $W$ such that $\alpha W_{[1:k]}$ is reducible. Then $\alpha\ssp{k} := (\alpha W_{[1:k]})_\downarrow$ is a
prefix of $\alpha$, since the reduction must match a substring including $\alpha_{[|\alpha|]} W_{[1:k]}$, remove a postfix
including at least $W_{[1:k]}$ and therefore leave only a prefix $\alpha_{[1:i]}$. Inductively, if $\alpha W \drawrrtc
\alpha_{[1:i]}W_{[k+1:|W|]}$ and $k \neq W$ then there must be some minimal $\tilde{k}>k$ such that $\alpha_{[1:i]}
W_{[k+1:\tilde{k}]}$ is reducible, and again some $\tilde{i} \leq i$ such that $\alpha_{[1:i]} W_{[k+1:\tilde{k}]} \rightarrow
\alpha_{[1:\tilde{i}]}$.
\end{proof}


\section{Common Term Problem}

Note that for string rewriting systems the common term problem is
equivalent to the following problem: \medskip

\begin{description}[align=left]
  \item[Input] A string-rewriting system~$R$ on an alphabet~$\Sigma$,
and two strings~$\alpha, \beta \in \Sigma_{}^*$.
  \item[Question] Does there exist a string~$W$ such that $\alpha W \;
\stackrel{*}{{\longleftrightarrow}_R} \; \beta W$?
\end{description} \medskip

This is also known as \textit{Common Right Multiplier Problem} which has
been shown to be decidable in polynomial time for monadic and
convergent string-rewriting systems (see, e.g.,~\cite{OND98},
Lemma~3.7).  However, the CT problem is undecidable for convergent and
length-reducing string rewriting systems in
general~\cite{Otto1986}.\footnote{ In fact, Otto et al.~\cite{OND98}
showed that there is a \emph{fixed} convergent length-reducing string
rewriting system for which the CT problem is undecidable.}

In this section, we focus on the decidability of the CT problem for
\emph{convergent and dwindling} string rewriting systems. 

\subsection{Common Term Problem for Convergent and Dwindling Systems}\label{ctdwindling} We show that
the CT (Common Term) problem is undecidable for string rewriting
systems that are dwindling and convergent. We define CT as the
following decision problem: \medskip

\begin{description}[align=left]
  \item[Given] A finite, non-empty alphabet $\Sigma$, strings $\alpha,
\beta \in \Sigma^*$ and a dwindling, convergent string rewriting
system $S$.
  \item[Question] Does there exist a string $W \in \Sigma^*$ such that
$\alpha W \approx_S^{} \beta W$?
\end{description} \medskip

Note that interpreting concatenation the other way, i.e., $ab$ as
$a(b(x))$, will make this a \emph{unification} problem.

We show that the Generalized Post Correspondence Problem (GPCP) reduces
to the CT problem, where GPCP stands for a variant of the modified
post correspondence problem such that we will provide the start and
finish dominoes in the problem instance. This slight change does not
effect the decidability of the problem in any way, i.e., GPCP is
also undecidable~\cite{EKR82,GPCP}. \medskip

\begin{description}[align=left]
  \item[Given]~A finite set of tuples $\left\{(x_i,\; y_i)\right
\}^{n+1}_{i=0}$ such that each $x_i, y_i \in \Sigma^+$, i.e., for all
$i$, $|x_i|>0$, $|y_i|>0$, and $(x_0, y_0), (x_{n+1}, y_{n+1})$ are
the \emph{start} and \emph{end} dominoes, respectively.
  \item[Question]~Does there exist a sequence of indices $i_1, \ldots
,i_k$ such that \[ x_{0}\;x_{i_1}\; \ldots \;x_{i_k}\;x_{n+1} =
y_{0}\;y_{i_1}\; \ldots \; y_{i_k}\;y_{n+1} ? \]
\end{description} \medskip

\label{ExampleCT} To understand the problem described above, consider the following example. Let $C = c_0, c_1, c_2, c_3$ be a set of dominoes, where each domino is represented as a tuple of two strings. Specifically, $c_0$ is $(abb, a)$, $c_1$ is $(a, b)$, $c_2$ is $(b, a)$ and $c_3$ is $(b, abb)$. We designate $c_0$ as the start domino, and $c_3$ as the end domino. In the start domino, $c_0$, $x_0$ is equal to $abb$ and $y_0$ is equal to $a$. The problem at hand is to find a sequence of indices that generates equal strings for both elements of the tuples. Formally we seek a sequence of dominoes $c_{i_0}, c_{i_1}, \ldots, c_{i_k}$ such that resulting strings satisfy: $ x_{0}\;x_{i_1}\; \ldots \;x_{i_k}\;x_{n+1} =
y_{0}\;y_{i_1}\; \ldots \; y_{i_k}\;y_{n+1} $. In our example, a closer inspection of the dominoes reveals that the sequence of $c_0\;c_1\;c_1\;c_2\;c_3$ provides us a combination that yields equal strings: $abbaabb$ for both combinations.

We work towards showing that the CT problem defined above is
undecidable by a many-one reduction from GPCP. First, we show how to
construct a string-rewriting system that is dwindling and convergent
from a given instance of GPCP.

Let $\left\{(x_i,\; y_i)\right \}^{n}_{i=1}$ be the set of
``intermediate" dominoes and $(x_0, y_0), (x_{n+1}, y_{n+1})$, the
start and end dominoes respectively, be given. Suppose $\Sigma$ is the
alphabet given in the instance of GPCP. Without loss of generality,
we may assume $\Sigma = \{a,\, b\}$. Then the set $\hat{\Sigma} := \{a,b\}
\cup \{c_0,\;\ldots\;c_{n+1}\} \cup \{\cent_1, \cent_2, B, a_1, a_2,
a_3, b_1, b_2, b_3\} $ will be our alphabet for the instance of
CT.

Next we define a set of string homomorphisms used to simplify the
discussion of the reduction. Namely, we have the following:
\begin{equation*}
\begin{aligned}[c] h_1(a) & = & a_1 \, a_2 \, a_3,\\ h_1(b) & = &
b_1\, b_2 \, b_3,
\end{aligned} \qquad
\begin{aligned}[c] h_2(a) & = & a_1 \, a_2, \\ h_2(b) & = & b_1 \,
b_2,
\end{aligned} \qquad
\begin{aligned}[c] h_3(a) & = & a_1 \\ h_3(b) & = & b_1
\end{aligned}
\end{equation*} such that each $h_i : \Sigma \rightarrow
\hat{\Sigma}^+$ is a homomorphism.

\textbf{Reduction and form of solution:} For the convenience of the
reader, we state the form of the $CT$ problem instance and its
solutions before explaining the details. In particular, $\alpha =
\cent_1$, $\beta = \cent_2$. For instances such that a solution $Z$, which is constructed over 
the alphabet of the SRS $S$, exists, it shall take the form $h_1(Z_1)Z_2$, where $Z_1 \in \{a,b\}^+$ is
the string of the GPCP solution and $Z_2 = c_{n+1}Bc_{i_k}\dots Bc_{i_1}Bc_{0}$.
The sequence $c_{n+1},c_{i_k}, \dots, c_{i_1},c_{0}$ is the
(reversed) sequence of tuple indices, that is, the GPCP solution itself,
and they are separated by the symbol $B$. Informally, the purpose of the
homomorphisms and $\cent_1$, $\cent_2$ is to ensure that $\alpha Z$ and
$\beta Z$ are processed with two separate sets of rules, corresponding to
the sets of first and second tuple elements, respectively.

We are now in a position to construct the string rewriting system~$S$,
with the following collections of rules, named as the Class~D rules:

\begin{equation*}
\begin{aligned}[c] \cent_1 h_1(a) & \rightarrow & \cent_1 h_3(a),\\
\cent_1 h_1(b) & \rightarrow & \cent_1 h_3(b),
\end{aligned} \qquad
\begin{aligned}[c] \cent_2 h_1(a) & \rightarrow & \cent_2 h_2(a)\\
\cent_2 h_1(b) & \rightarrow & \cent_2 h_2(b)
\end{aligned}
\end{equation*}

\noindent and,
\begin{equation*}
\begin{aligned}[c] h_i(a)\,h_1(a) & \rightarrow & h_i(a)\,h_i(a),\\
h_i(b)\,h_1(a) & \rightarrow & h_i(b)\,h_i(a),
\end{aligned} \qquad
\begin{aligned}[c] h_i(a)\,h_1(b) & \rightarrow & h_i(a)\,h_i(b)\\
h_i(b)\,h_1(b) & \rightarrow & h_i(b)\,h_i(b)
\end{aligned}
\end{equation*}
\noindent for $i \in \{2, 3\}$.

The erasing rules of our system consists of three classes. Class
\rom{1} rules are defined as:
\begin{eqnarray*} \cent_1\,h_3(x_0)\,B\,c_0 & \rightarrow & \lambda \\
\cent_2\,h_2(y_0)\,c_0 & \rightarrow & \lambda
\end{eqnarray*}

and Class \rom{2} rules (for each $i = 1, 2, \ldots, n$),
\begin{eqnarray*} h_3(x_i)\,B\,c_i & \rightarrow & \lambda \\
h_2(y_i)\,c_i\,B & \rightarrow & \lambda
\end{eqnarray*}

and finally Class \rom{3} rules,
\begin{eqnarray*} h_3(x_{n+1})\,c_{n+1} & \rightarrow & \lambda \\
h_2(y_{n+1})\,c_{n+1}\,B & \rightarrow & \lambda
\end{eqnarray*}

Clearly given an instance of GPCP the above set of rules can
effectively be constructed from the instance data. Also, by
inspection, we have that our system is confluent (there are no
overlaps among the left-hand sides of rules), terminating, and
dwindling.

We then set $\alpha = \cent_1$ and $\beta = \cent_2$ to complete the
constructed instance of $CT$ from~GPCP.

It remains to show that this instance of $CT$ is a ``yes" instance if
and only if the given instance of GPCP is a ``yes" instance, i.e.,
the $CT$ has a solution if and only if the GPCP does. In that
direction, we prove some results relating to~$S$.

\begin{lem}\label{FirstStep} Suppose $\cent_1 h_3(w_1) B \gamma
\rightarrow^{!} \lambda$ and $\cent_2 h_2(w_2) \gamma \rightarrow^{!}
\lambda$ for some $w_1, w_2 \, \in \, \{a,b\}_{}^*$, then $\gamma \in
\{c_1B,\;c_2B,\;...\;,c_nB\}_{}^{*} c_0$.
  \end{lem}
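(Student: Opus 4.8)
The plan is to determine $\gamma$ by identifying, for each of the two derivations, exactly which rules of $S$ can ever fire and in which order; the two constraints together will force the stated shape. \emph{First I would restrict the picture.} No rule of $S$ has the plain letter $a$ or $b$ in its left- or right-hand side, so an occurrence of $a$ or $b$ in $\gamma$ would survive every rewrite step; hence $\gamma$ contains no $a,b$, and (in the situation in which the lemma is invoked, $\gamma$ being a tail of the candidate solution string) $\gamma$ is a word over the index alphabet $\{c_0,\ldots,c_{n+1},B\}$. Granting this, $\cent_1\, h_3(w_1)\, B\, \gamma$ contains no occurrence of $a_2,a_3,b_2,b_3$, so \emph{no Class~D rule is ever applicable} during $\cent_1\, h_3(w_1)\, B\, \gamma \to^{!} \lambda$; it contains no $a_2,b_2$ either, so none of the ``$h_2$-side'' erasing rules $\cent_2\, h_2(y_0)\, c_0\to\lambda$, $h_2(y_i)\, c_i\, B\to\lambda$, $h_2(y_{n+1})\, c_{n+1}\, B\to\lambda$ can fire. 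Thus the only rules usable on the $\cent_1$ side are $\cent_1\, h_3(x_0)\, B\, c_0\to\lambda$, the $h_3(x_i)\, B\, c_i\to\lambda$ for $1\le i\le n$, and $h_3(x_{n+1})\, c_{n+1}\to\lambda$; dually, $\cent_2\, h_2(w_2)\, \gamma \to^{!} \lambda$ can only use $\cent_2\, h_2(y_0)\, c_0\to\lambda$, the $h_2(y_i)\, c_i\, B\to\lambda$, and $h_2(y_{n+1})\, c_{n+1}\, B\to\lambda$.

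\emph{Next I would show that each derivation is a forced ``left-peel''.} In every string reachable from $\cent_1\, h_3(w_1)\, B\, \gamma$ the letters $a_1,b_1$ form a single contiguous block, always of the form $h_3(v)$, followed by a word $\mu$ over $\{c_0,\ldots,c_{n+1},B\}$. None of the three usable left-hand sides is a factor of $h_3(v)$ alone or of $\mu$ alone; each must straddle the right end of $h_3(v)$, matching a suffix of $h_3(v)$ against a prefix of $\mu$ (or, for $\cent_1\, h_3(x_0)\, B\, c_0$, all of $\cent_1\, h_3(v)$ against $B\, c_0$). Since $S$ is convergent the normal form is independent of the derivation, and $\cent_1$ can be deleted only by $\cent_1\, h_3(x_0)\, B\, c_0\to\lambda$; hence $h_3(v)$ is nonempty until the final step, and at each string $\cent_1\, h_3(v)\, \mu$ the next step is \emph{uniquely determined}: apply $h_3(x_i)\, B\, c_i$ if $\mu$ begins $B\, c_i$ and $x_i$ is a suffix of $v$; apply $\cent_1\, h_3(x_0)\, B\, c_0$ (reaching $\lambda$) if $v=x_0$ and $\mu = B\, c_0$; apply $h_3(x_{n+1})\, c_{n+1}$ if $\mu$ begins $c_{n+1}$ and $x_{n+1}$ is a suffix of $v$; in any other case the string is irreducible and distinct from $\lambda$, contradicting the hypothesis. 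The symmetric analysis of $\cent_2\, h_2(w_2)\, \gamma$ shows it passes only through strings $\cent_2\, h_2(u)\, \mu$ with $u\ne\lambda$, with forced move: apply $h_2(y_i)\, c_i\, B$ if $\mu$ begins $c_i\, B$ and $y_i$ is a suffix of $u$; apply $h_2(y_{n+1})\, c_{n+1}\, B$ if $\mu$ begins $c_{n+1}\, B$ and $y_{n+1}$ is a suffix of $u$; apply $\cent_2\, h_2(y_0)\, c_0$ (reaching $\lambda$) if $u=y_0$ and $\mu=c_0$.

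\emph{Then I would read off $\gamma$.} Running the forced $\cent_2$-derivation from $\cent_2\, h_2(w_2)\, \gamma$ strips off, one at a time, a leading block $c_{j_l}\, B$ of $\gamma$ with $j_l\in\{1,\ldots,n+1\}$, until only $c_0$ remains, and that $c_0$ is consumed together with $\cent_2$; in particular $c_0$ can occur only as the last symbol. Hence $\gamma = c_0$, or $\gamma = c_{j_1}\, B\, c_{j_2}\, B \cdots c_{j_t}\, B\, c_0$ with every $j_l\in\{1,\ldots,n+1\}$. It remains to exclude $j_l=n+1$, and here the $\cent_1$ derivation is used: in $\cent_1\, h_3(w_1)\, B\, \gamma$ the $B$ written explicitly after $h_3(w_1)$ is immediately followed by $c_{j_1}$, so the forced first step can only be $h_3(x_{j_1})\, B\, c_{j_1}\to\lambda$ (the alternative $\cent_1\, h_3(x_0)\, B\, c_0$ would require $j_1=0$), and being a Class~\rom{2} rule this forces $j_1\le n$. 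After that step we stand at $\cent_1\, h_3(w_1')\, B\, c_{j_2}\, B \cdots B\, c_0$, whose only reducible position is once more the block of $a_1,b_1$ followed by $B$ followed by $c_{j_2}$, so $j_2\le n$; inductively $j_l\le n$ for all $l$, the last step being $\cent_1\, h_3(x_0)\, B\, c_0\to\lambda$. Therefore $\gamma = (c_{j_1}\, B)(c_{j_2}\, B)\cdots(c_{j_t}\, B)\, c_0 \in \{c_1 B,\ldots,c_n B\}^{*}c_0$ (with $t=0$ the case $\gamma=c_0$), which is the assertion.

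\emph{Expected main obstacle.} The real work is the forced-left-peel claim: one must check carefully --- using that the left-hand sides of $S$ have no overlaps, so that by confluence the derivation is a single deterministic trajectory, and that the markers $\cent_1,\cent_2,B,c_i$ separate the index word from the block of primed letters --- that in no reachable string can a rule be applied anywhere except straddling that one boundary. The preliminary confinement of $\gamma$ to $\{c_0,\ldots,c_{n+1},B\}^{*}$ is the other point that needs a word of justification.
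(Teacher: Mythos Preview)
Your argument is correct, but it is organized quite differently from the paper's proof. The paper uses a short minimal-counterexample argument: assume $\gamma$ is a shortest string satisfying both hypotheses yet not of the required form (and irreducible); then by inspecting $S$ one sees that both strings can only become reducible if $\gamma = c_i B\,\gamma'$ for some $1\le i\le n$, and one Class~\rom{2} step on each side yields $\cent_1 h_3(w_1')\,B\,\gamma' \to^{!}\lambda$ and $\cent_2 h_2(w_2')\,\gamma' \to^{!}\lambda$, so $\gamma'$ is a strictly shorter counterexample --- contradiction. That is the whole proof.

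You instead carry out a direct, forward analysis: you classify which rules can fire on each side, show that each derivation is a deterministic ``left-peel'' along the boundary between the primed-letter block and the index word, and then read off the shape of $\gamma$ from the $\cent_2$-derivation (giving $\gamma\in(\{c_1,\dots,c_{n+1}\}B)^{*}c_0$) and eliminate $c_{n+1}$ using the $\cent_1$-derivation. This is longer but makes explicit exactly what the paper hides behind ``by inspection of $S$'', and it incidentally yields the stronger fact that the reduction sequences are uniquely determined. The paper's approach buys brevity; yours buys transparency and a bit of extra structural information.

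One point you correctly flag as needing justification --- restricting $\gamma$ to the index alphabet $\{c_0,\dots,c_{n+1},B\}$ --- is also handled only implicitly in the paper: the paper simply assumes $\gamma\in IRR(R)$ at the start of its proof, which is legitimate in context (the lemma is applied to a suffix of a minimal, hence irreducible, solution). Either assumption suffices, and both proofs are sound once it is granted; just make sure you state it as a hypothesis or justify it from the surrounding context rather than leaving it as a parenthetical.
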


  \begin{proof} Suppose $\gamma$ is a minimal counter example with
respect to length and $\gamma \in IRR(S)$. In order for the terms to
be reducible, $\gamma = c_iB\; \gamma^\prime$ (this follows by
inspection of $S$).  After we replace the $\gamma$ at the equation in
the lemma, we get:
    \begin{eqnarray*} \cent_1 \;h_3(w_1)\; B\; c_i\; B\; \gamma^\prime
& \rightarrow & \cent_1 {h_3(w_1)}^\prime B\; \gamma^\prime
\rightarrow^{!} \lambda \\ \cent_2 \;h_2(w_2)\; c_i\; B\;
\gamma^\prime & \rightarrow & \cent_2 {h_2(w_2)}^\prime \;
\gamma^\prime \rightarrow^{!} \lambda
    \end{eqnarray*}
    \noindent by applying the Class \rom{2} rules and finally Class
\rom{1} rule to erase the $\cent$ signs.  Then, however, $\gamma'$ is
also a counterexample, and $| \gamma' | < | \gamma |$, which is a
contradiction.
  \end{proof}

    We are now in a position to state and prove the main result of
this section.

  \begin{thm} The CT problem is undecidable for dwindling convergent
string-rewriting systems by a reduction from GPCP.
  \end{thm}

  \begin{proof} We first complete the ``only if" direction. Suppose CT
has a solution such that $\cent_1 Z \downarrow \cent_2 Z$ where $Z$ is
a minimal solution that is constructed over the alphabet of the string-rewriting system. We show that $Z$ corresponds to a solution for
GPCP. Let $Z_1$ be the longest string such that $h_1(Z_1)$ is a prefix
of $Z$, and denote the rest of $Z$ by $Z_2$, so that $Z$ can be written $Z = h_1(Z_1)Z_2$.
 

    $h_1(Z_1)$ can be rewritten to $h_3(Z_1)$ and $h_2(Z_1)$ by
applying the Class D rules. Thus, we will get
\begin{eqnarray*} \cent_1 \;h_1(Z_1)\; Z_2 & \rightarrow^{*} & \cent_1
h_3(Z_1)\; Z_2 \\ \cent_2 \;h_1(Z_1)\; Z_2 & \rightarrow^{*} & \cent_2
h_2(Z_1)\; Z_2
\end{eqnarray*} In order for both terms to be reducible, $Z_2$ must be of the form $Z_2 = c_{n+1}\;B\;Z_2^{\prime}$. Thus \begin{eqnarray*} \cent_1 \;h_3(Z_1)\; Z_2 & = & \cent_1 h_3(Z_1)\;
c_{n+1}\;B\;Z_2^{\prime} \\ \cent_2 \;h_2(Z_1)\; Z_2 & = & \cent_2
h_2(Z_1)\; c_{n+1}\;B\;Z_2^{\prime}
\end{eqnarray*} i.e., $Z_1 = Z_1^\prime \: x_{n+1}$ and $Z_1 =
Z_1^{\prime\prime} \: y_{n+1}$.  By applying the Class \rom{3} rules,
these equations will reduce to:
\begin{eqnarray*} \cent_1 h_3(Z_1)\; c_{n+1}\;B\;Z_2^{\prime} &
\rightarrow & \cent_1 h_3(Z_1^\prime) \; B\;Z_2^{\prime}\\ \cent_2
h_2(Z_1)\; c_{n+1}\;B\;Z_2^{\prime} & \rightarrow & \cent_2
h_2(Z_1^{\prime\prime}) \; Z_2^{\prime}
\end{eqnarray*} We now apply Lemma~\ref{FirstStep} to conclude that
$Z_2^{\prime} \in \{c_1B,\;c_2B,\; \ldots,c_nB\}_{}^{*}c_0$.\\[-5pt]

At this point we have that:
\[ Z_2 = c_{n+1}Bc_{i_k}Bc_{i_{k-1}}{\cdots}Bc_{i_2}Bc_{i_1}Bc_0 \text{\:\:\: for
some } i_1, \ldots, i_k\] Then the sequence of dominoes
\[(x_0, y_0), (x_{i_1}, y_{i_1}), {\ldots} , (x_{i_k}, y_{i_k}),
(x_{n+1}, y_{n+1}) \] will be a solution to the given instance of
GPCP with solution string~$Z_1$ since the left-hand sides of the
Class~\rom{1}, \rom{2}, \rom{3} rules consist of the images of domino
strings under $h_2$ and $h_3$. More specifically, there is a finite
number of $B$'s and $c_i$'s in~$Z_2$, so there must be a decomposition
of $h_1(Z_1)$:
\[ h_1(Z_1) = h_1(x_0)h_1(x_{i_1}) \cdots h_1(x_{i_k})h_1(x_{n+1}) \]
and
\[ h_1(Z_1) = h_1(y_0)h_1(y_{i_1}) \cdots h_1(y_{i_k})h_1(y_{n+1}) \]
Thus, we have the following reductions with Class~D rules:
\[ \cent_1\;h_1(Z_1)\;Z_2 \rightarrow^{*} \cent_1\;h_3(Z_1)\;Z_2 \]
\[ \cent_2\;h_1(Z_1)\;Z_2 \rightarrow^{*} \cent_2\;h_2(Z_1)\;Z_2 \]
Finally, by Class \rom{1}, \rom{2}, \rom{3} rules:
\[\cent_1\;h_3(Z_1)\;Z_2 \rightarrow^{*} \cent_1\; h_3(x_0)\;B\;c_0
\rightarrow \lambda \]
\[\cent_2\;h_2(Z_1)\;Z_2 \rightarrow^{*} \cent_2\; h_2(y_0)\;c_0
\rightarrow \lambda \] and $Z_1$ is a solution to the instance of
the~GPCP.

We next prove the ``if" direction. Assume that the given instance of
GPCP has a solution. Let $Z_1$ be the string corresponding to the
matching dominoes, and let
\[(x_0, y_0), (x_{i_1}, y_{i_1}), {\ldots} , (x_{i_k}, y_{i_k}),
(x_{n+1}, y_{n+1}) \] be the sequence of tiles that induces the match.
Let $Z_2 = c_{n+1}Bc_{i_k}Bc_{i_{k-1}}{\cdots}Bc_{i_2}Bc_{i_1}Bc_0$. We show that
$\cent_1 h_1(Z_1)Z_2\; \downarrow\; \cent_2 h_1(Z_1)Z_2$.  First apply the
Class $D$ rules to get:
\[ {\cent_1}h_1(Z_1)Z_2 \rightarrow^{*} {\cent_1}h_3(Z_1)Z_2 \]
\[ {\cent_2}h_1(Z_1)Z_2 \rightarrow^{*} {\cent_2}h_2(Z_1)Z_2 \] but then we
can apply Class~\rom{1}, \rom{2}, \rom{3} rules to reduce both of the
above terms to $\lambda$.
\end{proof}

This result strengthens the earlier undecidability result of
Otto~\cite{Otto1986} for string-rewriting systems that are
\emph{length-reducing} and convergent.

To clarify the results of the theorem, let us revisit the previous example \ref{ExampleCT} introduced earlier in this paper. Given the string $abbaabb$ in the example, we begin by applying homomorphism and Class D rules. This application yields the following sequence for the first string pair, $x_0\;x_1\;x_1\;x_2\;x_3$: $\cent_1 h_3(abb)h_3(a)h_3(a)h_3(b)h_3(b)c_3Bc_2Bc_1Bc_1Bc_0$ . In order to establish the reduction from GPCP to CT, we proceed to apply erasing rules to both strings. First, we employ Class \rom{3} rule to erase the last domino $h_3(b)c_3$. Subsequently, Class \rom{2} rule is applied three times to the inner dominoes: first for the homomophism $h_3(b)Bc_2$, then for $h_3(a)Bc_1$ twice. Finally, the erasing rule Class \rom{1}  removes the $\cent$ symbol along with the remaining homomorphism, resulting in: $\cent_1 h_3(abb)Bc_0 \rightarrow \lambda$. This sequence of application of the rules demonstrates the reduction process from GPCP to CT, providing a concrete illustration of the theorem's applicability.


\section{Common Equation Problem}\label{S:CE}
To clarify the CE problem for string rewriting systems, let us consider two substitutions 
$\theta^{}_1$ and $\theta^{}_2$ such that
\begin{eqnarray*}
\theta^{}_1 = \{ x_1 \mapsto \alpha^{}_1, \; x_2 \mapsto \alpha^{}_2\} \\
\theta^{}_2 = \{ x_1 \mapsto \beta^{}_1,  \; x_2 \mapsto \beta^{}_2\}
\end{eqnarray*}
Think of the letters of the alphabet as monadic function symbols as
mentioned in the notation and preliminaries section. 
We have two cases for the equation
$e_1 = e_2$: (i)~both $e_1$ and $e_2$ have the same variable in them, or
(ii)~they have different variables, i.e., one has~$x_1$ and the other~$x_2$.
Thus in the former, which we call the ``one-mapping'' case, we are looking for
\emph{different} irreducible strings~$W_1^{}$ and~$W_2^{}$ such that either
\begin{enumerate}
\item $\alpha_1 W_1 \stackrel{*}{{\longleftrightarrow}_R} \alpha_1 W_2 \; \text{ and } \;
  \beta_1 W_1 \stackrel{*}{{\longleftrightarrow}_R} \beta_1 W_2, \; \; \; \text{or}$

\item $\alpha_2 W_1 \stackrel{*}{{\longleftrightarrow}_R} \alpha_2 W_2 \; \text{ and } \;
  \beta_2 W_1 \stackrel{*}{{\longleftrightarrow}_R} \beta_2 W_2$.
\end{enumerate}
In the latter (``two-mappings case'') case, we want to find strings~$W_1^{}$ and~$W_2^{}$,
\emph{not necessarily distinct,} such that \[
\alpha_1 W_1 \stackrel{*}{{\longleftrightarrow}_R} \alpha_2 W_2 \; \text{ and } \;
 \beta_1 W_1 \stackrel{*}{{\longleftrightarrow}_R} \beta_2 W_2 \]

The loss of explicit inclusion of the variables by the passage from unary term rewriting systems to string rewriting systems is handled by the definitions of the separate problems. 

The one-mapping case can be illustrated by an example. Consider the
term rewriting system $\{ a(a(b(z))) \, \rightarrow \, a(b(z)) \}$ and
two substitutions $\theta_1^{} = \{ x \mapsto b(c) \}$ and
$\theta_2^{} = \{ x \mapsto b(b(c)) \}$. Now
$a(a(x)) \; = \; a(x)$ is a common equation. Considering this
in the string rewriting setting, we have  $R ~ = ~ \{ baa \, \rightarrow \, ba \}$,
$\alpha = b$ and $\beta = bb$. Now $W_1 ~ = ~ aa$ and $W_2 ~ = ~ a$ is a solution.\\[-4pt]

Hence we define CE as the following decision problem:

\begin{description}[align=left]
\item[Input] A string-rewriting system~$R$ on an alphabet~$\Sigma$, and 
strings~$\alpha_1, \alpha_2, \beta_1, \beta_2 \in \Sigma_{}^*$.
\item[Question] Do there exist 
\emph{irreducible} 
strings~$W_1, W_2 \in \Sigma^*$ such that one of the following conditions is satisfied?
\end{description}
\begin{tabular}{ c c r }
i. & $\alpha_1 W_1 \stackrel{*}{{\longleftrightarrow}_R} \alpha_2 W_2$, & \multirow{2}{20em}{$\;\;\;\;\;\;\;$$\alpha_1 \neq \alpha_2$ $\lor$ $\beta_1 \neq \beta_2$} \\[5pt]
& $\beta_1 W_1 \; \stackrel{*}{{\longleftrightarrow}_R} \; \beta_2 W_2$, & \\[10pt]
ii. & $\alpha_i W_1 \stackrel{*}{{\longleftrightarrow}_R} \alpha_i W_2$, & \multirow{2}{20em}{$\;\;\;\;\;\;\;$ $i \in \{1, 2\} $ $\land$ $W_1 \neq W_2$} \\[5pt]
& $ \beta_i  W_1 \; \stackrel{*}{{\longleftrightarrow}_R} \; \beta_i W_2$,  & \\[10pt]
\end{tabular}

CE is also undecidable for dwindling systems, since
in the string-rewriting case CT is a particular case of~CE. To see this,
consider the case where $\alpha_1 \, \neq \, \alpha_2$ and
$\beta_1 = \beta_2 = \lambda$, i.e., consider
the substitutions
\begin{eqnarray*}
\theta^{}_1 & = & \{ x_1 \mapsto \alpha^{}_1, \; x_2 \mapsto \alpha^{}_2\} \\
\theta^{}_2 & = & \{ x_1 \mapsto \lambda ,  \; x_2 \mapsto \lambda \}
\end{eqnarray*}
where $\alpha_1 \, \neq \, \alpha_2$.
This has a solution if and only if there are irreducible strings~$W_1, W_2 \in \Sigma^*$ such that either

\begin{tabular}{ c c r }
i. & $\alpha_1 W_1 \stackrel{*}{{\longleftrightarrow}_R} \alpha_2 W_2$, & \multirow{2}{20em}{$\;\;\;\;\;\;\;$} \\[5pt]
& $ W_1 \; \stackrel{*}{{\longleftrightarrow}_R} \; W_2$, & $\qquad$ or \\[10pt]
ii. & $\alpha_i W_1 \stackrel{*}{{\longleftrightarrow}_R} \alpha_i W_2$, & \multirow{2}{20em}{$\;\;\;\;\;\;\;$ $i \in \{1, 2\} $ $\land$ $W_1 \neq W_2$} \\[5pt]
& $   W_1 \; \stackrel{*}{{\longleftrightarrow}_R} \;  W_2$,  & \\[10pt]
\end{tabular}
\ignore{
if $\beta_1 = \beta_2 = \lambda$, then \[
\alpha_1 W_1 \; \stackrel{*}{{\longleftrightarrow}_R} \; \alpha_2 W_2 \; \; ~ \text{and} ~ \; \;
W_1 \; \stackrel{*}{{\longleftrightarrow}_R} \; W_2 \]
}

Since $W_1$ and $W_2$ are irreducible strings, $W_1 \;
\stackrel{*}{{\longleftrightarrow}_R} \; W_2$ makes $W_1$ equal to
$W_2$. Thus, we eliminate the second condition. 

With the second condition being out of the picture, we only consider the first condition which shows a similarity with the definition of Common Term(CT) problem. The CE problem for $\theta_1$ and $\theta_2$ has a solution
if and only if the CT problem for $\alpha_1$ and $\alpha_2$ has a solution. Therefore, CT is reducible to CE problem.
\ignore{
It can also be shown, using the construction
from~\cite{NarendranOtto97} that there are theories for which
CT is decidable whereas CE is not\footnote{We can use Corollary~4.4
  on page~101 of~\cite{NarendranOtto97}}. In 
  Appendix~\ref{CTdecidableCEnot}, 
  we provided a simpler construction than the one in ~\cite{NarendranOtto97} for reader's ease of understanding.
  }
We now show that CE is decidable for monadic string rewriting
systems.  We start with the \emph{two-mappings} case
first, since the solution for \emph{one-mapping} case is similar
to the \emph{two-mapping} case with a slightly simpler approach.

In the following we will need the concept of a \emph{right-factor} and \emph{minimal product (MP)}, which we define below.

\begin{defn}
Given a monadic, finite and convergent string rewriting system~$R$
and irreducible strings~$x$ and~$y$, let
$RF(x, y)$ define the set of \emph{right factors} needed to derive~$y$, i.e.,
$$RF(x, y) = \{ z \in IRR(R) ~ | ~ xz \rightarrow_R^{!} y \}.$$
\end{defn}

\begin{defn}
Let $R$ be a convergent monadic SRS. For an irreducible string~$\alpha$, let \[ MP(\alpha) 
~ = ~ \Big\{ \, w\!\big\downarrow \; ~ \Big| ~ \; w \in PREF(\alpha) \circ (\Sigma \cup \{ \lambda \})
      \Big\} \]

\noindent
\emph{MP} stands for the term \emph{Minimal Product} and \emph{PREF}
is the set of \emph{prefixes} of given string.
\end{defn}

\subsection{Two-mapping CE Problem for Monadic Systems}
For monadic and convergent string rewriting systems, the \emph{two-mappings} case of Common
Equation (CE) problem is decidable. This can be
shown using Lemma~3.6 in~\cite{OND98}. (See also
Theorem~3.11 of~\cite{OND98}.) In fact, the algorithm runs
in polynomial time as explained below:

\begin{thm}
{\label{CEFirstTheorem}}
Common Equation (CE) problem, given below, is decidable in polynomial
time for monadic,
finite and convergent string rewriting systems.
\begin{description}[align=left]
\item[Input] A string rewriting system~$R$ on an alphabet~$\Sigma$, and 
strings~$\alpha_1, \alpha_2, \beta_1, \beta_2 \in \Sigma_{}^*$.
\item[Question] Do there exist strings~$X, Y \in \Sigma^*$ such that 
$\alpha_1 X \; \stackrel{*}{{\longleftrightarrow}_R} \; \alpha_2 Y$ and 
$\beta_1 X \; \stackrel{*}{{\longleftrightarrow}_R} \; \beta_2 Y$?
\end{description}
\end{thm}

\begin{proof}
  The CE problem is a particular case of the simultaneous
  $E$-unification problem of~\cite{OND98}, but with a slight
  difference: CE consists of \emph{only two} equations, while
  simultaneous $E$-unification problem is defined for an arbitrary
  number of equations.  Besides, simultaneous $E$-unification problem is
  PSPACE-hard. We will use their construction, but we will modify
  it to obtain our polynomial time result.

Note: $RF(x, y)$ is a regular language for
all~$x$,~$y$, where $|y| \leq 1$~\cite{OND98} and a DFA
for it can be constructed in polynomial time
in $|R|$,~$|x|$ and~$|y|$. A DFA for $IRR(R)$ can also be constructed in polynomial time \cite{GIL}, \cite{Book}.  
We can characterize the solutions of the equation
$\alpha_1 X \; \downarrow_R^{} \; \alpha_2 Y$ by an
analysis similar to that used in~Lemma~3.6 of~\cite{OND98}
and its proof. 
Since $R$ is
monadic, there exist $a,b \in \Sigma \cup \{\lambda\}$ and partitions
of the strings $\alpha_1 = {\alpha_{11}} \; {\alpha_{12}}$, $\alpha_2 =
{\alpha_{21}} \; {\alpha_{22}}$, $X = {X_1}\;{X_2}$ and $Y =
{Y_1} \; {Y_2}$, such that 
\begin{tabbing}
$\qquad$ \= $\alpha_{12} X_1 \; \rightarrow_{}^! \; a$,\\
         \> $\alpha_{22} Y_1 \; \rightarrow_{}^! \; b, \qquad$ and \\[+5pt]
         \> ${\alpha_{11}}\; a\; {X_2} = {\alpha_{21}}\; b\; {Y_2}$.\\[-10pt]
\end{tabbing}
Now there are two main cases:\\[-10pt]

\begin{itemize}
\item[(a)] ${X_2}$ is a suffix of ${Y_2}$ i.e., ${Y_2} = Z\; {X_2}$ and ${\alpha_{11}}\;a = {\alpha_{21}} \;b\; Z$, therefore \[(X_1, \, Y_1 Z) ~ \in ~ RF({\alpha_{12}}, a) \times RF({\alpha_{22}}, b) \cdot Z.\]
\item[(b)] ${Y_2}$ is a proper suffix of  ${X_2}$ i.e., ${X_2} = Z''\; {Y_2}$,
  ${\alpha_{11}} \; a \; Z'' = {\alpha_{21}} b$, and
 ${\alpha_{11}} \; a \; U = {\alpha_{21}}$, $\; Z'' = U\,b$, therefore $(X_1 Z'', \, Y_1) ~ \in ~ RF({\alpha_{12}}, a) \cdot Z'' \times RF({\alpha_{22}}, b).$
\end{itemize}
\noindent
Similar partitioning can be done for the second equation. \\

Let $Sol(\alpha_1, \alpha_2)$ stand for a set of `minimal' solutions\footnote{We can define minimality in terms of
the partial order $(x, y) \, \sqsubseteq \, (xz, yz)$
for all~$z$}:

\begin{equation*}
\begin{aligned}
Sol(\alpha_1, \alpha_2) \; = & \; \; \; \; \,
 \bigcup_{\substack{a,b \; \in \; \Sigma \; \cup \; \{\lambda\} \\[+3pt] {\alpha_{11}^{} a} = {\alpha_{21}^{}}\,b\,Z}} RF({\alpha_{12}^{}}, a) \; \Cross \; RF({\alpha_{22}^{}}, b) \cdot Z \; \\[+5pt]
 & \cup 
\bigcup_{\substack{a,b \; \in \; \Sigma \; \cup \; \{\lambda\} \\[+3pt] {\alpha_{21}^{} b} = {\alpha_{11}^{}}\,a\,Z''}} RF({\alpha_{12}^{}}, a) \cdot Z'' \; \Cross \; RF({\alpha_{22}^{}}, b) \\
\end{aligned}
\end{equation*}

\noindent
Note that this is a finite union of cartesian products of regular languages. More precisely,
it is an expression of the form \[
(L_{11}^{} \Cross L_{12}^{}) \; \cup \; \ldots \; \cup \; 
(L_{m1}^{} \Cross L_{m2}^{}) \] where $m$~is a polynomial over~$| \alpha_1^{} |$,
$| \alpha_2^{} |$ and~$| \Sigma |$ and each~$L_{ij}^{}$ has a DFA of size
polynomial in~$|R|$ and~$\text{max}( |\alpha_1|, \, |\alpha_2|)$.\\
To find the complexity of a DFA concatenated with a letter or string, 
refer to the Lemma~\ref{DFAConcatLetter} for the former and Lemma~\ref{DFAConcatString} for the latter. 

\vspace{0.05in}
\noindent
The set of all solutions for  $\alpha_1^{} X \; \downarrow_R^{} \;
\alpha_2^{} Y$ is \[
\Delta(\alpha_1, \alpha_2) ~ = ~ \left\{ \vphantom{b^b} \, (w_1 x_1, \, z_1 x_1) ~ \; | \; ~ 
(w_1, z_1) \in Sol(\alpha_1, \alpha_2) \text{ and }
x_1 \in IRR(R) \right\} \]

The minimal solutions for the second
equation with $\beta_1$ and $\beta_2$, $Sol(\beta_1 , \beta_2)$, 
can be found by following the
same steps.  Thus $Sol(\beta_1 , \beta_2)$ can also be expressed
as 
the union of cartesian products of regular languages:  \[
(L_{11}^{\prime} \Cross L_{12}^{\prime}) \; \cup \; \ldots \; \cup \; 
(L_{n1}^{\prime} \Cross L_{n2}^{\prime}) \] where 
$n$ is also a polynomial over $|\beta_1|$, $|\beta_2|$ and $|\Sigma|$.
The set of all solutions for $\beta_1 X = \beta_2 Y$ equals to \[
\Delta(\beta_1, \beta_2) ~ = ~ \left\{ \vphantom{b^b} \, (w_2 x_2, \, z_2 x_2) ~ \; | \; ~ 
(w_2, z_2) \in Sol(\beta_1, \beta_2) \text{ and }
x_2 \in IRR(R) \right\} \]

The solutions for both the equations are the tuples $(w, z)
\in \; \Delta(\alpha_1^{}, \alpha_2^{})\; \cap \;\Delta(\beta_1^{},
\beta_2^{})$. That is, there must be
$w_1,\;w_2,\;z_1,\;z_2,\;x_1,\;x_2$ such that $$(w_1,\;z_1) \in
Sol(\alpha_1^{}, \alpha_2^{}), (w_2,\;z_2) \in Sol(\beta_1^{},
\beta_2^{}) \text{ and}$$ 
\begin{equation*}
\begin{aligned}
w &= w_1\; x_1= w_2 \; x_2 \\
z &= z_1\; x_1 = z_2 \; x_2
\end{aligned}
\end{equation*}
If $x_1$ is a suffix of $x_2$, i.e., $x_2 = {x_2}^{\prime} x_1$, then
\begin{equation*}
\begin{aligned}
w_1 &= w_2 \; {x_2}^{\prime}\\ 
z_1 &= z_2 \; {x_2}^{\prime}
\end{aligned}
\end{equation*}
(Similarly we repeat the same steps when $x_2$ is a suffix of~$x_1$.) 

Recall that $(w_1, \;z_1) \in L_{i1}^{} \times L_{i2}^{}$ for some
$i>0$, and  $(w_2, \;z_2) \in L_{j1}^{\prime} \times L_{j2}^{\prime}$
for some $j>0$. Thus 
$x_2^{\prime} \; \in \; L_{i1}^{} \, \backslash \, L_{j1}^{\prime}$
and $x_2^{\prime} \; \in \; L_{i2}^{} \, \backslash \, L_{j2}^{\prime}$
where $\backslash$ stands for the left quotient operation on languages,
defined as 
$A \backslash B := \{v \in \Sigma^{*} ~ | ~ \exists u \in B : uv \in A \}$ (See Lemma~\ref{FindMEFA} for more details, previous lemmas build up to Lemma~\ref{FindMEFA}).  
Thus there is a solution if
the intersection of $L_{i1}^{} \, \backslash \, L_{j1}^{\prime}$
and $L_{i2}^{} \, \backslash \, L_{j2}^{\prime}$ is nonempty. This check
has to be repeated for every~$i, \, j$. The process of finding the intersection of two languages is explained in Lemma~\ref{MEFAIntersection} 
and to be able to find the strings in the intersection and the quotient, you need to follow the steps in Lemma~\ref{FindingStringsThruIntersection}.
\end{proof}

\begin{lem}{\label{DFAConcatLetter}}
Let $\, M \; = \; (Q, \Sigma, \delta, q_0^{}, F)$ be a DFA and $a \in \Sigma$. Then
there exists a DFA\\
$\, M_{}^{\prime} \; = \; (Q_{}^{\prime}, \Sigma, \delta_{}^{\prime}, q_0^{\prime}, F_{}^{\prime})$
that recognizes $\mathcal{L}(M) \, \circ \, \{ a \}$ such that
$| F_{}^{\prime} | \leq |F|$ and $| Q_{}^{\prime} | \; \le \; | Q | + | F |$.
\end{lem}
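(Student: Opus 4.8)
The plan is to construct $M'$ directly from $M$ by adding, for each accepting state of $M$, a fresh ``post-$a$'' sink-like state that records ``we have just read the final letter $a$''. Concretely, I would let $Q' = Q \cup \{\, \hat{q} \mid q \in F \,\}$, where the $\hat{q}$ are new states in bijection with $F$; this immediately gives $|Q'| \le |Q| + |F|$. The start state is unchanged, $q_0' = q_0$. For the transition function: on the old states we keep $\delta$, except that we \emph{redirect} the $a$-transitions out of accepting states so that they also ``fork'' into the new copy — that is, for $q \in F$ we set $\delta'(q, a) = \widehat{\delta(q,a)}$ if $\delta(q,a) \in F$, and otherwise we need to remember both the possibility of continuing in $M$ and the possibility of stopping. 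The cleanest way to handle this is to make the automaton read the guessed ``last'' $a$ nondeterministically and then determinize, but since the statement asks for a DFA with a sharp state bound, I would instead argue as follows.

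First I would reduce to the case where $M$ has a single accepting state, or rather observe that $\mathcal{L}(M)\circ\{a\}$ is accepted by the NFA obtained from $M$ by adding one new state $f$, adding an $a$-transition from every $q\in F$ to $f$, and making $f$ the unique accepting state (the old accepting states of $M$ stop being accepting). This NFA has $|Q|+1$ states, but it is nondeterministic precisely at the accepting states of $M$ (two $a$-transitions: one staying in $M$, one going to $f$). The key observation is that the subset construction on this NFA only ever produces subsets of the form $S$ or $S\cup\{f\}$ with $S\subseteq Q$ a \emph{reachable} subset of the original DFA; but $M$ being deterministic, the only reachable subsets of $Q$ are singletons. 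Hence the reachable states of the determinization are exactly the $|Q|$ singletons $\{q\}$ together with the sets $\{q, f\}$ for those $q$ that arise as $\delta(p,a)$ with $p\in F$ — and there are at most $|F|$ such $q$. This yields $|Q'|\le |Q|+|F|$, and the accepting states of the determinization are exactly the sets containing $f$, of which there are at most $|F|$, giving $|F'|\le |F|$.

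The remaining step is the routine verification that $\mathcal{L}(M')=\mathcal{L}(M)\circ\{a\}$: a string $w$ is accepted iff some run of the NFA ends in $f$, iff $w = ua$ where the run on $u$ ends in an accepting state of $M$, iff $u\in\mathcal{L}(M)$. I would also note the boundary case $a\notin$ the range of $\delta$ restricted to $F$, i.e.\ no accepting state has an outgoing $a$-edge in $M$ (impossible for a complete DFA, but if $M$ is partial this can happen and then $\mathcal{L}(M)\circ\{a\}$ may still be nonempty); and the degenerate case $F=\emptyset$, where $\mathcal{L}(M)=\emptyset$ and $M'$ can be taken equal to $M$.

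The main obstacle I anticipate is \textbf{pinning down the state count precisely rather than loosely}: the naive NFA-to-DFA conversion gives a $2^{|Q|}$ bound, and getting it down to $|Q|+|F|$ hinges on the determinism of $M$ forcing all reachable ``$M$-components'' of subsets to be singletons. Making that argument airtight — in particular checking that no two distinct new states $\{q,f\}$ and $\{q',f\}$ need to be kept when $q=q'$, and that the new states are never themselves the source of further branching since $f$ has no outgoing transitions — is where the care is needed; once that is done the language equivalence and the $|F'|\le|F|$ bound are immediate.
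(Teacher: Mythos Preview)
Your proposal is correct and follows essentially the same route as the paper: build the obvious NFA by adding fresh accepting sink state(s) reached via an $a$-edge from each old accepting state, then observe that the subset construction applied to this NFA produces only the original singletons $\{q\}$ plus at most $|F|$ new states of the form $\{\delta(p,a),\text{sink}\}$ for $p\in F$, all of which are the new accepting states. The only cosmetic difference is that the paper adds one fresh sink $p_i$ per accepting state $q_{fi}$ and then argues that $\{r_i,p_i\}$ and $\{r_j,p_j\}$ can be merged when $r_i=r_j$, whereas you add a single sink $f$ from the outset---your formulation is cleaner and avoids that merging step, but the underlying argument is the same.
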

\begin{proof}
The concatenation of a letter with a DFA can be easily achieved by
adding extra transitions from each final state $q_{fi}$ to a new
state $p_i$ for the symbol~$a$. However, that turns the DFA $M$ into a
non-deterministic finite automaton (NFA). We claim that there exists a
DFA $M_{}^{\prime}$ with $|F|$ as the upper bound for accepting states.

Consider Figure~\ref{fig:test}: $q_{f1}$ through $q_{fn}$
are the $n$ final states of the given DFA~$M$. Suppose
$\delta (q_{fi}, a) = r_i$. Then the subset construction
gives us the new transition
$\delta_{}^{\prime} (\{ q_{fi} \}, a) = \{ r_i, p_i \}$, where $\{ r_i, p_i \}$ is a label for a new state. 
The new accepting states for the
new DFA $M'$ will be $\{ r_i, p_i \}$, such that $1 \leq i \leq n$.
\begin{figure}[h]
\centering
\begin{tikzpicture}[->,>=stealth',shorten >=1pt,auto,node distance=1.8cm,
                    semithick, scale=0.5]
         \node[state, accepting] 		(F1)                   		{$q_{f1}$};
         \node[state, draw=none]	     	(Fi) [below=0.2cm of F1] 	{$\vdots$};
         \node[state, accepting]	     	(Fn) [below=0.2cm of Fi] 	{$q_{fn}$}; 
         \node[state]			     	(P1) [right of=F1] 		{$p_1$};
         \node[state]			     	(Pn) [right of=Fn] 		{$p_n$};
         \node[state] 				(R1) [below left of=F1]       {$r_1$};
         \node[state] 				(Rn) [below left of=Fn]       {$r_n$};
         \node[state]				(R2) [above left of=R1]	{$r_2$};
         \node[state]				(Rn2) [above left of=Rn]	{$r_{n2}$};
	 \node[state, draw=none]	     	(Info) [below right of=Rn]	{(a) $M$};
\path       
         (F1) edge node{$a$} (P1)
         	edge node{$a$} (R1)
	 (R1) edge node{$b$} (R2)	
         (Fn) edge node{$a$} (Pn)
         	edge node{$a$} (Rn)
	(Rn) edge node{$b$} (Rn2);
\end{tikzpicture} 
\hspace{3cm}
\begin{tikzpicture}[->,>=stealth',shorten >=1pt,auto,node distance=1.8cm,
                    semithick, scale=0.5]
         \node[state] 		(F1)                    {$q_{f1}$};
         \node[state, draw=none]	     	(Fi) [below=0.2cm of F1] {$\vdots$};
         \node[state]	     	(Fn) [below=0.2cm of Fi] {$q_{fn}$}; 
         \node[state, accepting]			     	(P1) [right of=F1] {$r_1, p_1$};
         \node[state, accepting]			     	(Pn) [right of=Fn] {$r_n, p_n$};
         \node[state] 				(R1) [below left of=F1]           {$r_1$};
         \node[state] 				(Rn) [below left of=Fn]           {$r_n$};
          \node[state]				(R2) [above left of=R1]	{$r_2$};
          \node[state]				(Rn2) [above left of=Rn]	{$r_{n2}$}; 
         \node[state, draw=none]	     	(Info)[below right of=Rn]{(b) $M'$};

\path       
         (F1) edge node{$a$} (P1)
         (Fn) edge node{$a$} (Pn)
	(P1) edge[bend right=40] node[swap]{$b$} (R2)
	(R1) edge node{$b$} (R2)
	(Pn) edge[bend left] node {$b$} (Rn2)
	(Rn) edge node{$b$} (Rn2);
\end{tikzpicture}
\caption{DFA $M$ concatenation with a single letter $a$.}
\label{fig:test}
\end{figure}
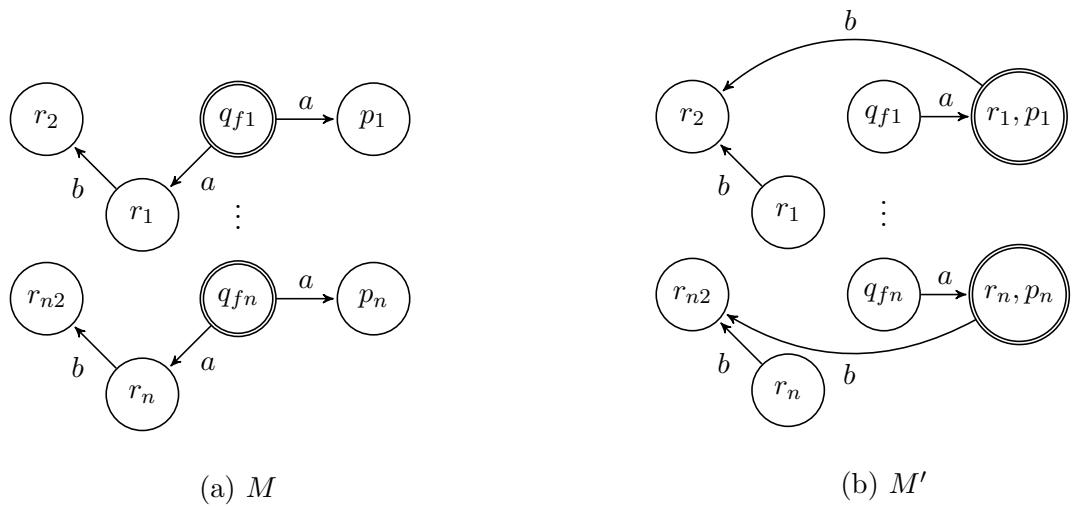

\ignore{
However, if the letter we want to concatenate does not have a
transition from the final states of $M$, the state of the string, $p_1
\ldots p_n$ will be the new final states of the DFA $M_{}^{\prime}$.
}

Besides, if the transitions for the letter~$a$ from
two earlier accepting states have the same destination state,
we can combine the new accepting states that were created.
Thus in
Figure~\ref{fig:sameLetterTransition}, the state
$\{ r_1, p_1 \}$ can be assigned to~$\delta^{\prime} (\{q_{fn}\}, a)$,
avoiding needless duplication.

Thus the number of final states, $|F_{}^{\prime}|$
for the DFA $M_{}^{\prime}$ is less than or equal to the original
number of final states, $|F|$, in DFA~$M$.

Total number of states $|Q_{}^{\prime}|$ for $M_{}^{\prime}$ is
bounded by the number of the final states $|F|$ in~$M$ as well as the
number of total states, $|Q|$, in~$M$.  Therefore, the
number of states for~$M_{}^{\prime}$ can be less than or equal to the
both of the factors, i.e., $|Q_{}^{\prime}| \leq |Q|+|F|$.
\end{proof}

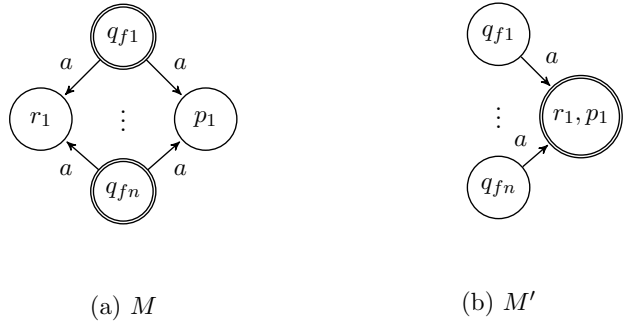
\begin{figure}
\centering
\resizebox{.55\linewidth}{!}{
\begin{tikzpicture}[->,>=stealth',shorten >=1pt,auto,node distance=1.8cm,
                    semithick, scale=0.5]
         \node[state, accepting] 		(F1)                   		{$q_{f1}$};
         \node[state, draw=none]	     	(Fi) [below=0.2cm of F1] 	{$\vdots$};
         \node[state, accepting]	     	(Fn) [below=0.2cm of Fi] 	{$q_{fn}$}; 
         \node[state]			     	(P1) [below right of=F1] 		{$p_1$};
         \node[state] 				(R1) [below left of=F1]       	{$r_1$};
	 \node[state, draw=none]	     	(Info) [below of=Fn]	{(a) $M$};
\path       
         (F1) edge node{$a$} (P1)
         	edge node[swap]{$a$} (R1)	
         (Fn) edge node[swap]{$a$} (P1)
         	edge node{$a$} (R1);
\end{tikzpicture} 
\hspace{3cm}
\begin{tikzpicture}[->,>=stealth',shorten >=1pt,auto,node distance=1.8cm,
                    semithick, scale=0.5]
         \node[state] 				(F1)                    		{$q_{f1}$};
         \node[state, draw=none]	     	(Fi) [below=0.2cm of F1] 	{$\vdots$};
         \node[state]	     			(Fn) [below=0.2cm of Fi] 	{$q_{fn}$}; 
         \node[state, accepting]		(P1) [below right of=F1] 		{$r_1, p_1$};
         \node[state, draw=none]	     	(Info)[below of=Fn]	{(b) $M_{}^{\prime}$};

\path       
         (F1) edge node{$a$} (P1)
         (Fn) edge node{$a$} (P1);
\end{tikzpicture}
}
\caption{DFA $M_{}^{\prime}$ can have less than $|F|$ states.}
\label{fig:sameLetterTransition}
\end{figure}

\begin{lem}{\label{DFAConcatString}}
Concatenation of a deterministic finite automaton (DFA) with a single
string has the time complexity $O(|F| \ast |Z| \ast
|\Sigma|)$, where $|F|$ is the number of final states in the DFA, $|Z|$
is the length of the string and $|\Sigma|$ is the size of
the alphabet.
\end{lem}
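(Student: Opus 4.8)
The plan is to obtain the construction for a whole string by iterating the single-letter concatenation of Lemma~\ref{DFAConcatLetter} along the string $Z = z_1 z_2 \cdots z_k$, where $k = |Z|$. Set $M^{(0)} = M$, and let $M^{(j)}$ be the DFA produced from $M^{(j-1)}$ by concatenating the single letter $z_j$ using the construction of Lemma~\ref{DFAConcatLetter}. Since $\mathcal{L}(M) \circ \{z_1 \cdots z_k\} = (\cdots((\mathcal{L}(M)\circ\{z_1\})\circ\{z_2\})\cdots)\circ\{z_k\}$, the final automaton $M^{(k)}$ recognizes $\mathcal{L}(M) \circ \{Z\}$, so correctness is immediate from Lemma~\ref{DFAConcatLetter}; only the running time needs argument.

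The observation that keeps the bound independent of the (growing) total number of states is the part of Lemma~\ref{DFAConcatLetter} asserting $|F'| \le |F|$: the number of accepting states never exceeds the original $|F|$, however many letters have been appended. Consequently, at step $j$ the single-letter construction touches only a bounded region of the automaton — namely the at most $|F|$ accepting states of $M^{(j-1)}$ (which lose their accepting status and whose $z_j$-transitions feed, via the subset construction, into freshly created accepting states) together with the at most $|F|$ new states introduced at this step. Completing the transition function on these new states amounts to filling in one entry per alphabet symbol (which, as noted after Lemma~\ref{DFAConcatLetter}, simply copies the behaviour of the underlying $r_i$ state, optionally merging duplicate new states by a table lookup on their targets as in Figure~\ref{fig:sameLetterTransition}). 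Hence step $j$ costs $O(|F| \cdot |\Sigma|)$.

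Summing over the $k = |Z|$ steps yields the claimed $O(|F| \cdot |Z| \cdot |\Sigma|)$ total time, and incidentally a bound of $|Q| + |Z| \cdot |F|$ on the number of states of $M^{(k)}$. The one point that needs care is precisely the per-step cost: one must check that a single application of the Lemma~\ref{DFAConcatLetter} construction can be carried out without rescanning all of $M^{(j-1)}$ — i.e., that direct access to the list of accepting states and to the relevant transition entries suffices — so that the cost is $O(|F|\cdot|\Sigma|)$ rather than something proportional to the size of the intermediate automaton. Given the explicit and local nature of the construction in Lemma~\ref{DFAConcatLetter}, this is routine, and the invariant $|F'| \le |F|$ then propagates the bound cleanly through all $|Z|$ iterations.
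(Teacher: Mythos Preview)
Your proof is correct and follows essentially the same approach as the paper: iterate the single-letter construction of Lemma~\ref{DFAConcatLetter}, use the invariant $|F'|\le|F|$ to keep the per-step cost at $O(|F|\cdot|\Sigma|)$, and sum over $|Z|$ steps. If anything, you are more explicit than the paper about why each step is local and does not require rescanning the intermediate automaton, and you arrive at the same $|Q|+|Z|\cdot|F|$ state bound.
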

\begin{proof}
Recall that the previous lemma proved that the number of states 
in the new DFA after
the concatenation of one letter 
is at most~$|Q| + |F|$ and that the number of 
(new) accepting states is at most~$F$. Thus repeatedly
applying this operation will result in a DFA with
at most~$|Q| + |Z| \ast |F|$ states and at most~$|F|$ accepting states.
The number of new edges will be at most~$|F| \ast |Z| \ast
|\Sigma|$. Thus the overall complexity is
polynomial in the size of the original~DFA. The Figure~\ref{fig:cloudDFA} shows an example of DFA $M$'s concatenation with the symbols in string $Z$. 
\end{proof}

\begin{figure}
\begin{tikzpicture}[->,>=stealth',shorten >=1pt,auto,node distance=1.8cm,
                    semithick]
  \coordinate (c) at (2,0);
  \draw[black,ultra thick,rounded corners=1mm] (c) \irregularcircle{3cm}{2mm};
  \node[initial,state, circle] (A)                    {$q_0$};
  	
  \node[state, draw=none]         (H) [right of=A] {$\hdots$};
  \node[state, draw=none]	     (Fi) [right of=H] {$\vdots$};
  \node[state, accepting]	     (F1) [above of=Fi] {$q_{a1}$};
  \node[state, accepting]	     (Fn) [below of=Fi] {$q_{an}$}; 
  \node[state]			     (P1) [right =2.1 cm of Fi] {$p_1$}; 
  \node[state]			     (P2) [right of=P1] {$p_2$};
  \node[state, draw=none]	     (Pi) [right of=P2] {$\hdots$};
  \node[state]			     (Pn) [right of=Pi] {$p_n$};
  
\path (A) edge (H)
	(H) edge [bend left] (F1)
	      edge (Fi)
	      edge [bend right] (Fn)
	(F1) edge [bend left, blue] node{$a_1$} (P1)
	(Fi)  edge [blue] node{$a_1$} (P1)
	(Fn)  edge [bend right, blue] node{$a_1$} (P1)
	(P1) edge node{$a_2$} (P2)
	(P2) edge node{$a_3$} (Pi)
	(Pi) edge node{$a_n$} (Pn);
  
\end{tikzpicture}
\caption{DFA $M$ concatenation with a string $Z = a_1 a_2 \ldots a_n$.}
\label{fig:cloudDFA}
\end{figure}
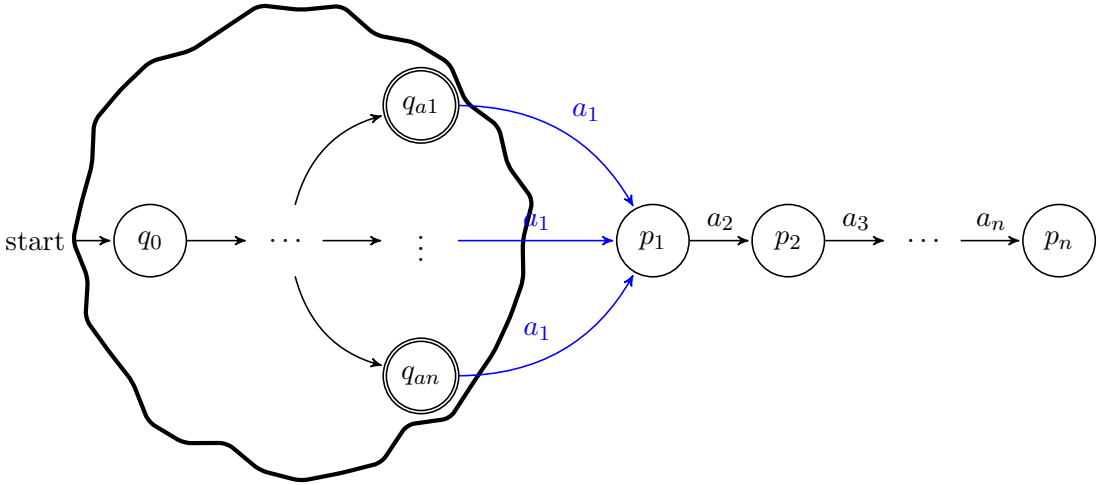

\begin{lem}\label{FindMEFA}
Let $M_1^{}$ and $M_2^{}$ be DFAs. Then a multiple-entry DFA (MEFA)
\\for $\mathcal{L}(M_2^{}) \, \backslash \, \mathcal{L}(M_1^{})$ can
be computed in polynomial time.
\end{lem}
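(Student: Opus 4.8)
The plan is to realize the desired language on the state set of $M_2$ itself, using as entry points exactly those states of $M_2$ that $M_2$ can land in after reading some word accepted by $M_1$. Recall that, by the definition recalled just above the lemma, $\mathcal{L}(M_2)\backslash\mathcal{L}(M_1)=\{v\in\Sigma^*\mid \exists u\in\mathcal{L}(M_1):uv\in\mathcal{L}(M_2)\}$. Writing $M_i=(Q_i,\Sigma,\delta_i,q_0^{(i)},F_i)$ and extending $\delta_i$ to words in the usual way, a word $v$ lies in this quotient iff there is some $u\in\mathcal{L}(M_1)$ with $\delta_2(q_0^{(2)},uv)\in F_2$, equivalently $\delta_2\bigl(\delta_2(q_0^{(2)},u),\,v\bigr)\in F_2$. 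So it suffices to identify the finite set of possible values of $\delta_2(q_0^{(2)},u)$ as $u$ ranges over $\mathcal{L}(M_1)$, and then let $M_2$ (with its ordinary transitions and final states) run from any of these.

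First I would compute $S:=\{\,\delta_2(q_0^{(2)},u)\mid u\in\mathcal{L}(M_1)\,\}\subseteq Q_2$. This is done by the standard product construction: form the automaton with state set $Q_1\times Q_2$, start state $(q_0^{(1)},q_0^{(2)})$, and transitions $(p,q)\xrightarrow{a}(\delta_1(p,a),\delta_2(q,a))$; run a reachability search (BFS or DFS) from the start state; then set $S=\{\,q\in Q_2\mid (p,q)\text{ is reachable for some }p\in F_1\,\}$. By an immediate induction on $|u|$, the pair reachable by reading $u$ is precisely $(\delta_1(q_0^{(1)},u),\delta_2(q_0^{(2)},u))$, so $q\in S$ iff $q=\delta_2(q_0^{(2)},u)$ for some $u\in\mathcal{L}(M_1)$. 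This step uses $O(|Q_1|\,|Q_2|\,|\Sigma|)$ time and space, hence is polynomial.

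Next I would define the MEFA $N:=(Q_2,\Sigma,\delta_2,S,F_2)$, interpreting $S$ as its set of entry states (so $N$ is deterministic once an entry state is chosen, and accepts $v$ iff $\delta_2(s,v)\in F_2$ for some $s\in S$). Correctness, $\mathcal{L}(N)=\mathcal{L}(M_2)\backslash\mathcal{L}(M_1)$, then follows by chaining the equivalences above: $v\in\mathcal{L}(N)$ iff $\delta_2(s,v)\in F_2$ for some $s\in S$, iff there is $u\in\mathcal{L}(M_1)$ with $s=\delta_2(q_0^{(2)},u)$ and $\delta_2(s,v)\in F_2$, iff there is $u\in\mathcal{L}(M_1)$ with $\delta_2(q_0^{(2)},uv)\in F_2$, i.e.\ $uv\in\mathcal{L}(M_2)$.

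The construction is essentially routine, so there is no serious obstacle; the two things to state carefully are the orientation of the quotient operator (which word is the prefix, and which automaton it must be accepted by) and the reason a \emph{multiple}-entry automaton is the natural output here: keeping the state set equal to $Q_2$ and pushing all the information from $M_1$ into the choice of entry point avoids the determinization blow-up one would otherwise incur. The degenerate cases are handled automatically: if $\mathcal{L}(M_1)=\emptyset$ then $S=\emptyset$ and $\mathcal{L}(N)=\emptyset=\mathcal{L}(M_2)\backslash\emptyset$, and if $\lambda\in\mathcal{L}(M_1)$ then $q_0^{(2)}\in S$ and $\mathcal{L}(M_2)\subseteq\mathcal{L}(N)$, both as required.
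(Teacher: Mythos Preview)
Your argument is correct and uses the same core device as the paper: build the product of $M_1$ and $M_2$, do a reachability search from the joint start state, and take as entry points those reachable pairs whose $M_1$-component is accepting. The one difference is where the resulting MEFA lives. The paper keeps it on the full product $Q_1\times Q_2$, with the reachable pairs in $F_1\times Q_2$ as initial states and $Q_1\times F_2$ as final states (their write-up actually swaps the roles of $M_1$ and $M_2$ here, but the intent is clear). You instead project the entry information down to $Q_2$ and output $(Q_2,\Sigma,\delta_2,S,F_2)$. Your version is a clean simplification and yields a smaller automaton; the paper's version retains, in each entry state, a record of the $M_1$-state reached by the witnessing prefix, but that extra information is never used downstream.
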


\begin{proof}
Let $M_1^{} = (Q_1^{}, \Sigma, \delta_1^{}, q_{01}^{}, F_1^{})$ and
$M_2^{} = (Q_2^{}, \Sigma, \delta_2^{}, q_{02}^{}, F_2^{})$,
where $\delta_i : Q_i \times \Sigma \rightarrow Q_i$ are the transition functions 
and $\delta_i^*$ denotes their extensions to $Q_i \times \Sigma^*$. 

Let $L_{quo} = \mathcal{L}(M_2) \backslash \mathcal{L}(M_1)$. Then 
\begin{align*}
    y \in L_{quo} &\Leftrightarrow \exists x \in \mathcal{L}(M_2) \text{ such that } xy \in \mathcal{L}(M_1) \text{ by definition of left quotient} \\
    & \Leftrightarrow \delta_2^*(q_{02}, x) = q \in F_2 \text{ and } \delta_1^*(xy, q_{01}) = p' \in F_1 \\
    & \Leftrightarrow  \delta_2^*(q_{02}, x) = q \in F_2 \text{ and } \delta_1^*(x, q_{01}) = p \in Q_1 \text{ and } \delta_1^*(y, p) = p'
\end{align*}

Let $P$ be the product transition system of the two automata, i.e., $$P
\; = \; (Q_1^{} \times Q_2^{}, \Sigma, \delta, (q_{01}^{} ,
q_{02}^{}))$$ where $\delta$ is defined as \[ \delta((r, r'), c) = (
\delta_1^{}(r, c) , \, \delta_2^{}(r', c) ) \] for all~$c \in \Sigma$,
$r \in Q_1^{}$ and $r' \in Q_2^{}$. We can assume that states which
are not reachable from~$(q_{01}^{} , q_{02}^{})$ have been removed.
This can be done in time linear in the size of the transition graph.

Now in terms of the product transition system we can say that a string~$y$
belongs to~$L_{quo}^{}$ if and only if there exists a string~$x$ and
states $(p,q) \in Q_1 \times F_2$ and  $(p',q') \in F_1 \times Q_2$ such that 
\[ \delta_{}^* ((q_{01}^{} , q_{02}^{}), \, x) ~ = ~(p, q) \text{  and  } \delta_{}^*((p, q), y) ~ = ~ (p', q') \] 
We can now convert $P$ into a multiple-entry DFA~(MEFA). In the above case, $(p,q)$ has to be
one of the initial states of the new MEFA, and $(p', q')$ one
of its final states. Therefore, the states that are reachable
from $(q_{01}^{} , q_{02}^{})$ that are in $F_2^{} \times Q_1^{}$ will be
the initial states of the MEFA and $F_1^{} \times Q_2^{}$ will be the
final states of the~MEFA.
\end{proof}

\begin{lem}{\label{MEFAIntersection}}
Given two MEFAs, we can check whether their intersection is empty in polynomial time.
\end{lem}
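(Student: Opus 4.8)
The plan is to use the standard product construction, adapted to multiple entry states, followed by a reachability check. Write the two given MEFAs as $M_1 = (Q_1, \Sigma, \delta_1, I_1, F_1)$ and $M_2 = (Q_2, \Sigma, \delta_2, I_2, F_2)$, where $I_1 \subseteq Q_1$ and $I_2 \subseteq Q_2$ are their sets of entry states. First I would form the product transition system $P = (Q_1 \times Q_2, \Sigma, \delta)$ with $\delta((r, s), c) = (\delta_1(r, c),\, \delta_2(s, c))$ for all $c \in \Sigma$, exactly as in the proof of Lemma~\ref{FindMEFA}, and then designate $I_1 \times I_2$ as its entry states and $F_1 \times F_2$ as its final states.

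Next I would argue that $\mathcal{L}(P) = \mathcal{L}(M_1) \cap \mathcal{L}(M_2)$. A string $w$ lies in $\mathcal{L}(M_1) \cap \mathcal{L}(M_2)$ iff there are $i_1 \in I_1$ and $i_2 \in I_2$ with $\delta_1^*(i_1, w) \in F_1$ and $\delta_2^*(i_2, w) \in F_2$. Because each $\delta_j$ is deterministic, the run of $P$ from the entry state $(i_1, i_2)$ on $w$ ends precisely in $(\delta_1^*(i_1, w),\, \delta_2^*(i_2, w))$; hence this holds iff $P$ has a run from some entry state to some state in $F_1 \times F_2$ labelled by $w$, i.e. iff $w \in \mathcal{L}(P)$. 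This equivalence is the one place where I would be slightly careful, since with multiple entry states the two automata may ``choose'' different entry states to accept $w$; the point is that the product runs both choices in lock-step, so tracking the pair $(i_1, i_2)$ simultaneously is exactly what is needed, and no subset construction or determinization is required.

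Finally, emptiness of $\mathcal{L}(P)$ reduces to graph reachability: $\mathcal{L}(P) = \emptyset$ iff no state of $F_1 \times F_2$ is reachable, in the underlying directed graph of $P$, from any state of $I_1 \times I_2$. This is decided by a single breadth-first (or depth-first) search started from the set $I_1 \times I_2$, in time linear in the size of $P$. Since $P$ has at most $|Q_1|\cdot|Q_2|$ states and at most $|Q_1|\cdot|Q_2|\cdot|\Sigma|$ transitions, both the construction of $P$ and the search run in time polynomial in $|M_1|$ and $|M_2|$, which gives the claimed polynomial-time bound. I do not expect any genuine obstacle here; the only real content is the semantic observation in the previous paragraph together with the remark that plain reachability in the product suffices.
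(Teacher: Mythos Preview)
Your proposal is correct and follows essentially the same approach as the paper: form the product transition system with entry states $I_1 \times I_2$ and final states $F_1 \times F_2$, then decide emptiness by a reachability search (the paper uses DFS) in time linear in the size of the product. Your write-up is in fact a bit more explicit than the paper's on both the semantic justification and the polynomial size bound.
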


\begin{proof}
Consider two MEFAs $A_1^{} \; = \; (Q_1^{}, \Sigma, \delta_1^{},
Q_{s_1}^{}, F_{A_1}^{})$ and $A_2^{} \; = \; (Q_2^{}, \Sigma,
\delta_2^{}, Q_{s_2}^{}, F_{A_2}^{})$. $Q_{s_1}^{}$ and
$Q_{s_2}^{}$ may include more than one initial state.
A string~$w$ is accepted by both MEFAs if and only if
there exist states 
$q_{init}^1 \; \in \; Q_{s_1}^{}$ and
$q_{init}^2 \; \in \; Q_{s_2}^{}$ such that
$$\delta_1^*( q_{init}^1 , w ) \; \in \; F_{A1}^{} \text{ and }
\delta_2^*( q_{init}^2 , w ) \; \in \; F_{A2}^{}.$$

To find such a string~$w$, we take the product transition system of the
two MEFAs, named as $T$, i.e., $T \; = \; (Q_1^{} \times Q_2^{},
\Sigma, \delta, (Q_{s_1}^{} \times Q_{s_2}^{}))$ where
\[ \delta((r, r'), c) = (
\delta_1^{}(r, c) , \, \delta_2^{}(r', c) ) \] for all~$c \in \Sigma$,
$r \in Q_1^{}$ and $r' \in Q_2^{}$.  A
string~$w$ is accepted by both of the MEFAs $A_1^{}$ and $A_2^{}$
if and only if there exist states $p, q, p', q'$ such that
$p \; \in \; Q_{s_1}^{}$, $q \; \in \; Q_{s_2}^{}$,
$p' \; \in \; F_{A_1}$ and $q' \; \in \; F_{A_2}$, and \[ \delta_{}^*((p , q) , w) ~ = ~ (p', q')
\] We can now apply \emph{depth-first search (DFS)} to check, in time
linear in the size of~$T$, 
if there exists a path from some state in $Q_{s_1}^{} \times Q_{s_2}^{}$ to a state 
in~$F_{A_1} \times F_{A_2}$.
\end{proof}
\vspace{0.1cm}
\begin{lem}{\label{FindingStringsThruIntersection}}
The following problem is decidable in polynomial time:

  \begin{quote}
    \begin{quote}
  \begin{description}[align=right]
  \item[{\bf Input}] DFAs $M^{}$ and $N^{}$.\\
  \item[{\bf Question}] Do there exist strings $x, y, z$ such that $x \neq y$,
    $x, y \in \mathcal{L}( M^{} )$, \emph{and}
    \mbox{$xz, yz \in \mathcal{L}( N^{} )$}?
  \end{description}
    \end{quote}
  \end{quote}
\end{lem}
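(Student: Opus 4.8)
The plan is to reduce the question to a reachability test in a product automaton whose size is polynomial in $|M|$ and $|N|$ (assume $M,N$ are over a common alphabet $\Sigma$, with $N=(Q_N,\Sigma,\delta_N,q_{0N},F_N)$). First I would precompute the relation
\[
\mathrm{Good} \;=\; \{\,(p,p')\in Q_N\times Q_N \;\mid\; \exists z\in\Sigma^*:\ \delta_N^*(p,z)\in F_N\ \text{and}\ \delta_N^*(p',z)\in F_N\,\},
\]
which is exactly the set of states backward-reachable from $F_N\times F_N$ in the product automaton $N\times N$, hence computable by one graph search in time $O(|\Sigma|\,|Q_N|^2)$. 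The reason for this step is the reformulation it gives: strings $x,y,z$ as required exist if and only if there are \emph{distinct} $x,y\in\mathcal L(M)$ with $(\delta_N^*(q_{0N},x),\ \delta_N^*(q_{0N},y))\in\mathrm{Good}$ — given such $x,y$ the common continuation $z$ is recovered from $\mathrm{Good}$, and conversely any such $z$ witnesses membership in $\mathrm{Good}$.

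Next I would build a single automaton $\mathcal A$ that simulates $M$ and $N$ simultaneously on $x$ and on $y$ while checking everything that remains. Because $x$ and $y$ may have different lengths, $\mathcal A$ runs over the alphabet $((\Sigma\cup\{\#\})\times(\Sigma\cup\{\#\}))\setminus\{(\#,\#)\}$ with a fresh padding symbol $\#$, under the convention that on each of the two tracks the $\#$'s (if any) occur only as a suffix and the genuine string on that track is its prefix of letters from $\Sigma$. A state of $\mathcal A$ is a tuple $(s_1^M,s_1^N,s_2^M,s_2^N,f,d_1,d_2)$ in $Q_M\times Q_N\times Q_M\times Q_N\times\{\mathrm{eq},\mathrm{neq}\}\times\{0,1\}\times\{0,1\}$, where $s_1^M,s_1^N$ track the runs of $M$ and $N$ on $x$ and $s_2^M,s_2^N$ those on $y$; $d_i$ records whether track $i$ has already emitted a $\#$ (after which it may emit only $\#$, which forbids a $\#$ in the interior of a string); and $f$ turns to $\mathrm{neq}$ the first time a pair $(a,b)$ with $a\neq b$ is read, with $\#$ treated as distinct from every letter of $\Sigma$. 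I would declare a state accepting when $f=\mathrm{neq}$, $s_1^M\in F_M$, $s_2^M\in F_M$ and $(s_1^N,s_2^N)\in\mathrm{Good}$, and output ``yes'' exactly when some accepting state is reachable from the initial state $(q_{0M},q_{0N},q_{0M},q_{0N},\mathrm{eq},0,0)$.

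Verifying this characterization is routine: a path from the initial state to an accepting state spells $x$ on the first track and $y$ on the second; the $d_i$-discipline forces $x,y$ to be honest strings over $\Sigma$, read to whatever length the path chooses; $f=\mathrm{neq}$ forces $x\neq y$; and the four state components together with $\mathrm{Good}$-membership give $x,y\in\mathcal L(M)$ and a common $z$ with $xz,yz\in\mathcal L(N)$. For complexity, $\mathcal A$ has $8\,|Q_M|^2|Q_N|^2$ states and at most $(|\Sigma|+1)^2$ transitions per state, all computable in polynomial time from $M$, $N$ and the precomputed relation $\mathrm{Good}$, and testing reachability of an accepting state is a linear-time search; so the whole algorithm runs in polynomial time.

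The step I expect to be the main obstacle is handling the distinctness requirement $x\neq y$ correctly. Since $x$ and $y$ need not have equal length one cannot simply read them synchronously, and the padding symbol together with the ``$\#$'s form a suffix'' discipline (the bits $d_1,d_2$) is exactly what makes ``$x\neq y$'' expressible through the single flag $f$, treating equal-length and different-length witnesses uniformly; in particular it rules out the false positive in which a $\#$ inserted in the interior of an otherwise identical pair of strings would masquerade as a real difference. Everything else is a standard product-automaton argument, of the same flavour as Lemmas~\ref{FindMEFA} and~\ref{MEFAIntersection}.
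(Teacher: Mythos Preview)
Your argument is correct and takes a genuinely different route from the paper's. The paper proceeds by a case analysis on the (non-dead) states of $N$: it calls a state $q$ \emph{green} if $|\mathcal L(N^{F=\{q\}})\cap\mathcal L(M)|>1$ and \emph{orange} if this cardinality is exactly~$1$; a solution exists either because some green state exists (then two distinct $x,y\in\mathcal L(M)$ reach the same $q$, and any $z$ from $q$ to $F_N$ works) or, failing that, because some pair of distinct orange states $q',q''$ admits a common continuation $z$ into $F_N$. Thus the paper iterates over states and then over pairs of states of $N$, invoking a product-and-emptiness test for each. You instead collapse everything into a single reachability test: you precompute the relation $\mathrm{Good}\subseteq Q_N\times Q_N$ of pairs admitting a common continuation, then build one product automaton over (essentially) $Q_M^2\times Q_N^2$ with the padding discipline to encode $x\neq y$, and check reachability of the acceptance predicate. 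Your construction is more uniform and avoids the green/orange case split altogether; the paper's version, in exchange, makes the structure of solutions more visible by separating the ``$x$ and $y$ reach the same $N$-state'' case from the ``different $N$-states'' case. Both give polynomial-time algorithms of comparable order. One small point you leave implicit: when a track reads the padding symbol $\#$, the corresponding $M$- and $N$-components must be frozen; this is standard and clearly intended, but worth saying explicitly.
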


\begin{proof}
Suppose there exist strings $x, y, z$ such that $x \neq y$, $x, y \in
\mathcal{L}( M^{} )$, \emph{and} \mbox{$xz, yz \in \mathcal{L}( N^{}
  )$}. We call the triple $(x, y, z)$ a \emph{solution}. Thus, we have two cases:

\begin{enumerate}[label=(\roman*)]
\item Both $x$ and $y$ start from an initial state $q_0$ and reach the
  same state, $q$, in $N$, i.e., \[ \exists\; q: \delta_{}^*(q_0^{N},
  x) = \delta_{}^*(q_0^{N}, y) = q \; \; \mathrm{and} \; \; \delta_{}^*(q, z) \in
  F^{N}. \]
\item $x$ and $y$ reach different states, say $q^{\prime}$~and~$q^{\prime\prime}$,
  in~$N$, i.e., \[ \exists\; q^{\prime}, q^{\prime\prime}: \delta_{}^*(q_0^{N},
  x) = q^{\prime} \neq q^{\prime\prime} = \delta_{}^*(q_0^{N}, y) \; \; \mathrm{and} \; \;
  \delta_{}^*(q^{\prime}, z) \in F^{N} \land \delta_{}^*(q^{\prime\prime}, z) \in
  F^{N}. \]
\end{enumerate}

Let $A=(Q, \Sigma, \delta, s, F)$ be a DFA and $p$ be a state in
$A$. By $A^{F = \{ p \}}$, we denote a replication of $A$, with
the sole difference of $p$ being the only accepting state of
$A$. Thus $N^{F = \{ q \}}$ denotes a replication of
$N$, with $q$ being the accepting state of $N$. Then, we classify
these states of~$N$ which \emph{are not dead states} into GREEN,
ORANGE and BLUE states. Note that confirming the status of~$q$
being a dead state can be done in linear time w.r.t.\ to the size
of graph.
\begin{itemize}
\item \emph{GREEN states:} $\big\{ q \mid \; \scalebox{1.2}{$\mid$}  \mathcal{L}( N^{F=\{ q \}}) \cap \mathcal{L}( M^{}) \scalebox{1.2}{$\mid$} > 1 \big\}$. \\
  
  The state~$q$ mentioned in case~(i) is a GREEN state. (See Figure~\ref{fig:sub1})

\item \emph{ORANGE states:} $\big\{ q \mid \; \scalebox{1.2}{$\mid$}  \mathcal{L}( N^{F=\{ q \}}) \cap \mathcal{L}( M^{}) \scalebox{1.2}{$\mid$} = 1 \big\}$. \\
  
Suppose that case (i) does not apply, i.e., there are no GREEN states in~$N$. 
Then case (ii) must apply and the states $q^{\prime}$
and $q^{\prime\prime}$ must be ORANGE states; in other words, the intersection of
$\mathcal{L}( M^{})$ individually with the two DFAs, $\mathcal{L}(
N^{F=\{ q^{\prime} \}})$ and $\mathcal{L}( N^{F=\{ q^{\prime\prime} \}})$ gives us
exactly 1 string for each. Note also that $x$ and $y$ are two
strings in~$\mathcal{L}( M^{})$ which are not equal to each other since
$q_{}^{\prime} \; \neq \; q_{}^{\prime\prime}$. (See Figure~\ref{fig:sub2})

\item \emph{BLUE states:} $\big\{ q \mid \; \scalebox{1.2}{$\mid$}
  \mathcal{L}( N^{F=\{ q \}}) \cap \mathcal{L}( M^{})
  \scalebox{1.2}{$\mid$} = 0 \big\}$. \\
\end{itemize}


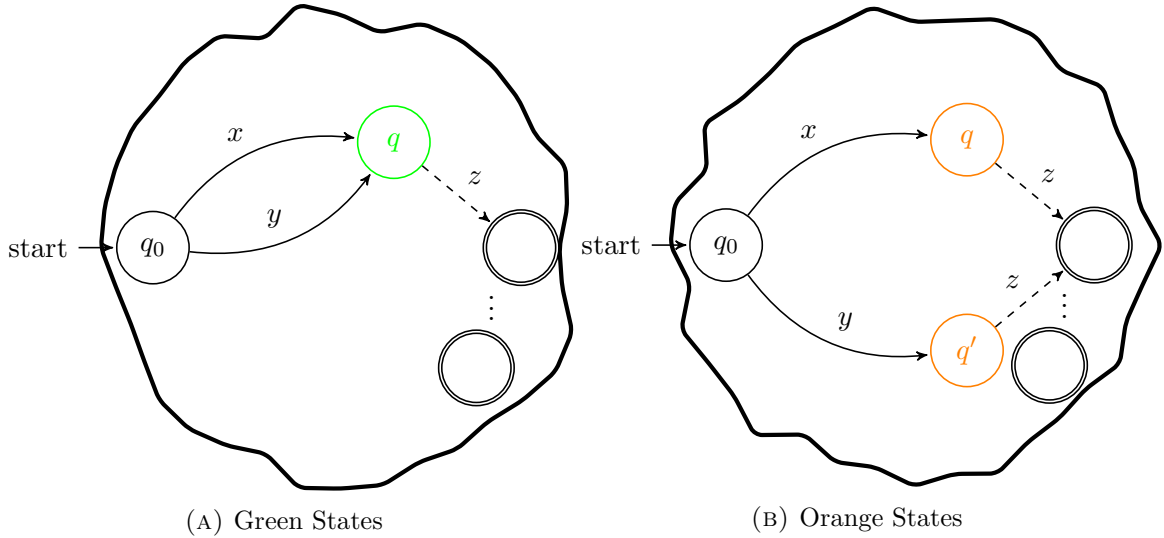
\begin{figure}[h]
\centering
\begin{subfigure}{.5\textwidth}
\centering
\begin{tikzpicture}[->,>=stealth',shorten >=1pt,auto,node distance=1.8cm,
                    semithick]
  \coordinate (c) at (2.5,0);
  \draw[black,ultra thick,rounded corners=1mm] (c) \irregularcircle{3.1cm}{2mm};
  \node[initial,state, circle] (A)                    {$q_0$};
  	
  \node[state, draw=none]	     (Fi) [right of=A] {};
  \node[state, green]	     (F1) [above right=1 cm of Fi] {$q$};
  \node[state, accepting]			     (P1) [right=2.1 cm of Fi] {}; 
  \node[state, accepting]	     (P2) at (4.3,-1.60) {} ; 
  \node[state, draw=none]	     (D) at (4.5, -0.7) {$\vdots$};
  
\path 
	(A) edge [bend left] node{$x$} (F1)
	      edge [bend right] node{$y$} (F1)
	(F1) edge[dashed] node{$z$} (P1);
  
\end{tikzpicture}
\caption{Green States}
  \label{fig:sub1}
\end{subfigure}%
\begin{subfigure}{.5\textwidth}
\centering
\begin{tikzpicture}[->,>=stealth',shorten >=1pt,auto,node distance=1.8cm,
                    semithick]
  \coordinate (c) at (2.5,0);
  \draw[black,ultra thick,rounded corners=1mm] (c) \irregularcircle{3.1cm}{2mm};
  \node[initial,state, circle] (A)                    {$q_0$};
  	
  \node[state, draw=none]	     (Fi) [right of=A] {};
  \node[state, orange]	     (F1) [above right=1 cm of Fi] {$q$};
  \node[state, orange]	     (Fn) [below right=1 cm of Fi] {$q^{\prime}$}; 
  \node[state, accepting]			     (P1) [right=2.1 cm of Fi] {}; 
  \node[state, accepting]	     (P2) at (4.3,-1.60) {} ; 
  \node[state, draw=none]	     (D) at (4.5, -0.7) {$\vdots$};
  
\path 
	(A) edge [bend left] node{$x$} (F1)
	      edge [bend right] node{$y$} (Fn)
	(F1) edge[dashed] node{$z$} (P1)
	(Fn)  edge [dashed] node{$z$} (P1);
  
\end{tikzpicture}
\caption{Orange States}
  \label{fig:sub2}
\end{subfigure}
\caption{Different Structures of Green and Orange States.}
\label{fig:GOStates}
\end{figure}

The algorithm for finding the triple $(x, y, z)$ is constructed
as follows. First, we identify the green and orange states. If
there exists a green state, then we have a solution. Otherwise we
explore whether there exists a $z$ such that $\delta_{}^*(q^{\prime},
z) \in F^{N} \land \delta_{}^*(q^{\prime\prime}, z) \in F^{N}$
for orange states~$q_{}^{\prime}$
and~$q_{}^{\prime\prime}$, i.e., we check whether \[ \big\{ \, z \; \big| \; \exists (q^{\prime}, q^{\prime\prime}) : \, q^{\prime}, q^{\prime\prime} \text{ are
  orange states } \land \; \delta_{}^*(q^{\prime},z) \in F^{N} \; \land \; 
  \delta_{}^*(q^{\prime\prime},z) \in F^{N} \big\} \] is empty.

Given orange states
$q^{\prime}$ and $q^{\prime\prime}$, we use DFA intersection to check whether there is
a string~$z$ that takes both to an accepting state.
Let $N_{s=\{q^{\prime}\}}$ denote a
replication of $N$, with the difference of $q^{\prime}$ being the initial state
of~$N$. $N_{s=\{q^{\prime\prime}\}}$ is similar to the $N_{s=\{q^{\prime}\}}$, but
this time $q^{\prime\prime}$ is the initial state. After creating these two
DFAs, we can find if there exists a string~$z$ by intersecting the
DFA~$N_{s=\{q^{\prime}\}}$ with~$N_{s=\{q^{\prime\prime}\}}$. 
This process may have to be repeated for every tuple $(q^{\prime},
q^{\prime\prime})$ of orange states. \qedhere
\end{proof}

  \ignore{
        Given $x, y \in \mathcal{L}( M_1^{} )$, \emph{and} \mbox{$xz, yz \in \mathcal{L}( M_2^{} )$}, we can extract $z$ by taking the quotient of languages $\mathcal{L}( M_1^{} )$ and $\mathcal{L}( M_2^{} )$ , such that $z \in 		\mathcal{L}( M_2^{} ) \backslash \mathcal{L}( M_1^{} )$.
        
        The quotient of the languages $\mathcal{L}( M_1^{} )$ and $\mathcal{L}( M_2^{} )$ can be find by using \emph{Lemma~\ref{FindMEFA}}. Pay attention to the fact that the quotient of the languages creates a \emph{MEFA}, 		which has more than 1 initial state.
        
        For such a $z$, there exists different cases to discover $x$ and $y$'s, such that $x \neq y$:
        \begin{enumerate}[label=\roman*.]
        \item $\exists x_1,\; x_2 \in \mathcal{L}( M_1^{} )$ such that $x_1 z \in \mathcal{L}( M_2^{} )$ and $x_2 z \in \mathcal{L}( M_2^{} )$. 
        
        Therefore, $| \mathcal{L}( M_2^{} ) \cap \mathcal{L}( M_1^{} ) \circ \{ z\} | > 1$. \ignore{cdot}
        \item $\exists y_1,\; y_2 \in \mathcal{L}( M_1^{} )$ such that $y_1 z \in \mathcal{L}( M_2^{} )$ and $y_2 z \in \mathcal{L}( M_2^{} )$. 
        
        Same construction with the previous case, with the difference of $x$ changed with $y$. Preserves the result of $| \mathcal{L}( M_2^{} ) \cap \mathcal{L}( M_1^{} ) \circ \{ z\} | > 1$. 
        \item ...
        \end{enumerate}
        }


\begin{lem}{\label{CEOneMappingMonadicLemma1}}
Let $\mu, \omega, X, Y \in IRR(R)$. Then $\mu X  \downarrow  \omega Y$ if and only
if there exist strings~$X^{\prime}, Y^{\prime}, W, \gamma$ such that
\begin{enumerate}

\item $\gamma \in MP(\mu) \cup MP(\omega)$,

\item $X = X^{\prime} W$, $Y = Y^{\prime} W$, \emph{and}

\item $\mu X^{\prime} \stackrel{!}{{\longrightarrow}_{R}^{}} ~ \gamma \; \; ~ {}_R^{}{\!{\stackrel{!}{\longleftarrow}}} \; \omega Y^{\prime}$.

\end{enumerate}


\end{lem}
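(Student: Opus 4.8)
The plan is to prove the biconditional by treating the two directions separately, using the structure of $R$-normal forms of products in a monadic convergent system --- the ``monadic cone'' of Figure~\ref{monadicCone}, i.e.\ the analysis behind Lemma~3.6 of \cite{OND98} that is already used in the proof of Theorem~\ref{CEFirstTheorem}.

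The ``if'' direction is immediate. Given $X', Y', W, \gamma$ satisfying (1)--(3): since $\rightarrow_R$ is a congruence with respect to concatenation, $\mu X = \mu X' W \rightarrow_R^{*} \gamma W$ and $\omega Y = \omega Y' W \rightarrow_R^{*} \gamma W$, so $\mu X$ and $\omega Y$ rewrite to a common string and hence, by confluence, $\mu X \downarrow \omega Y$. No case analysis is needed here.

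For the ``only if'' direction I would first recall the monadic-cone fact: for irreducible $\mu$ and $X$ there is a decomposition $\mu = \mu_1\mu_2$, $X = X_1 X_2$ and a symbol $a \in \Sigma \cup \{\lambda\}$ with $\mu_2 X_1 \rightarrow_R^{!} a$, and the $R$-normal form of $\mu X$ is $\mu_1 a X_2$. Two observations drive the argument: (i) $\mu_1 a$ is a prefix of the irreducible string $\mu_1 a X_2$, hence itself irreducible, so $\mu_1 a = (\mu_1 a)\!\downarrow \in MP(\mu)$; and (ii) $\mu X_1 = \mu_1\mu_2 X_1 \rightarrow_R^{*} \mu_1 a$, and since $\mu_1 a$ is irreducible this is actually $\mu X_1 \rightarrow_R^{!} \mu_1 a$. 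Apply this to both $\mu X$ and $\omega Y$, obtaining normal forms $\mu_1 a X_2$ and $\omega_1 b Y_2$; by confluence and the hypothesis $\mu X \downarrow \omega Y$ these are equal, say to $N$. Since $X_2$ and $Y_2$ are both suffixes of $N$, one is a suffix of the other, and the two cases are symmetric. If $X_2$ is a suffix of $Y_2$, write $Y_2 = Z X_2$; cancelling $X_2$ on the right of $\mu_1 a X_2 = \omega_1 b Z X_2$ gives $\mu_1 a = \omega_1 b Z$. Put $\gamma := \mu_1 a \in MP(\mu)$, $W := X_2$, $X' := X_1$, $Y' := Y_1 Z$. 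Then $X = X_1 X_2 = X' W$ and $Y = Y_1 Y_2 = Y_1 Z X_2 = Y' W$, which is (2); $\mu X' = \mu X_1 \rightarrow_R^{!} \mu_1 a = \gamma$ by (ii); and $\omega Y' = \omega_1\omega_2 Y_1 Z \rightarrow_R^{*} \omega_1 b Z = \mu_1 a = \gamma$, which is irreducible, so $\omega Y' \rightarrow_R^{!} \gamma$. This gives (1)--(3). The symmetric case, with $Y_2$ a proper suffix of $X_2$, yields $\gamma = \omega_1 b \in MP(\omega)$ instead.

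The routine ``if'' direction and the suffix-cancellation bookkeeping are easy; the real content is the ``only if'' direction. Within it, the main point to get right is the invocation of the monadic-cone decomposition for both products and the repeated appeal to the principle that a factor of an irreducible string is irreducible, which is what upgrades each reduction $\rightarrow_R^{*}$ into a reduction to normal form $\rightarrow_R^{!}$. Monadicity is used exactly once, and essentially: it is precisely what forces the ``active'' middle part of the reduction of $\mu X$ to collapse to the single symbol $a$ rather than to a longer string, so that the cone picture is valid.
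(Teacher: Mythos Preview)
Your proposal is correct and follows exactly the approach the paper has in mind: the paper's own proof simply points to Lemma~3.6 of \cite{OND98}, i.e.\ the monadic-cone analysis already used in Theorem~\ref{CEFirstTheorem}, and the detailed argument you give (decompose both products via the cone, compare the suffixes $X_2$ and $Y_2$, and set $\gamma$ to the longer of $\mu_1 a$, $\omega_1 b$) is precisely the argument the paper later spells out in full for the closely related Lemma~\ref{CEOneMappingMonadicLemma5}.
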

\begin{proof}
This proof follows from~\cite{OND98}
(see Lemma~3.6).
\end{proof}


\begin{lem}{\label{CEOneMappingMonadicLemma2}}
\openup 0.5em
Let $\alpha_1^{}, \alpha_2^{}, \beta_1^{} , \beta_2^{} , X, Y \in IRR(R)$. Then $\alpha_1^{} X  \downarrow  \alpha_2^{} Y$
and $\beta_1^{} X  \downarrow  \beta_2^{} Y$ if and only
if there exist strings~$X^{\prime}, Y^{\prime}, V, W, \gamma_1^{}, \gamma_2^{}$ such that

\openup -0.5em
\begin{enumerate}

\item $\gamma_1^{} \in MP(\alpha_1^{}) \cup MP(\alpha_2^{})$,

\item $\gamma_2^{} \in MP(\beta_1^{}) \cup MP(\beta_2^{})$,

\item $X = X^{\prime} V W$, $Y = Y^{\prime} V W, \; \; $ \emph{and}

\item either \begin{itemize}

  \item[(a)] $\alpha_1^{} X^{\prime} V \stackrel{!}{{\longrightarrow}_{R}^{}} ~ \gamma_1^{} \; \; ~ {}_R^{}{\!{\stackrel{!}{\longleftarrow}}} \;
  \alpha_2^{} Y^{\prime} V \; \; $ \emph{and}

  \item[(b)] $\; \; \; \beta_1^{} X^{\prime} \stackrel{!}{{\longrightarrow}_{R}^{}} ~ \gamma_2^{} \; \; ~ {}_R^{}{\!{\stackrel{!}{\longleftarrow}}} \;
  \beta_2^{} Y^{\prime}$.
  
\end{itemize} or \begin{itemize}

  \item[(c)] $\; \; \; \alpha_1^{} X^{\prime} \stackrel{!}{{\longrightarrow}_{R}^{}} ~ \gamma_1^{} \; \; ~ {}_R^{}{\!{\stackrel{!}{\longleftarrow}}} \;
  \alpha_2^{} Y^{\prime} \; \; $ \emph{and}

  \item[(d)] $\beta_1^{} X^{\prime} V \stackrel{!}{{\longrightarrow}_{R}^{}} ~ \gamma_2^{} \; \; ~ {}_R^{}{\!{\stackrel{!}{\longleftarrow}}} \;
  \beta_2^{} Y^{\prime} V$.
  \end{itemize}

\end{enumerate}

\begin{proof}
This proof also follows from~\cite{OND98}
(see Lemma~3.6).
\end{proof}


\end{lem}

\vspace{0.2in}

\begin{cor}{\label{CorB3Solvability}}
\openup 0.5em
  Let $\alpha_1^{}, \alpha_2^{}, \beta_1^{} , \beta_2^{} \in
  IRR(R)$. Then there exist irreducible strings~$X$~and~$Y$ such that
  $\alpha_1^{} X \downarrow \alpha_2^{} Y$ and $\beta_1^{} X
  \downarrow \beta_2^{} Y$ if and only if there exist
  strings~$X^{\prime}, Y^{\prime}, V, \gamma_1^{}, \gamma_2^{}$
  such that

\openup -0.5em
\begin{enumerate}

\item $\gamma_1^{} \in MP(\alpha_1^{}) \cup MP(\alpha_2^{})$,

\item $\gamma_2^{} \in MP(\beta_1^{}) \cup MP(\beta_2^{})$,

\item $X^{\prime} V$ and $Y^{\prime} V$ are irreducible, \emph{and} 

\item either \begin{itemize}

  \item[(a)] $\alpha_1^{} X^{\prime} V \stackrel{!}{{\longrightarrow}_{R}^{}} ~ \gamma_1^{} \; \; ~ {}_R^{}{\!{\stackrel{!}{\longleftarrow}}} \;
  \alpha_2^{} Y^{\prime} V \; \; $ \emph{and}

  \item[(b)] $\; \; \; \beta_1^{} X^{\prime} \stackrel{!}{{\longrightarrow}_{R}^{}} ~ \gamma_2^{} \; \; ~ {}_R^{}{\!{\stackrel{!}{\longleftarrow}}} \;
  \beta_2^{} Y^{\prime}$.
  
\end{itemize} or \begin{itemize}

  \item[(c)] $\; \; \; \alpha_1^{} X^{\prime} \stackrel{!}{{\longrightarrow}_{R}^{}} ~ \gamma_1^{} \; \; ~ {}_R^{}{\!{\stackrel{!}{\longleftarrow}}} \;
  \alpha_2^{} Y^{\prime} \; \; $ \emph{and}

  \item[(d)] $\beta_1^{} X^{\prime} V \stackrel{!}{{\longrightarrow}_{R}^{}} ~ \gamma_2^{} \; \; ~ {}_R^{}{\!{\stackrel{!}{\longleftarrow}}} \;
  \beta_2^{} Y^{\prime} V$.\\
  \end{itemize}

\end{enumerate}
\end{cor}

\begin{proof}
The ``if'' part is obvious since
  $\alpha_1^{} X^{\prime} V \downarrow \alpha_2^{} Y^{\prime} V$ and $\beta_1^{} X^{\prime} V
  \downarrow \beta_2^{} Y^{\prime} V$ in both cases~(a) and~(b). The ``only if'' part follows
  from~Lemma~\ref{CEOneMappingMonadicLemma2}.
\end{proof}

\noindent
Recall the definition 
$RF(x, y) = \{ z \in IRR(R) ~ | ~ xz \rightarrow_R^{!} y \}.$
Thus Corollary~\ref{CorB3Solvability} can be restated as

\begin{lem}{\label{CEOneMappingMonadicLemma4}}
\openup 0.5em
  Let $\alpha_1^{}, \alpha_2^{}, \beta_1^{} , \beta_2^{} \in
  IRR(R)$. Then there exist irreducible strings~$X$~and~$Y$ such that
  $\alpha_1^{} X \downarrow \alpha_2^{} Y$ and $\beta_1^{} X
  \downarrow \beta_2^{} Y$ if and only if there exist
  strings~$X^{\prime}, Y^{\prime}, V, \gamma_1^{}, \gamma_2^{}$
  such that

\openup -0.5em
\begin{enumerate}

\item $\gamma_1^{} \in MP(\alpha_1^{}) \cup MP(\alpha_2^{})$,

\item $\gamma_2^{} \in MP(\beta_1^{}) \cup MP(\beta_2^{})$,

\item either\\[-10pt] \begin{itemize}
  
  \item[(a)] $X^{\prime} V \, \in \, RF(\alpha_1, \gamma_1^{}) $, $~ Y^{\prime} V \, \in \, RF(\alpha_2, \gamma_1^{}) $, 
             $~ X^{\prime} \, \in \, RF(\beta_1, \gamma_2^{})$ \emph{and} $~ Y^{\prime} \, \in \, RF(\beta_2, \gamma_2^{})$
  
\end{itemize} or \begin{itemize}

  \item[(b)] $X^{\prime} \, \in \, RF(\alpha_1, \gamma_1^{}) $, $~ Y^{\prime} \, \in \, RF(\alpha_2, \gamma_1^{}) $, 
             $~ X^{\prime} V  \, \in \, RF(\beta_1, \gamma_2^{})$ \emph{and} $~ Y^{\prime} V \, \in \, RF(\beta_2, \gamma_2^{})$

\end{itemize}

\end{enumerate}

\end{lem}

\noindent
Note that the existence of~$X^{\prime}$, $Y^{\prime}$ and~$V$ in 
statements~(a) and~(b) in the above lemma can be formulated in terms of the
left quotient operation. Thus 

\begin{thm}{\label{CETwoMappingsTheorem}}
\openup 0.5em
  Let $\alpha_1^{}, \alpha_2^{}, \beta_1^{} , \beta_2^{} \in
  IRR(R)$. Then there exist irreducible strings~$X$~and~$Y$ such that
  $\alpha_1^{} X \downarrow \alpha_2^{} Y$ and $\beta_1^{} X
  \downarrow \beta_2^{} Y$ if and only if there exist
  strings~$\gamma_1^{}, \gamma_2^{}$
  such that

\openup -0.5em
\begin{enumerate}

\item $\gamma_1^{} \in MP(\alpha_1^{}) \cup MP(\alpha_2^{})$,

\item $\gamma_2^{} \in MP(\beta_1^{}) \cup MP(\beta_2^{})$,

\item either\\[-10pt] \begin{itemize}
  
\item[(a)] $RF(\beta_1, \gamma_2^{}) \, \backslash \, RF(\alpha_1, \gamma_1^{}) ~ \; \cap \; ~
            RF(\beta_2, \gamma_2^{}) \, \backslash \, RF(\alpha_2, \gamma_1^{}) ~ ~ \neq ~ ~ \emptyset$
  
\end{itemize} or \begin{itemize}

\item[(b)] $RF(\alpha_1, \gamma_1^{}) \, \backslash \, RF(\beta_1, \gamma_2^{}) ~ \; \cap \; ~
            RF(\alpha_2, \gamma_1^{}) \, \backslash \, RF(\beta_2, \gamma_2^{}) ~ ~ \neq ~ ~ \emptyset$

\end{itemize}

\end{enumerate}

\end{thm}

\begin{proof}
The proof follows from the above lemmas and corollaries.
\end{proof}

\subsection{One-mapping CE Problem for Monadic Systems}
The one-mapping case of the CE problem is decidable for monadic and
convergent string rewriting systems.  We can show it using a
construction similar to the two-mappings case. However it will be a
slightly simpler approach since we only have two input
strings~$\alpha$ and~$\beta$ as opposed to four.  The algorithm
for the one-mapping case also runs in polynomial time as explained below:

\begin{thm}{\label{CESecondTheorem}}
The following CE problem is decidable in polynomial time for monadic, finite and convergent string rewriting systems.
\begin{description}[align=left]
\item[Input] A string-rewriting system~$R$ on an alphabet~$\Sigma$, and irreducible
strings~$\alpha, \beta \in \Sigma_{}^*$.
\item[Question] Do there exist distinct irreducible strings~$X, Y \in IRR(R)$ such that 
\begin{enumerate}
\item[] $\alpha X \downarrow_R^{} \alpha Y \; \text{ and } \;
  \beta X \downarrow_R^{} \beta Y$?
\end{enumerate}
\end{description}
\end{thm}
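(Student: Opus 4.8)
The plan is to run the argument of Theorem~\ref{CEFirstTheorem} specialized to $\alpha_1 = \alpha_2 = \alpha$ and $\beta_1 = \beta_2 = \beta$ --- so that only two input strings, rather than four, enter the construction and the case analysis shrinks correspondingly --- and then to add one more ingredient that enforces $X \neq Y$. Concretely, I would first recall the monadic decomposition from Lemma~3.6 of~\cite{OND98}: because $R$ is monadic and convergent, the set $\Delta_\alpha := \{(X,Y) \in IRR(R)^2 \mid \alpha X \downarrow_R \alpha Y\}$ can be written effectively as a finite union, over polynomially many choices of a letter $a \in \Sigma \cup \{\lambda\}$ and of a split of $\alpha$, of ``shared-suffix'' products $\{(p\,q,\; r\,q) \mid p \in L_1,\ r \in L_2,\ q \in IRR(R)\}$, where each $L_i$ is obtained from the right-factor languages $RF(\cdot,\cdot)$ of~\cite{OND98} by concatenation with a string of length $\le|\alpha|$ and hence has a DFA of size polynomial in $|R|$ and $|\alpha|$ --- exactly as $\Delta(\alpha_1,\alpha_2)$ is built there. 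The same construction applied to $\beta$ yields $\Delta_\beta$, and a pair $(X,Y)$ solves the one-mapping instance without the distinctness requirement iff $(X,Y) \in \Delta_\alpha \cap \Delta_\beta$.

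The distinctness requirement then amounts to deciding whether $(\Delta_\alpha \cap \Delta_\beta) \setminus D \neq \emptyset$, where $D = \{(W,W) \mid W \in IRR(R)\}$ is the diagonal. The key observation is that each of $\Delta_\alpha$, $\Delta_\beta$, $D$, and $IRR(R)^2 \setminus D$ is recognized by a two-tape finite automaton of polynomial size that scans its two inputs \emph{synchronously from the right}: for a shared-suffix product one reads the common part $q$ right-to-left checking equality, nondeterministically guesses where $q$ ends, and then reads the (reversals of the) heads $p$ and $r$ independently in their respective regular languages, padding the shorter tape with blanks; for $D$ and for $IRR(R)^2 \setminus D$ this is immediate. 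Such relations are closed under intersection with polynomially bounded blow-up and have polynomial-time emptiness, so the procedure is: build the (polynomial-size) automata for $\Delta_\alpha$, $\Delta_\beta$, and $IRR(R)^2 \setminus D$, intersect, and test emptiness --- the same recipe as in Theorem~\ref{CEFirstTheorem}, now with an extra ``$\neq$'' factor. (Staying closer to the language-quotient bookkeeping of Theorem~\ref{CEFirstTheorem} one could instead, for each pair of Cartesian-product components with nonempty quotient intersection $(L_{i1} \backslash L_{j1}') \cap (L_{i2} \backslash L_{j2}')$, additionally decide whether a reconstructed witness can be taken off the diagonal; this reduces to testing whether a pair of regular languages is contained in $D$, i.e.\ whether they are one and the same singleton.)

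I expect the main obstacle to be the bookkeeping forced by the distinctness constraint. Two points need care: first, that \emph{right}-synchronization is genuinely needed --- the heads $p$ and $r$ of a shared-suffix product may differ in length without bound, so a left-to-right synchronized automaton would require unbounded lookahead to verify $q = q$, whereas right-synchronization reduces the check on the heads to membership in two ordinary regular languages; and second, that the passage to the ``minimal solutions'' $Sol(\cdot,\cdot)$ of~\cite{OND98} does not quietly discard off-diagonal witnesses --- one has to confirm that every solution pair, diagonal or not, is recovered as $(w_1 x,\, z_1 x)$ for a minimal $(w_1,z_1)$ and some $x \in IRR(R)$, so that $X \neq Y$ becomes equivalent to $w_1 \neq z_1$ inside the product. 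Once these are checked, the polynomial-time bound follows verbatim from that of Theorem~\ref{CEFirstTheorem}: polynomially many components, each with DFAs polynomial in $|R|$ and $\max(|\alpha|,|\beta|)$, a polynomial-size product automaton, and polynomial-time emptiness.
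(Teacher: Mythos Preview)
Your main approach via right-synchronous two-tape automata is sound and gives the polynomial bound, but it is a genuinely different route from the paper's. The paper does not work with pair-recognizing automata at all; it first proves a structural characterisation (Lemma~\ref{CEOneMappingMonadicLemma4} in the appendix): distinct irreducible $X,Y$ with $\alpha X\downarrow\alpha Y$ and $\beta X\downarrow\beta Y$ exist iff for some $\gamma_1\in MP(\alpha)$, $\gamma_2\in MP(\beta)$ and some string $V$ there are \emph{distinct} $X',Y'$ with $X',Y'\in RF(\beta,\gamma_2)$ and $X'V,Y'V\in RF(\alpha,\gamma_1)$ (or the symmetric case). Since $MP(\alpha)$, $MP(\beta)$ have polynomial size, this reduces everything, for each fixed $(\gamma_1,\gamma_2)$, to the concrete automata question of Lemma~\ref{FindingStringsThruIntersection}: given DFAs $M$ and $N$, are there $x\neq y\in\mathcal{L}(M)$ and a $z$ with $xz,yz\in\mathcal{L}(N)$? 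That lemma is proved by a state-colouring argument on $N$ (``green'' states reached by $\ge 2$ strings of $\mathcal{L}(M)$, ``orange'' by exactly one), not by any two-tape construction.

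What each approach buys: the paper's route is more elementary, stays within one-tape DFAs, and isolates a reusable combinatorial lemma; your route is more uniform --- once right-padded synchronous relations are known to be closed under intersection, the $X\neq Y$ constraint is just one more polynomial-size factor --- and makes it transparent that the one-mapping case is no harder than the two-mapping one. Two small points to tighten: your shared-suffix description of $\Delta_\alpha$ does not by itself force $pq,rq\in IRR(R)$, so you must explicitly intersect with $IRR(R)\times IRR(R)$ (this is right-synchronous, so harmless); and your parenthetical alternative --- reducing the off-diagonal check to ``are two regular languages the same singleton'' --- does not quite go through, because the shared suffix $V$ is an extra existential that has to be threaded through the distinctness test, which is precisely the content of Lemma~\ref{FindingStringsThruIntersection} rather than a singleton test.
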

\begin{proof}
This follows from Lemma~\ref{CEOneMappingMonadicLemma4} and \ref{unrestricted_rf_regular},
since we only need to check whether there are
strings $X^{\prime}, Y^{\prime}, V, \gamma_1^{}, \gamma_2^{}$ such that
$X^{\prime} \text{ and } Y^{\prime}$ are distinct,
$\gamma_1^{}  \text{ and } \gamma_2^{}$ belong to
$ MP(\alpha) \text{ and } MP(\beta)$ respectively,
and one of the following symmetric cases hold:
\begin{enumerate}
  
  \item[(a)] $X^{\prime} V, \, Y^{\prime} V \, \in \, RF(\alpha, \gamma_1^{}) $ \emph{and}

  \item[(b)] $X^{\prime} , \, Y^{\prime} \, \in \, RF(\beta, \gamma_2^{})$
  
\end{enumerate} or \begin{enumerate}

  \item[(c)] $X^{\prime} , \, Y^{\prime} \, \in \, RF(\alpha, \gamma_1^{})$ \emph{and}

  \item[(d)] $X^{\prime} V, \, Y^{\prime} V \, \in \, RF(\beta, \gamma_2^{}) $

\end{enumerate}
First of all there are only polynomially many strings
in~$MP(\alpha)$ and $MP(\beta)$. The two cases can be
checked in polynomial time by Lemma~\ref{FindingStringsThruIntersection}. \end{proof}

\begin{figure}
\centering
\epsfig{file=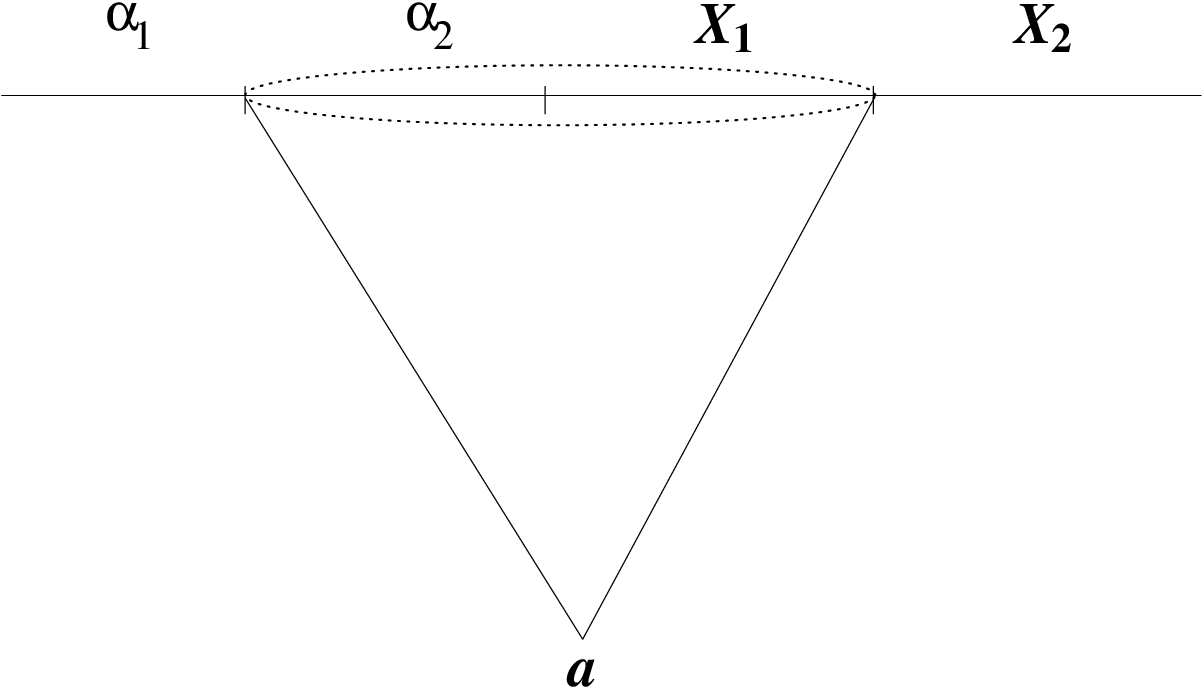, width=4in}
\caption{What the monadic cone looks like for $\alpha$}
\label{monadicCone}
\end{figure}

Figure~\ref{monadicCone} illustrates how $\alpha X$ reduces to its normal form $\alpha^{}_1 a X_2$. ($\alpha$
and $X$ are irreducible strings.)

\pagebreak

\vspace{1cm}
\noindent
Below are the lemmas for the one-mapping case:\\



\begin{lem}{\label{CEOneMappingMonadicLemma5}}
Let $\mu, X, Y \in IRR(R)$ where $X \neq Y$. Then $\mu X  \downarrow  \mu Y$ if and only
if there exist strings~$X^{\prime}, Y^{\prime}, W, \gamma$ such that
\begin{enumerate}

\item $\gamma \in MP(\mu)$,

\item $X = X^{\prime} W$, $Y = Y^{\prime} W$, $X^{\prime} \neq Y^{\prime} \; $ \emph{and}

\item $\mu X^{\prime} \stackrel{!}{{\longrightarrow}_{R}^{}} ~ \gamma \; \; ~ {}_R^{}{\!{\stackrel{!}{\longleftarrow}}} \; \mu Y^{\prime}$.

\end{enumerate}


\end{lem}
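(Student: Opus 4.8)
The plan is to obtain Lemma~\ref{CEOneMappingMonadicLemma5} as a near-immediate consequence of Lemma~\ref{CEOneMappingMonadicLemma1}, specialised to the case $\omega := \mu$, while additionally keeping track of the disequality. Setting $\omega = \mu$ in Lemma~\ref{CEOneMappingMonadicLemma1} gives: $\mu X \downarrow \mu Y$ if and only if there are strings $X', Y', W, \gamma$ with $\gamma \in MP(\mu) \cup MP(\mu) = MP(\mu)$, $X = X'W$, $Y = Y'W$, and $\mu X' \rightarrow_R^! \gamma$ and $\mu Y' \rightarrow_R^! \gamma$. The sole discrepancy with the claimed statement is the condition $X' \neq Y'$, and the key observation is that, under a common-suffix decomposition $X = X'W$, $Y = Y'W$ with the \emph{same} $W$ for both (as supplied by Lemma~\ref{CEOneMappingMonadicLemma1}), one has $X' \neq Y' \iff X \neq Y$: if $X' = Y'$ then $X = X'W = Y'W = Y$, and conversely $X' \neq Y'$ forces $X'W \neq Y'W$.

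For the forward direction I would assume $X \neq Y$ and $\mu X \downarrow \mu Y$, invoke Lemma~\ref{CEOneMappingMonadicLemma1} to extract $X', Y', W, \gamma$ meeting its conditions~(1)--(3), and then use the observation above to upgrade $X \neq Y$ to $X' \neq Y'$, which is precisely the extra requirement in condition~(2) of Lemma~\ref{CEOneMappingMonadicLemma5}. For the reverse direction, given $X', Y', W, \gamma$ as in Lemma~\ref{CEOneMappingMonadicLemma5}, conditions~(1)--(3) are exactly conditions~(1)--(3) of Lemma~\ref{CEOneMappingMonadicLemma1} with $\omega = \mu$, so that lemma yields $\mu X \downarrow \mu Y$, while $X' \neq Y'$ gives $X \neq Y$ by the same observation. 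One can also argue the reverse directly: reductions are closed under right-concatenation, so $\mu X = \mu X' W \rightarrow_R^* \gamma W$ and $\mu Y = \mu Y' W \rightarrow_R^* \gamma W$, and since $R$ is convergent, $\mu X$ and $\mu Y$ have the common normal form $(\gamma W)\!\downarrow$.

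I do not expect a real obstacle here: there is no combinatorial content beyond Lemma~\ref{CEOneMappingMonadicLemma1}. The only two points that need care are (i)~in the direct reverse argument one must not assert that $\gamma W$ is itself irreducible --- it generally is not --- and must instead appeal to confluence to conclude that $\mu X$ and $\mu Y$ share a normal form; and (ii)~one must use that the ``interaction boundary'' $W$ produced by Lemma~\ref{CEOneMappingMonadicLemma1} is literally the same suffix of $X$ and of $Y$, since it is exactly this that makes the equivalence $X \neq Y \iff X' \neq Y'$ valid. (Note also that $X'$, $Y'$, $W$ are automatically irreducible, being factors of the irreducible strings $X$ and $Y$.) Alternatively, the lemma can be proved from scratch by mirroring the proof of Lemma~3.6 of~\cite{OND98} used for Lemma~\ref{CEOneMappingMonadicLemma1}, carrying the disequality $X \neq Y$ through the case analysis; but the reduction to Lemma~\ref{CEOneMappingMonadicLemma1} is shorter.
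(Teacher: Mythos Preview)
Your proposal is correct. It takes a genuinely different route from the paper, and in fact the route you relegate to your final sentence as an ``alternative'' is essentially what the paper does: it argues directly from the monadic ``cone'' structure, producing decompositions $\mu = \mu_1\mu_2 = \mu_3\mu_4$, $X = X_1X_2$, $Y = Y_1Y_2$ with $\mu_2 X_1 \to^! a$, $\mu_4 Y_1 \to^! b$, $\mu_1 a X_2 = \mu_3 b Y_2$, and then does the suffix case split on $X_2$ versus $Y_2$ to extract $X', Y', W, \gamma$ --- precisely the Lemma~3.6-style argument underlying Lemma~\ref{CEOneMappingMonadicLemma1}, redone in the special case $\omega = \mu$ while keeping track of the disequality. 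Your primary approach instead invokes Lemma~\ref{CEOneMappingMonadicLemma1} as a black box and adds only the cancellation observation $X'W = Y'W \iff X' = Y'$ to handle the extra clause $X' \neq Y'$. This buys you a much shorter proof with no case analysis, at the cost of depending on Lemma~\ref{CEOneMappingMonadicLemma1}; the paper's direct argument is self-contained but repeats work already encapsulated there. Your caveats (i) and (ii) are apt and neither approach hides any difficulty.
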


\begin{proof}
Let $Z$ be the normal form of $\mu X$ and $\mu Y$. 
Then there exists strings
$\mu_1, \mu_2, \mu_3, \mu_4, X_1,$ $X_2, Y_1, Y_2$ such that
$\mu = \mu_1\mu_2 = \mu_3\mu_4$, $X=X_1X_2$, $Y=Y_1Y_2$ and
\begin{tabbing}
$\qquad$ \= $\mu_{2} X_1 \; \rightarrow_{}^! \; a$,\\
         \> $\mu_{4} Y_1 \; \rightarrow_{}^! \; b, \qquad$ and \\[+5pt]
         \> $Z = {\mu_{1}}\, a \, {X_2} = {\mu_{3}} \, b \, {Y_2}$.\\[-12pt]
\end{tabbing}
where $a,b \in \Sigma \cup \{\lambda\}$.
If $X_1 = Y_1$, then the same reduction can
be applied on both sides, i.e., $\mu_{2} = \mu_{4}$ and~$a = b$.
But the rest of the string $X_2 \neq Y_2$
since $X \neq Y$. Therefore, we conclude that $X_1 \neq Y_1$.  It can
also be seen that $\mu_1 a, \, \mu_3 b \in MP(\alpha)$.

We now consider two cases:
\begin{enumerate}
  \item[(a)] $X_2$ is a suffix of $Y_2$: Let $Y_2 = Y_2^{\prime} X_2$.
  Then ${\mu_{1}}\, a \; = \; {\mu_{3}} \, b \, {Y_2^{\prime}}$. We can take
  $\gamma = {\mu_{1}}\, a$, $X^{\prime} = X_1$, $Y^{\prime} = Y_1 Y_2^{\prime}$
  and $W = X_2$.

  \item[(b)] $Y_2$ is a suffix of $X_2$: Let $X_2 = X_2^{\prime} Y_2$.
  Then ${\mu_{1}}\, a X_2^{\prime} \; = \; {\mu_{3}} \, b$.
  In this case we can take
  $\gamma = {\mu_{3}}\, b$, $X^{\prime} = X_1 X_2^{\prime}$, $Y^{\prime} = Y_1$
  and $W = X_2$. \qedhere
\end{enumerate}
\end{proof}

\vspace{0.1in}


\begin{lem}{\label{CEOneMappingMonadicLemma314}}
\openup 0.5em
  Let $\alpha, \beta \in IRR(R)$. Then there exist \emph{distinct}
  irreducible strings~$X$~and~$Y$ such that $\alpha X \downarrow
  \alpha Y$ and $\beta X \downarrow \beta Y$ if and only if there
  exist irreducible strings~$X^{\prime}, Y^{\prime}, V, W, \gamma_1^{}, \gamma_2^{}$ such that

\openup -0.5em
\begin{enumerate}

\item $X^{\prime} \, \neq \, Y^{\prime}$,

\item $\gamma_1^{} \in MP(\alpha)$,

\item $\gamma_2^{} \in MP(\beta)$,

\item $X = X^{\prime} V W$, $Y = Y^{\prime} VW$, and

\item either \begin{itemize}
  
  \item[(a)] $\alpha X^{\prime} V \stackrel{!}{{\longrightarrow}_{R}^{}} ~ \gamma_1^{} \; \; ~ {}_R^{}{\!{\stackrel{!}{\longleftarrow}}} \;
  \alpha Y^{\prime} V \; \; $ \emph{and}

  \item[(b)] $\; \; \; \beta X^{\prime} \stackrel{!}{{\longrightarrow}_{R}^{}} ~ \gamma_2^{} \; \; ~ {}_R^{}{\!{\stackrel{!}{\longleftarrow}}} \;
  \beta Y^{\prime}$.
  
\end{itemize} or \begin{itemize}

  \item[(c)] $\; \; \; \alpha X^{\prime} \stackrel{!}{{\longrightarrow}_{R}^{}} ~ \gamma_1^{} \; \; ~ {}_R^{}{\!{\stackrel{!}{\longleftarrow}}} \;
  \alpha Y^{\prime} \; \; $ \emph{and}

  \item[(d)] $\beta X^{\prime} V \stackrel{!}{{\longrightarrow}_{R}^{}} ~ \gamma_2^{} \; \; ~ {}_R^{}{\!{\stackrel{!}{\longleftarrow}}} \;
  \beta Y^{\prime} V$.
\end{itemize}

\end{enumerate}

\end{lem}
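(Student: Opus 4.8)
Fix distinct irreducible strings $X, Y \in IRR(R)$; the statement amounts to saying that $\alpha X \downarrow \alpha Y$ and $\beta X \downarrow \beta Y$ both hold if and only if $X$ and $Y$ admit a common decomposition $X = X'VW$, $Y = Y'VW$ of the stated form. The plan is to lift Lemma~\ref{CEOneMappingMonadicLemma5} from one joinability constraint to two, in the same spirit in which Lemma~\ref{CEOneMappingMonadicLemma2} lifts Lemma~\ref{CEOneMappingMonadicLemma1}. For the ``only if'' direction, apply Lemma~\ref{CEOneMappingMonadicLemma5} with $\mu := \alpha$ (allowed because $X \neq Y$): this yields strings $P_1, Q_1, W_1, \gamma_1$ with $\gamma_1 \in MP(\alpha)$, $X = P_1 W_1$, $Y = Q_1 W_1$, $P_1 \neq Q_1$, $\alpha P_1 \rightarrow_R^{!} \gamma_1$ and $\alpha Q_1 \rightarrow_R^{!} \gamma_1$. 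Applying it with $\mu := \beta$ gives $P_2, Q_2, W_2, \gamma_2$ with $\gamma_2 \in MP(\beta)$, $X = P_2 W_2$, $Y = Q_2 W_2$, $P_2 \neq Q_2$, $\beta P_2 \rightarrow_R^{!} \gamma_2$ and $\beta Q_2 \rightarrow_R^{!} \gamma_2$. Since $X$ and $Y$ are irreducible, every prefix and suffix of them is irreducible, so the $P_i, Q_i, W_i$ are irreducible, and each $\gamma_i$, being a normal form, is irreducible too; this is why the decomposition witnesses we build will all be irreducible, as the statement demands.

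The reconciliation is the heart of the argument. Because $X = P_1 W_1 = P_2 W_2$, the strings $W_1$ and $W_2$ are both suffixes of the single string $X$, hence one is a suffix of the other. Assume without loss of generality $|W_1| \geq |W_2|$ and write $W_1 = V W_2$; then $V$ is a factor of $X$, hence irreducible. Right-cancelling $W_2$ in $P_1 V W_2 = P_2 W_2$ gives $P_2 = P_1 V$, and right-cancelling $W_2$ in $Q_1 V W_2 = Q_2 W_2$ (using $Y = Q_1 W_1 = Q_2 W_2$) gives $Q_2 = Q_1 V$ with the \emph{same} $V$. Put $X' := P_1$, $Y' := Q_1$, $W := W_2$, and keep $\gamma_1, \gamma_2$. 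Then $X' \neq Y'$, $\gamma_1 \in MP(\alpha)$, $\gamma_2 \in MP(\beta)$, $X = X'VW$, $Y = Y'VW$, and clauses (c),(d) hold: $\alpha X' \rightarrow_R^{!} \gamma_1$ and $\alpha Y' \rightarrow_R^{!} \gamma_1$ directly, while $\beta X'V = \beta P_2 \rightarrow_R^{!} \gamma_2$ and $\beta Y'V = \beta Q_2 \rightarrow_R^{!} \gamma_2$. The mirror case $|W_2| > |W_1|$, writing $W_2 = V W_1$ and taking $X' := P_2$, $Y' := Q_2$, $W := W_1$, lands symmetrically in clauses (a),(b).

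For the ``if'' direction, suppose the decomposition exists. If (a),(b) hold, append $W$ on the right of the reductions $\alpha X'V \rightarrow_R^{!} \gamma_1$ and $\alpha Y'V \rightarrow_R^{!} \gamma_1$ -- legitimate since $\rightarrow_R$ is stable under right concatenation in an SRS -- to obtain $\alpha X = \alpha X'VW \rightarrow_R^{*} \gamma_1 W$ and $\alpha Y = \alpha Y'VW \rightarrow_R^{*} \gamma_1 W$, so $\alpha X$ and $\alpha Y$ have a common reduct and therefore (by convergence of $R$) the same normal form, i.e.\ $\alpha X \downarrow \alpha Y$. Similarly, appending $VW$ to $\beta X' \rightarrow_R^{!} \gamma_2$ and $\beta Y' \rightarrow_R^{!} \gamma_2$ gives $\beta X, \beta Y \rightarrow_R^{*} \gamma_2 VW$, so $\beta X \downarrow \beta Y$. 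If instead (c),(d) hold, the same argument applies with $VW$ appended on the $\alpha$-side and $W$ on the $\beta$-side. Distinctness of $X$ and $Y$ is part of the hypothesis and also follows from $X' \neq Y'$ by right-cancellation.

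I expect the reconciliation step to be the only non-routine part: the two applications of Lemma~\ref{CEOneMappingMonadicLemma5} are independent and need not return the same trailing factor, and the argument goes through only because a \emph{single} application produces one common suffix $W_i$ shared by $X$ and $Y$, which lets the comparison of $|W_1|$ and $|W_2|$ thread a single $V$ through both decompositions at once. Everything else is right-concatenation bookkeeping on rewrite sequences, together with the observation that substrings of irreducible strings are irreducible.
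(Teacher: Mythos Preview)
Your proof is correct and follows essentially the same approach as the paper's: both apply Lemma~\ref{CEOneMappingMonadicLemma5} once with $\mu=\alpha$ and once with $\mu=\beta$, then reconcile the two resulting decompositions by comparing the trailing common suffixes. The paper phrases the reconciliation as a case split on whether $X'$ is a prefix of $X''$, which is equivalent to your comparison of $|W_1|$ and $|W_2|$; your write-up is simply more explicit about why the \emph{same} $V$ threads through both the $X$- and $Y$-decompositions and about the right-concatenation argument in the ``if'' direction.
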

\begin{proof}
Assume that there exist strings~$X^{\prime}, Y^{\prime}, V,
\gamma_1^{}, \gamma_2^{}$ that satisfy the properties above. Let us
consider the fifth property. It shows that $\alpha X^{\prime} V
\downarrow \alpha Y^{\prime} V$ as well as $\beta X^{\prime} V
\downarrow \beta Y^{\prime} V$.  Now suppose $X = X^{\prime} V$ and $Y
= Y^{\prime} V$. Therefore, we can see that $\alpha X \downarrow
\alpha Y$ as well as $\beta X \downarrow \beta Y$ and $X$ and $Y$ are
distinct irreducible strings.

Conversely, assume that there exist distinct irreducible strings $X$
and $Y$ such that $\alpha X \downarrow \alpha Y$ and $\beta X
\downarrow \beta Y$. We start by considering the case $\alpha X
\downarrow \alpha Y$ such that $X \neq Y$.
By Lemma~\ref{CEOneMappingMonadicLemma5}, there must be strings
$X^{\prime}$, $Y^{\prime}$, $Z$ and~$\gamma_1$ such that $X=X^{\prime} Z$,
$Y=Y^{\prime} Z$ and 
$\alpha X^{\prime} \; \stackrel{!}{{\longrightarrow}_{R}^{}} ~ \gamma_1 ~ \; \; {}_R^{}{\!{\stackrel{!}{\longleftarrow}}} \; \alpha Y^{\prime}$
where~$\gamma_1 \in MP(\alpha)$.
Similarly, for the case $\beta X \downarrow \beta
Y$, $X$ can be written as $X^{\prime\prime} Z^{\prime}$ and $Y$
can be written as $Y^{\prime\prime} Z^{\prime}$ such that 
$\beta X^{\prime\prime} \; \stackrel{!}{{\longrightarrow}_{R}^{}} ~ \gamma_2 \; \; ~ {}_R^{}{\!{\stackrel{!}{\longleftarrow}}} \; \beta Y^{\prime\prime}$ for some $\gamma_2 \in MP(\beta)$.


We have to consider two cases depending on whether $X^{\prime}$ is a
prefix of $X^{\prime\prime}$ or vice versa. It is not hard to see that
they correspond to the two cases in condition~5.
\end{proof}


\begin{cor}{\label{CEOneMappingMonadicLemma3}}
\openup 0.5em
  Let $\alpha, \beta \in IRR(R)$. Then there exist \emph{distinct}
  irreducible strings~$X$~and~$Y$ such that $\alpha X \downarrow
  \alpha Y$ and $\beta X \downarrow \beta Y$ if and only if there
  exist irreducible strings~$X^{\prime}, Y^{\prime}, V, \gamma_1^{}, \gamma_2^{}$ such that

\openup -0.5em
\begin{enumerate}

\item $X^{\prime} \, \neq \, Y^{\prime}$,

\item $\gamma_1^{} \in MP(\alpha)$,

\item $\gamma_2^{} \in MP(\beta)$,

\item $X^{\prime} V$ and $Y^{\prime} V$ are irreducible, and

\item either \begin{itemize}
  
  \item[(a)] $\alpha X^{\prime} V \stackrel{!}{{\longrightarrow}_{R}^{}} ~ \gamma_1^{} \; \; ~ {}_R^{}{\!{\stackrel{!}{\longleftarrow}}} \;
  \alpha Y^{\prime} V \; \; $ \emph{and}

  \item[(b)] $\; \; \; \beta X^{\prime} \stackrel{!}{{\longrightarrow}_{R}^{}} ~ \gamma_2^{} \; \; ~ {}_R^{}{\!{\stackrel{!}{\longleftarrow}}} \;
  \beta Y^{\prime}$.
  
\end{itemize} or \begin{itemize}

  \item[(c)] $\; \; \; \alpha X^{\prime} \stackrel{!}{{\longrightarrow}_{R}^{}} ~ \gamma_1^{} \; \; ~ {}_R^{}{\!{\stackrel{!}{\longleftarrow}}} \;
  \alpha Y^{\prime} \; \; $ \emph{and}

  \item[(d)] $\beta X^{\prime} V \stackrel{!}{{\longrightarrow}_{R}^{}} ~ \gamma_2^{} \; \; ~ {}_R^{}{\!{\stackrel{!}{\longleftarrow}}} \;
  \beta Y^{\prime} V$.
\end{itemize}

\end{enumerate}

\end{cor}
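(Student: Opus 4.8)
The plan is to obtain this corollary directly from the Lemma immediately preceding it, which is the same biconditional except that it additionally threads a common suffix $W$ through the two sought strings (so $X = X'VW$ and $Y = Y'VW$) and asks for irreducibility of $X', Y', V, W$ individually rather than of the products $X'V$ and $Y'V$. The corollary is morally the instance $W = \lambda$ of that Lemma, so I would not redo the $MP(\cdot)$/right-factor analysis; I would only bridge the cosmetic difference between the two formulations, using two elementary facts about $\to_R$ on $\Sigma^*$: every substring of an irreducible string is irreducible, and $u \to_R v$ implies $uw \to_R vw$ for all $w$.

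For the ``if'' direction I would start from irreducible $X', Y', V, \gamma_1, \gamma_2$ satisfying conditions (1)--(5), set $X := X'V$ and $Y := Y'V$, and check these witness the left-hand side. Irreducibility of $X$ and $Y$ is exactly condition (4); distinctness follows from $X' \neq Y'$ by right-cancellativity of concatenation. For the equations, in alternative (a) I get $\alpha X = \alpha X'V \to^! \gamma_1$ and $\alpha Y = \alpha Y'V \to^! \gamma_1$ outright, hence $\alpha X \downarrow \alpha Y$; and from $\beta X' \to^! \gamma_2$ and $\beta Y' \to^! \gamma_2$ I would lift these reductions into the larger context to get $\beta X'V \to^* \gamma_2 V$ and $\beta Y'V \to^* \gamma_2 V$, and then invoke convergence of $R$ to conclude $\beta X \downarrow \beta Y$. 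Alternative (c)--(d) is symmetric, with the roles of $\alpha,\beta$ and of $V$ interchanged.

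For the ``only if'' direction I would simply apply the preceding Lemma to distinct irreducible $X, Y$ with $\alpha X \downarrow \alpha Y$ and $\beta X \downarrow \beta Y$, obtaining irreducible $X', Y', V, W, \gamma_1, \gamma_2$ with $X = X'VW$, $Y = Y'VW$ and one of the two reduction alternatives. Conditions (1)--(3) and (5) of the corollary are then literally those supplied by the Lemma; the only thing to verify is condition (4), namely that $X'V$ and $Y'V$ are irreducible --- but these are prefixes of the irreducible strings $X$ and $Y$, and any substring of an irreducible string is itself irreducible (any redex in the substring would be a redex in the whole string). That completes the equivalence.

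The hard part, such as it is, is purely bookkeeping: making sure that dropping $W$ on one side and strengthening ``prefix of an irreducible string'' to ``irreducible'' on the other side line up exactly, and keeping the two symmetric alternatives (a)--(b) / (c)--(d) straight across the two directions. Since the two facts about $\to_R$ invoked above are immediate from the definition of the rewrite relation on $\Sigma^*$ given in the Definitions section, no real obstacle remains --- the corollary is a routine repackaging of the Lemma.
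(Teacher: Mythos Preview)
Your proposal is correct and matches the paper's (implicit) approach: the paper states this corollary immediately after the Lemma with the extra suffix $W$ and gives no separate proof, treating it as the evident consequence obtained by dropping $W$ in one direction and noting that $X'V,\,Y'V$ are prefixes of the irreducible $X,\,Y$ in the other. Your bookkeeping for the ``if'' direction (appending $V$ on the $\beta$-side and invoking confluence) and for the ``only if'' direction (substring-of-irreducible is irreducible) is exactly what is needed.
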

\begin{proof}
The proof follows from the proof of Lemma~\ref{CEOneMappingMonadicLemma314}.
\end{proof}

\ignore{
Since the
rewrite system $R$ is monadic, there exist $a,b \in \Sigma \cup
\{\lambda\}$ such that
\begin{tabbing}
$\qquad$ \= $\alpha_{2} X_1 \; \rightarrow_{}^! \; a$,\\
         \> $\alpha_{4} Y_1 \; \rightarrow_{}^! \; b, \qquad$ and \\[+5pt]
         \> ${\alpha_{1}}\; a\; {X_2} = {\alpha_{3}}\; b\; {Y_2}$.\\
\end{tabbing}
\vspace{-15pt}The first reduction is shown in the
Figure~\ref{monadicCone}. If $X_1 = Y_1$, then the same reduction can
be applied on both sides, but the rest of the string $X_2 \neq Y_2$
since $X \neq Y$. Therefore, we conclude that $X_1 \neq Y_1$.  It can
also be seen that $\alpha_1\;a, \alpha_3\;b \in MP(\alpha)$.

The same analogy can be applied to the $\beta$ equation. Therefore we
say that $\alpha_1 \& \alpha_2$ works with $X$ and $Y$ such that $X=
X^{\prime}_1\;W_1$ and $Y = Y^{\prime}_1\;W_1$. Then for the both
sides of $\beta$ reduction, we can say $X= X^{\prime}_2\;W_2$ and $Y =
Y^{\prime}_2\;W_2$. We consider two cases such that:

\begin{itemize}
\item[(a)] ${W_2}$ is a suffix of ${W_1}$: \\[+5pt]
 \indent \hspace{12pt} ${W_1} = Z\; {W_2}$\\
 \indent \hspace{12pt} ${\alpha}\; X^{\prime}_1\;Z\; W_2 \stackrel{*}{{\longleftrightarrow}_R}  {\alpha} \; Y^{\prime}_1\;Z\;W_2$ and ${\beta}\;X^{\prime}_2\;W_2 \stackrel{*}{{\longleftrightarrow}_R}  {\beta} \;Y^{\prime}_2\; W_2$\\
\item[(b)] ${W_1}$ is a suffix of  ${W_2}$: \\[+5pt]
 \indent \hspace{12pt} ${W_2} = Z^{\prime}\; {W_1}$\\
  \indent \hspace{12pt} ${\alpha}\; X^{\prime}_1\;W_1 \stackrel{*}{{\longleftrightarrow}_R}  {\alpha} \; Y^{\prime}_1\; W_1$ and ${\beta}\;X^{\prime}_2\;Z^{\prime}\;W_1 \stackrel{*}{{\longleftrightarrow}_R}  {\beta}\;Y^{\prime}_2\; Z^{\prime}\;W_1$
\end{itemize}
}


\begin{thm}{\label{CEOneMappingMonadicLemma4}}
\openup 0.5em
  Let $\alpha, \beta \in IRR(R)$. Then there exist \emph{distinct}
  irreducible strings~$X$~and~$Y$ such that $\alpha X \downarrow
  \alpha Y$ and $\beta X \downarrow \beta Y$ if and only if there
  exist strings~$X^{\prime}, Y^{\prime}, V, \gamma_1^{}, \gamma_2^{}$ such that

\openup -0.5em
\begin{enumerate}

\item $X^{\prime} \, \neq \, Y^{\prime}$,

\item $\gamma_1^{} \in MP(\alpha)$,

\item $\gamma_2^{} \in MP(\beta)$,

\item either \begin{itemize}
  
  \item[(a)] $X^{\prime} V, \, Y^{\prime} V \, \in \, RF(\alpha, \gamma_1^{}) $ \emph{and}

  \item[(b)] $X^{\prime} , \, Y^{\prime} \, \in \, RF(\beta, \gamma_2^{})$
  
\end{itemize} or \begin{itemize}

  \item[(c)] $X^{\prime} , \, Y^{\prime} \, \in \, RF(\alpha, \gamma_1^{})$ \emph{and}

  \item[(d)] $X^{\prime} V, \, Y^{\prime} V \, \in \, RF(\beta, \gamma_2^{}) $

\end{itemize}

\end{enumerate}

\end{thm}
\begin{proof}
    The proof follows from the above lemmas, corollaries and by the definition of right factor.
\end{proof}

\begin{lem}
    \label{unrestricted_rf_regular}
    Let $T$ be monadic, finite, and convergent string rewriting system. Let $x,y \in IRR(T)$. Then $RF(x,y)$ is a regular language and a DFA for it can be constructed in polynomial time.
\end{lem}

\begin{proof}
    A DFA accepting $RF(x,y)$ will be given by $M = (Q, \Sigma, \delta, q_0, F)$ where the set of states $Q \subset (MP(x) \times IRR(T)) \cup \{ \bot \}$ and $\bot$ will be used to label the unique dead state of~$M$. Let $q_0 = (x, \lambda) \in Q$ be the start state of $M$. The transition function is then
    $$
    \delta((\alpha, \beta), a) := 
    \begin{cases}
        \bot &{\beta}a \not \in IRR(T) \\
        ({\alpha}{\beta}a\!\downarrow, \, \lambda) &\beta a,\; \alpha \beta a \not \in IRR(T) \\
        (\alpha, {\beta}a) &\alpha \beta a \in IRR(T),\; \beta a \text{ proper substring of a lhs} \\
        (\alpha, \beta) &\alpha \beta a \in IRR(T),\; \beta a \text{ not a proper substring of a lhs},\; \alpha \beta a \text{ prefix of } y \\
        \bot &\text{otherwise}
    \end{cases}
    $$
    We then set $F = \{(u,v) \in MP(x) \times IRR(T)  \; | \; y = uv\}$. Correctness of the construction follows from the fact that for all $(z_1, z_2) \in Q$ we have that $z_1z_2 \in IRR(T)$ and that $z_2$ is a substring of either the left-hand sides of rules in $T$ or of $y$. 
    
    Additionally, $M$ can be constructed in polynomial time as $x,y$ are fixed and a DFA for IRR(T) can be constructed in polynomial time.  
\end{proof}

\ignore{
\section{Appendix}\label{CTdecidableCEnot}
This corollary is a slightly simpler construction than the one
in~\cite{NarendranOtto97} for the reader's ease of understanding.

\begin{cor}
There is a finite and convergent string rewriting system for which the common
term (CT) problem is decidable, while the common equation (CE) problem
is undecidable.
\end{cor}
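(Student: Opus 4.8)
The plan is to produce a single fixed finite convergent string-rewriting system $R$ for which the Common Term problem is decidable but the Common Equation problem is not. Since, as observed above, CT is the special case of CE in which $\beta_1 = \beta_2 = \lambda$, the two directions have to be argued separately.

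\emph{Undecidability of CE over $R$.} I would reduce from the Post Correspondence Problem, using essentially the undecidability construction of~\cite{NarendranOtto97} (their Corollary~4.4), adapted so that the PCP data is carried by the \emph{input} strings $\alpha_1,\alpha_2,\beta_1,\beta_2$ rather than hard-coded into the rules. The feature to exploit is that the two-mappings case of CE asks for $W_1,W_2$ satisfying \emph{two coupled} equations $\alpha_1 W_1 \approx_R \alpha_2 W_2$ and $\beta_1 W_1 \approx_R \beta_2 W_2$; each equation, read as a constraint on the pair $(W_1,W_2)$, can be made to define an effectively rational relation, and the existence of a common solution is exactly nonemptiness of the intersection of two rational relations, which is undecidable. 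Concretely, one designs $R$ as a ``tile checker'' in the spirit of the $GPCP$-to-CT reduction of Section~\ref{ctdwindling}: the pair $(\alpha_1,\alpha_2)$ forces $(W_1,W_2)$ to encode a valid run on the first components of the dominoes and $(\beta_1,\beta_2)$ forces the same run on the second components, via a marking-and-erasing mechanism. One arranges that $\alpha_1,\alpha_2,\beta_1,\beta_2$ begin with left-cancellable content, so that the one-mappings alternative (condition~(ii) of CE) would force $W_1 = W_2$ and is therefore vacuous; hence the CE instance is a ``yes'' instance exactly when the PCP instance is. Finally one checks that $R$ is finite, terminating (a well-founded measure decreases at each rule), and confluent (no critical pairs), just as for the constructions earlier in the paper.

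\emph{Decidability of CT over $R$.} For the same $R$, the single equation $\alpha W \approx_R \beta W$ uses the \emph{same} multiplier on both sides, which destroys the coupling responsible for the undecidability of CE. Here one shows that the set $\{\, W \in IRR(R) \mid \alpha W \approx_R \beta W \,\}$ of common right multipliers is an effectively constructible regular language, of size polynomial in $|\alpha|,|\beta|,|R|$: it is obtained by the right-factor / normal-form analysis used for monadic systems in Lemma~\ref{CEOneMappingMonadicLemma4} and in~\cite{OND98}, the only new point being that although $R$ is not monadic, its non-monadic rules have been designed so that they can fire only within a bounded prefix, so the same analysis applies. Emptiness of a regular language being decidable, CT is decidable; equivalently one bounds the length of a shortest common multiplier polynomially and searches.

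The main obstacle is the tension between the two parts: $R$ must be rich enough that \emph{simultaneous} satisfaction of two equations is Turing-complete, yet tame enough that a \emph{single} equation has a regular solution set. Threading all of the computational power through the interaction of the two equations, and then verifying termination and confluence of the resulting fixed system together with regularity of the CT-witness language, is the delicate step; the remainder is bookkeeping analogous to the reductions already carried out in this paper and to the construction of~\cite{NarendranOtto97}, of which this corollary is meant to be a simplification.
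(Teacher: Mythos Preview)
Your proposal is a strategy outline rather than a proof: you never actually write down the fixed system~$R$, so none of the claimed properties (convergence, regularity of the CT-witness set, vacuity of the one-mapping case) can be checked. That alone is a gap, but there are two more specific issues worth flagging.

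First, for CE undecidability you say the PCP data will be ``carried by the input strings'' rather than hard-coded into the rules. For this to yield undecidability of CE over a \emph{single fixed} finite system you need a variant of (G)PCP that is already undecidable with the dominoes fixed and only part of the input varying. The paper invokes exactly such a result (undecidability of GPCP with a fixed set $S$ of intermediate dominoes and a fixed end domino, only the start domino~$(x_0,y_0)$ varying), then builds the rules of $R_1$ from $S$ and the end domino and encodes $(x_0,y_0)$ into $\alpha_1,\beta_1$. Without isolating this step, your reduction would give ``CE is undecidable for \emph{some} system depending on the PCP instance,'' which is not the corollary.

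Second, your CT decidability argument is both different from and weaker than the paper's. You propose to show the common-multiplier set $\{W\in IRR(R)\mid \alpha W\approx_R\beta W\}$ is regular by a monadic-style right-factor analysis, relying on the non-monadic rules ``firing only within a bounded prefix.'' This is not obvious and, for the system the paper actually builds, simply false: rules such as $\cent_1 c_i\to x_i\cent_1$ are non-monadic and can fire arbitrarily deep. The paper instead proves a much sharper structural property, namely \emph{right-cancellation}: for every $a\in\widehat\Sigma$ and all $Z_1,Z_2$, one has $Z_1 a\downarrow Z_2 a$ iff $Z_1\downarrow Z_2$. Peeling off $W$ symbol by symbol, CT over $R_1$ collapses to the word problem $\alpha\downarrow\beta$, which is trivially decidable. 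This cancellation lemma is the real engine of the corollary; your regularity route, even if it can be made to work for some system, misses it.
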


We show that CE is undecidable by a reduction from GPCP. (See Section~\ref{ctdwindling}.)
For notational convenience, we represent an instance of the GPCP as a 3-tuple
\begin{equation*}
\Bigg \langle \bigg[ \dfrac{x}{y} \bigg], \, S, \, \bigg[ \dfrac{u}{v} \bigg] \Bigg \rangle
\end{equation*}
where $(x, y)$ is the start domino, $(u, v)$ is the end domino and
$S$ the set of intermediate dominos.

\begin{thm}[\cite{HopcroftMotwaniUllman}]
There exist strings $\alpha$, $\beta$, and a set of tuples of
strings~$S$ such that following problem is undecidable:

\begin{description}[align=left]
\item[Input] Strings $x_0, y_0$.
\item[Question] Does the GPCP $\Bigg \langle \bigg[ \dfrac{x_0}{y_0} \bigg], \, S, \, \bigg[ \dfrac{\alpha}{\beta} \bigg] \Bigg \rangle$ have a solution?
\end{description}
\end{thm}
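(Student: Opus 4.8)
The plan is to reduce from the membership problem of a \emph{fixed} universal Turing machine, exploiting the fact that in the classical domino construction the dependence on the input word is confined entirely to the start domino.

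Recall the standard reduction (as presented in~\cite{HopcroftMotwaniUllman}) from the question ``does the Turing machine $M$ accept the input $w$?'' to an instance of the modified Post correspondence problem. For $M = (Q,\Sigma,\Gamma,\delta,q_0,q_{acc})$ the produced dominoes are: copy dominoes $\bigl[\tfrac{Z}{Z}\bigr]$, one for each tape symbol $Z$ and for the separator $\#$; transition dominoes of the form $\bigl[\tfrac{qZ}{W p}\bigr]$ (together with their left/right-moving variants) read off from $\delta$; ``cleanup'' dominoes that erase the tape symbols adjacent to an occurrence of $q_{acc}$; and a terminating domino $\bigl[\tfrac{q_{acc}\#\#}{\#}\bigr]$. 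The \emph{only} domino that mentions $w$ is the start domino $\bigl[\tfrac{\#}{\#\,q_0\,w\,\#}\bigr]$. A matching exists iff $M$ accepts $w$; and since every matching must begin with the start domino and (after cosmetic adjustments to conform to this paper's $GPCP$ conventions) must end with the terminating domino, this is exactly a $GPCP$ instance $\bigl\langle \bigl[\tfrac{\#}{\#q_0 w\#}\bigr],\, S_M,\, \bigl[\tfrac{q_{acc}\#\#}{\#}\bigr] \bigr\rangle$ in which $S_M$ and the end domino depend only on $M$.

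Now fix $M$ to be a universal Turing machine $U$, so that $S := S_U$ and $(\alpha,\beta)$ (the end domino $\bigl(q_{acc}\#\#,\ \#\bigr)$) are fixed once and for all. An input to $U$ is an encoding $\langle M', w'\rangle$ of an arbitrary machine and word; set $x_0 := \#$ and $y_0 := \#\, q_0\, \langle M', w'\rangle\, \#$. Then the $GPCP$ instance $\bigl\langle \bigl[\tfrac{x_0}{y_0}\bigr],\, S,\, \bigl[\tfrac{\alpha}{\beta}\bigr] \bigr\rangle$ has a solution iff $U$ accepts $\langle M',w'\rangle$ iff $M'$ accepts $w'$. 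The latter problem (the universal language $L_u$) is undecidable, hence so is the former as $\langle M',w'\rangle$ — equivalently, as the start domino $\bigl[\tfrac{x_0}{y_0}\bigr]$ — ranges over all inputs. This is precisely the parametrized problem in the statement, with $\alpha$, $\beta$, $S$ as the witnessing fixed data.

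The only point requiring care is verifying that \emph{nothing} in the classical construction besides the start domino depends on $w$, including after the minor modifications needed to match this paper's definition of $GPCP$ (every component nonempty, and a named start and end domino rather than the ``must begin with pair~1'' phrasing of MPCP). This is routine: the cleanup and terminating dominoes are assembled solely from $q_{acc}$ and the separator $\#$, and any padding used to remove empty components is applied uniformly to all dominoes, independently of $w$. I do not expect a genuine obstacle here — the entire content of the theorem is the bookkeeping observation that the ``program part'' of the domino system may be taken to be that of a single universal machine.
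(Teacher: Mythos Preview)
The paper does not actually prove this theorem: it is stated with the citation \cite{HopcroftMotwaniUllman} and used as a black box, so there is no in-paper argument to compare against. Your proposal is correct and is exactly the standard route one extracts from that reference: in the classical TM-to-MPCP construction only the start domino carries the initial configuration, so fixing the machine to be a universal one freezes $S$ and the end domino $(\alpha,\beta)$ while the undecidable acceptance problem for $U$ is carried by $(x_0,y_0)$. Your closing caveats about nonempty components and the start/end-domino format of this paper's $GPCP$ are the right checks, and they go through without difficulty.
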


Let $\Bigg \langle \bigg[ \dfrac{x_0}{y_0} \bigg], \, \left\{(x_i,\; y_i)\right \}^{n}_{i=1} \,
\bigg[ \dfrac{x_{n+1}}{y_{n+1}} \bigg] \Bigg \rangle$ be an instance of the GPCP, i.e.,
$\mathop{\left\{(x_i,\; y_i)\right \}}^{n}_{i=1}$ is the set of
intermediate dominoes and $(x_0, y_0), (x_{n+1}, y_{n+1})$, the
start and end dominoes respectively, where the strings
are over the alphabet~$\Sigma = \{a,\, b\}$.
Let $$\widehat{\Sigma}= \{a,b\} \, \cup \, \{c_0,\ldots, c_n\} \, \cup \, \{ \cent_1, \cent_2, \$, \#_1, \#_2 \}$$ be
the new alphabet for the instance of~CE.

From the given instance, we construct a string-rewriting system $R_1^{}$ with the following rules:

\begin{equation*}
\begin{aligned}[c]
\cent_1 \;c_1 & \rightarrow & x_1 \;\cent_1, \\
\vdotswithin{\cent_1 \;c_n} & \rightarrow & \vdotswithin{x_n \cent_1}\\
\cent_1 \;c_{n+1} & \rightarrow & \#_1 \;x_{n+1} \$ \\[20pt]
\cent_2 \;c_1 & \rightarrow & y_1 \;\cent_2, \\
\vdotswithin{\cent_2 \;c_n} & \rightarrow & \vdotswithin{y_n \cent_2}\\
\cent_2 \;c_{n+1} & \rightarrow & \#_2 \;y_{n+1} \$
\end{aligned}
\qquad
\qquad
\qquad
\begin{aligned}[c]
a\; \#_1 & \rightarrow & \#_1\; a\\
b\; \#_1 & \rightarrow &  \#_1\; b\\[20pt]
a\; \#_2 & \rightarrow & \#_2\; a\\
b\; \#_2 & \rightarrow &  \#_2\; b\\
\end{aligned}
\end{equation*}

\begin{lem}
$R_1^{}$ is convergent.
\end{lem}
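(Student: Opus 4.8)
The statement is that the string-rewriting system $R_1^{}$ just built from the GPCP instance is convergent, that is, terminating and confluent; the plan is to prove these two properties separately and then combine them by Newman's Lemma. For termination I would attach to every string $w\in\widehat{\Sigma}^*$ a pair $(c(w),d(w))\in\mathbb{N}\times\mathbb{N}$, where $c(w)$ is the number of occurrences in $w$ of the index symbols $c_i$ and $d(w)$ is the total ``$\#$-displacement'' of $w$, namely the sum, over all occurrences of $\#_1$ or $\#_2$ in $w$, of the number of symbols from $\{a,b\}$ lying to the left of that occurrence. One then checks that each rewrite step strictly decreases this pair in the lexicographic order: the rules with a $\cent_i$ on the left delete exactly one index symbol and create none (their right-hand sides lie over $\{a,b,\cent_1,\cent_2,\#_1,\#_2,\$\}$), so $c$ strictly decreases, irrespective of the effect on $d$; the commutation rules $a\#_i\rightarrow\#_i a$ and $b\#_i\rightarrow\#_i b$ involve no index symbol, so $c$ is unchanged, and such a step merely moves one $\{a,b\}$-letter past one adjacent $\#$-symbol, so $d$ decreases by exactly one. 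Since the lexicographic order on $\mathbb{N}\times\mathbb{N}$ is well-founded, $R_1^{}$ terminates.

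For confluence, since $R_1^{}$ is already terminating it suffices by Newman's Lemma to verify local confluence, which for a string-rewriting system reduces to examining the critical pairs arising from overlaps between left-hand sides. The point is that $R_1^{}$ has \emph{no} such overlaps at all: every left-hand side has length two, the first symbol of a left-hand side always belongs to $\{\cent_1,\cent_2,a,b\}$ while its last symbol always belongs to $\{c_i\}\cup\{\#_1,\#_2\}$, and these two sets are disjoint, so no nonempty proper suffix of one left-hand side coincides with a prefix of another; and since all left-hand sides have the same length, none is a factor of another. Hence there are no critical pairs, $R_1^{}$ is vacuously locally confluent, and therefore confluent and convergent.

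I expect the only step needing real care to be the termination claim for the commutation rules: one must confirm that swapping a letter past an adjacent $\#$-symbol decreases $d$ by exactly one and does not raise the left-of count of any \emph{other} occurrence of $\#_1$ or $\#_2$ (it does not, since no $\#$-symbol sits between the two swapped symbols, and the swapped letter stays to the left of everything further right). This is precisely why the measure is layered lexicographically with $c$ on top, the index-consuming rules being permitted to increase $d$ arbitrarily. The rest is a routine finite inspection of the rule schemas.
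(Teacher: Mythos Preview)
Your argument is correct. The paper actually states this lemma without proof, presumably because the verification is routine; your write-up supplies exactly that routine verification, and both halves are sound. The termination measure $(c,d)$ under lexicographic order works as you describe: the $\cent$-rules strictly decrease $c$ (the $x_i,y_i$ live in $\{a,b\}^+$, so no index symbol is recreated), and the commutation rules leave $c$ fixed while decreasing $d$ by exactly one, since only the single $\#$-occurrence being swapped loses an $\{a,b\}$-letter to its left and no other $\#$-occurrence is affected. For confluence, your observation that every left-hand side has length two with first symbol in $\{\cent_1,\cent_2,a,b\}$ and second symbol in $\{c_1,\dots,c_{n+1},\#_1,\#_2\}$---two disjoint sets---rules out all overlaps, so there are no critical pairs and Newman's Lemma applies. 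Nothing is missing.
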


\begin{lem}
GPCP has a solution if and only if there exist strings $w_1, w_2$ such that
$x_0 \cent_1\; w_1 \rightarrow^{!} \#_1 w_2$ and $y_0 \cent_2\; w_1
\rightarrow^{!} \#_2 w_2$.
\end{lem}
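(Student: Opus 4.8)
The plan is to prove both directions by directly simulating the rewriting of $R_1^{}$, whose rules fall into three groups with clear roles. For $1 \le i \le n$ the rule $\cent_1 c_i \rightarrow x_i \cent_1$ (resp.\ $\cent_2 c_i \rightarrow y_i \cent_2$) lets the marker $\cent_1$ (resp.\ $\cent_2$) sweep rightward over a block of $c$-symbols while spilling the words $x_i$ (resp.\ $y_i$) to its left; the rule $\cent_1 c_{n+1} \rightarrow \#_1 x_{n+1}\$$ (resp.\ $\cent_2 c_{n+1} \rightarrow \#_2 y_{n+1}\$$) ends the sweep, deleting the marker and emitting a fresh symbol $\#_1$ (resp.\ $\#_2$) together with the delimiter $\$$; and the commutation rules $a\#_j \rightarrow \#_j a$, $b\#_j \rightarrow \#_j b$ let $\#_j$ drift leftward past the spilled letters until it reaches the front. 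Confluence and termination of $R_1^{}$ (the preceding lemma) let us speak of \emph{the} normal form throughout.

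For the ``only if'' direction I would take a GPCP solution given by indices $i_1,\dots,i_k \in \{1,\dots,n\}$, set $s := x_0 x_{i_1}\cdots x_{i_k} x_{n+1}$ (which by assumption equals $y_0 y_{i_1}\cdots y_{i_k} y_{n+1}$), and put $w_1 := c_{i_1} c_{i_2}\cdots c_{i_k} c_{n+1}$ and $w_2 := s\,\$$. Applying the sweep rules, then the terminating rule, then $|x_0 x_{i_1}\cdots x_{i_k}|$ commutation steps gives
\[ x_0\,\cent_1\, w_1 \;\rightarrow^{*}\; x_0 x_{i_1}\cdots x_{i_k}\,\cent_1\, c_{n+1} \;\rightarrow\; x_0 x_{i_1}\cdots x_{i_k}\,\#_1\, x_{n+1}\,\$ \;\rightarrow^{*}\; \#_1\, s\,\$, \]
and $\#_1 s\,\$$ is irreducible since $s \in \{a,b\}^{*}$ has no $c$-symbol and $\#_1$ is leftmost. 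The symmetric computation gives $y_0\,\cent_2\, w_1 \rightarrow^{!} \#_2 s\,\$ = \#_2 w_2$, the equality of the two outputs being exactly the GPCP equation.

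For the ``if'' direction I would assume $x_0\cent_1 w_1 \rightarrow^{!} \#_1 w_2$ and $y_0\cent_2 w_1 \rightarrow^{!} \#_2 w_2$ and analyse how a $\#_1$ can reach the front of a normal form. No rule relocates or recreates $\cent_1$ except $\cent_1 c_i \rightarrow x_i\cent_1$, and the only rule deleting $\cent_1$ or introducing $\#_1$ and $\$$ is $\cent_1 c_{n+1}\rightarrow \#_1 x_{n+1}\$$; moreover $\#_j$ can only migrate past $a$ and $b$. Tracking the unique initial occurrence of $\cent_1$, I would show that producing an irreducible string containing a $\#_1$ at the front forces the reduction to consume an initial segment $c_{i_1}\cdots c_{i_k} c_{n+1}$ of $w_1$ by sweep-then-terminate and then commute $\#_1$ to the front, so that $w_1 = c_{i_1}\cdots c_{i_k} c_{n+1}\,\rho$ and $w_2 = x_0 x_{i_1}\cdots x_{i_k} x_{n+1}\,\$\,\rho$ for some $\rho$ and some $i_1,\dots,i_k \in \{1,\dots,n\}$. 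Running the symmetric analysis on the $\cent_2$-side, with the \emph{same} $w_1$, forces $w_2 = y_0 y_{i_1}\cdots y_{i_k} y_{n+1}\,\$\,\rho'$. Comparing the two expressions for $w_2$ up to the first occurrence of $\$$ — a symbol outside $\{a,b\}$ — gives $x_0 x_{i_1}\cdots x_{i_k} x_{n+1} = y_0 y_{i_1}\cdots y_{i_k} y_{n+1}$, a GPCP solution with index sequence $i_1,\dots,i_k$.

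The main obstacle is making the structural claim in the ``if'' direction airtight for an arbitrary $w_1 \in \widehat{\Sigma}^{*}$: one must exclude the possibility that stray occurrences of $\cent_1, \cent_2, \#_1, \#_2, \$$ or out-of-place $c$-symbols inside $w_1$ create a normal form of the shape $\#_1 w_2$ with no honest index sequence behind it. I expect to handle this with a reduction invariant that follows the position and effect of the active marker $\cent_1$ (the portion of the string strictly to its right, up to the first symbol it cannot process, is never modified, and any $\#_1$ that is not at the front stays blocked behind $\cent_1$ and the $c$-symbols), or alternatively by first arguing that it suffices to consider $w_1 \in \{c_1,\dots,c_{n+1}\}^{*}$ and then checking that the extra generality merely appends a common inert suffix ($\rho = \rho'$) to $w_2$ on both sides.
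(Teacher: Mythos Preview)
Your proposal is correct and follows essentially the same route as the paper. Both directions match: for the easy direction you and the paper pick $w_1 = c_{i_1}\cdots c_{i_k}c_{n+1}$ and $w_2 = x_0 x_{i_1}\cdots x_{i_k}x_{n+1}\$$ and compute; for the converse, both arguments track the unique $\cent_1$ (resp.\ $\cent_2$), observe that a leading $\#_1$ can only arise via the $c_{n+1}$-rule after the marker has swept through a prefix in $\{c_1,\dots,c_n\}^{*}$, and then compare the two expressions for $w_2$. The paper is terser than you are: it invokes ``without loss of generality $w_1,w_2$ irreducible'' and then asserts the structural claim ``by observation of the rules,'' whereas you isolate the stray-symbol issue and sketch an invariant---but this is a difference in level of detail, not in strategy. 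Incidentally, your labelling of the two directions is the conventional one; the paper's ``if'' and ``only if'' are swapped.
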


\begin{proof}
We prove the ``if" direction first. Assume GPCP has a solution. 
Let $i_1, \ldots, i_k$ be the indices of the intermediate dominoes used, i.e., there are~$k+2$ dominoes 
in all
including the start and end~dominoes. This will result in
$x_0\;x_{i_1}, \ldots, x_{i_k}\; x_{n+1} =
y_0\;y_{i_1}, \ldots, y_{i_k}\; y_{n+1}$.
Let
$w_1 = c_{i_1} \ldots c_{i_k} c_{n+1}^{}$ and $w_2 = x_0\;x_{i_1} \ldots x_{i_k}\;x_{n+1} \$$, then
\begin{equation*}
\begin{aligned}[c]
x_0\; \cent_1 c_{i_1} \ldots c_{i_k} c_{n+1} & \rightarrow x_0\; x_{i_1} \cent_1 \; \ldots \; c_{i_k} c_{n+1} \rightarrow^{!} \#_1 \;x_0\;x_{i_1} \ldots x_{i_k}\;x_{n+1} \$\\
y_0\; \cent_2 c_{i_1} \ldots c_{i_k} c_{n+1} & \rightarrow y_0\; y_{i_1} \cent_2 \; \ldots \; c_{i_k} c_{n+1} \rightarrow^{!} \#_2 \;y_0\;y_{i_1} \ldots y_{i_k}\;y_{n+1} \$
\end{aligned}
\end{equation*}

For the ``only if" direction suppose there exist strings
$w_1$ and $w_2$ such that $$x_0\; \cent_1\; w_1 \rightarrow^{!} \#_1 w_2 \text{ and }
y_0\; \cent_2\; w_1 \rightarrow^{!} \#_2 w_2.$$
Without loss of generality, we can assume that $w_1$ and $w_2$ are irreducible.
The observation of the rules on both sides shows that $\#_1 w_2$ and $ \#_2 w_2$ can be derived
if and only if $c_{n+1}$ occurs in~$w_1$ since only the rules with~$c_{n+1}$
has a $\cent$ on the lhs and a
$\#$~symbol on the rhs.

Thus, we can write $w_1$ as $w_1 = {w_1^{\prime}} \; c_{n+1}\;
{w_1^{\prime\prime}}$ such that ${w_1^{\prime}} c_{n+1}$ is the shortest
prefix of $w_1$ that contains~$c_{n+1}$.  To be able to apply the
rules with $\cent$ signs, ${w_1^{\prime}}$ should be in~$\{c_1,
\ldots, c_n\}_{}^{*}$. Let
${w_1}^{\prime} \, = \, c_{i_1}^{} \ldots c_{i_k}^{}$ where $k = | {w_1}^{\prime} |$.
\begin{equation*}
\begin{aligned}[c]
x_0\;\cent_1\; {w_1}^{\prime}\;c_{n+1} & \rightarrow_{}^* \#_1 x_{0}\;x_{i_1}\; \ldots \;x_{i_k}\;x_{n+1} \$ \\
y_0\;\cent_2\; {w_1}^{\prime}\;c_{n+1} & \rightarrow_{}^* \#_2 y_{0}\;y_{i_1}\; \ldots \;y_{i_k}\;y_{n+1} \$
\end{aligned}
\end{equation*}

Since $\$$ does not occur on the left hand side of the rules, the string after the $\$$ sign, i.e.,~${w_1^{\prime\prime}}$,
does not take part in the reductions. Thus

\begin{equation*}
\begin{aligned}[c]
x_0\;\cent_1\; {w_1}^{\prime}\;c_{n+1}\; {w_1^{\prime\prime}} & \rightarrow_{}^! \#_1 x_{0}\;x_{i_1}\; \ldots \;x_{i_k}\;x_{n+1} \$\; {w_1^{\prime\prime}}\\
y_0\;\cent_2\; {w_1}^{\prime}\;c_{n+1}\; {w_1^{\prime\prime}} & \rightarrow_{}^! \#_2 y_{0}\;y_{i_1}\; \ldots \;y_{i_k}\;y_{n+1} \$\; {w_1^{\prime\prime}}
\end{aligned}
\end{equation*}

\ignore{$\ldots$ \todo{Finish this.}
$w_2$ is in the form of domino strings, thus, there exists a solution for GPCP if and only if $w_1$ and $w_2$ exist such that $x_0 \cent_1\; w_1 \rightarrow^{!} \#_1 w_2$ and $y_0 \cent_2\; w_1
  \rightarrow^{!} \#_2 w_2$.
}

Thus it must be that $x_{0}\;x_{i_1}\; \ldots \;x_{i_k}\;x_{n+1} \; = \; y_{0}\;y_{i_1}\; \ldots \;y_{i_k}\;y_{n+1}$,
which is a solution to the~GPCP.
\end{proof}

\begin{thm}
The CE problem is undecidable for the finite and convergent string rewriting system~$R_1^{}$.
\end{thm}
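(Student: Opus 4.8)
The plan is to reduce the undecidable GPCP variant of the preceding theorem --- where the set $S=\{(x_i,y_i)\}_{i=1}^{n}$ of intermediate dominoes and the end domino $(x_{n+1},y_{n+1})$ are fixed and only the start domino $(x_0,y_0)$ varies --- to the Common Equation problem for the \emph{single, fixed} system $R_1$, which by its construction depends only on $S$ and on the end domino. On input $(x_0,y_0)$ I would output the CE instance given by $\theta_1=\{x_1\mapsto x_0\cent_1,\ x_2\mapsto\#_1\}$, $\theta_2=\{x_1\mapsto y_0\cent_2,\ x_2\mapsto\#_2\}$; in string notation $\alpha_1=x_0\cent_1$, $\alpha_2=\#_1$, $\beta_1=y_0\cent_2$, $\beta_2=\#_2$. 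Since $x_0\in\{a,b\}^{+}$ is nonempty we have $\alpha_1\ne\alpha_2$, so the side condition attached to clause~(i) (the ``two-mappings'' clause) is automatically met. The goal is to show this CE instance is a ``yes'' instance precisely when the given GPCP instance is.

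For the forward direction, suppose the GPCP instance is solvable; the characterization lemma above (GPCP solvable iff suitable $w_1,w_2$ exist) provides irreducible $w_1,w_2$ with $x_0\cent_1 w_1\rightarrow_{R_1}^{!}\#_1 w_2$ and $y_0\cent_2 w_1\rightarrow_{R_1}^{!}\#_2 w_2$. Since no left-hand side of $R_1$ begins with $\#_1$ or $\#_2$ and $w_2$ is irreducible, both $\#_1 w_2$ and $\#_2 w_2$ are irreducible, so $\alpha_1 w_1$ and $\alpha_2 w_2$ share a normal form and likewise $\beta_1 w_1$ and $\beta_2 w_2$; hence $w_1 x_1\approx^{?} w_2 x_2$ is a non-trivial common equation of $\theta_1,\theta_2$ (its two sides have distinct bottom variables, hence it is not an $R_1$-identity) and clause~(i) holds. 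Conversely, if irreducible $W_1,W_2$ witness clause~(i), then $x_0\cent_1 W_1\leftrightarrow_{R_1}^{*}\#_1 W_2$ and $y_0\cent_2 W_1\leftrightarrow_{R_1}^{*}\#_2 W_2$; convergence of $R_1$ (the other lemma above) together with irreducibility of $\#_1 W_2,\#_2 W_2$ turns these into $x_0\cent_1 W_1\rightarrow_{R_1}^{!}\#_1 W_2$ and $y_0\cent_2 W_1\rightarrow_{R_1}^{!}\#_2 W_2$, so the characterization lemma returns a GPCP solution. And if irreducible $W_1,W_2$ witness clause~(ii) with $i=2$, then $\#_1 W_1\leftrightarrow_{R_1}^{*}\#_1 W_2$ with $\#_1 W_1,\#_1 W_2$ irreducible forces $W_1=W_2$, contradicting $W_1\ne W_2$; so this case cannot arise.

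The case I expect to be the main obstacle is clause~(ii) with $i=1$: irreducible $W_1\ne W_2$ with $(x_0\cent_1 W_1)_\downarrow=(x_0\cent_1 W_2)_\downarrow$ and $(y_0\cent_2 W_1)_\downarrow=(y_0\cent_2 W_2)_\downarrow$. I would dispatch it by a normal-form analysis. Write $W_j=c_{i^{(j)}_1}\cdots c_{i^{(j)}_{m_j}}\,s_j$ with each $i^{(j)}_\ell\in\{1,\dots,n\}$ and $s_j$ not beginning with any of $c_1,\dots,c_n$. When $x_0\cent_1 W_j$ is reduced, $\cent_1$ runs rightward through the $c$-prefix, replacing $c_{i^{(j)}_\ell}$ by $x_{i^{(j)}_\ell}$; then either $s_j$ begins with $c_{n+1}$, producing a $\#_1$ which then bubbles to the front (the prefix before it being over $\{a,b\}$), or $s_j$ does not and $\cent_1$ becomes stuck. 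In either situation the normal form is a word whose first $\$$ (respectively, whose first $\cent_1$) occupies an unambiguous position, so from it one recovers both the tail $s_j$ and the concatenation $x_{\sigma_j}$ of first components along $\sigma_j=(i^{(j)}_1,\dots,i^{(j)}_{m_j})$. Hence $(x_0\cent_1 W_1)_\downarrow=(x_0\cent_1 W_2)_\downarrow$ already forces $x_{\sigma_1}=x_{\sigma_2}$ and $s_1=s_2$. I would then use the freedom to choose the fixed GPCP instance of the preceding theorem so that $\sigma\mapsto x_\sigma$ is injective on index words over $S$ --- a property obtained by a routine padding of the intermediate dominoes that preserves both solvability and the fixed/variable split; this gives $\sigma_1=\sigma_2$, hence $W_1=W_2$, a contradiction. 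Therefore clause~(ii) never holds, only clause~(i) can, and the CE instance is a ``yes'' instance iff the GPCP instance is. Since $R_1$ is a single finite, convergent SRS and GPCP with $S$ and the end domino fixed is undecidable, CE is undecidable for $R_1$.
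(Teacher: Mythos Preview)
Your reduction is the one the paper has in mind: it states the theorem immediately after the characterization lemma and gives no further proof, so the two-mapping instance $\alpha_1=x_0\cent_1$, $\alpha_2=\#_1$, $\beta_1=y_0\cent_2$, $\beta_2=\#_2$ together with that lemma is exactly the intended argument, and your treatment of clause~(i) and of clause~(ii) with $i=2$ is correct.

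Where you go beyond the paper is in handling clause~(ii) with $i=1$; the paper is silent on this. Your normal-form analysis is right: from $x_0\cent_1 W_1\downarrow x_0\cent_1 W_2$ and $y_0\cent_2 W_1\downarrow y_0\cent_2 W_2$ one extracts $s_1=s_2$, $x_{\sigma_1}=x_{\sigma_2}$, and $y_{\sigma_1}=y_{\sigma_2}$, so one needs $\sigma\mapsto (x_\sigma,y_\sigma)$ injective. One caution: the phrase ``routine padding of the intermediate dominoes'' is misleading. The obvious per-domino padding $(x_i,y_i)\mapsto(d_i x_i,\,d_i y_i)$ with fresh markers $d_i$ does \emph{not} preserve solvability, because in a GPCP solution the marker occurrences on the $x$-side and the $y$-side need not align. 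A cleaner justification is to observe that in the standard reduction from (universal) Turing-machine halting underlying the cited fixed-$S$ theorem, the first components of the intermediate dominoes are single tape symbols together with state--symbol pairs $qa$; over disjoint state and tape alphabets these form a prefix code, so $\sigma\mapsto x_\sigma$ is already injective. With that remark in place of ``padding'', your argument is complete.
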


\begin{lem}
For all $a \in \widehat{\Sigma}$ and strings $Z_1$ and $Z_2$,
$~ Z_1 a \, \downarrow \, Z_2 a ~$ if and only if $~ Z_1 \, \downarrow \, Z_2$.
\end{lem}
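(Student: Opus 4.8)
The ``if'' direction is immediate and does not even use convergence: $\leftrightarrow^{*}_{R_1}$ is a congruence for concatenation, so $Z_1 \downarrow Z_2$ (i.e.\ $Z_1 \leftrightarrow^{*}_{R_1} Z_2$) gives $Z_1 a \leftrightarrow^{*}_{R_1} Z_2 a$ and hence $Z_1 a \downarrow Z_2 a$. For the converse I would first reduce, using that $R_1$ is convergent, to the following right-cancellation statement \emph{for normal forms}: for all $W_1, W_2 \in IRR(R_1)$, if $(W_1 a)_\downarrow = (W_2 a)_\downarrow$ then $W_1 = W_2$. Indeed, since $Z_i a \leftrightarrow^{*}_{R_1} (Z_i)_\downarrow\, a$, confluence gives $(Z_i a)_\downarrow = \big((Z_i)_\downarrow\, a\big)_\downarrow$, so the general claim follows by applying the normal-form statement to $W_i = (Z_i)_\downarrow$. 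The plan is then to prove this statement by showing that, for each fixed symbol $a \in \widehat{\Sigma}$, the map $W \mapsto (Wa)_\downarrow$ is \emph{injective} on $IRR(R_1)$; concretely I will describe how to recover $W$ from the pair $\big((Wa)_\downarrow,\, a\big)$.

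The structural fact that does the work is that every left-hand side of $R_1$ has length exactly $2$, so a string is reducible iff two consecutive letters form one of the windows ``$\cent_1$ or $\cent_2$ followed by a $c$-symbol'' or ``$a$ or $b$ followed by $\#_1$ or $\#_2$''. Hence for $W \in IRR(R_1)$ the string $Wa$ can have a redex only straddling the last letter of $W$ and $a$, and the whole argument is a case split on $a$. If $a \in \{a,b,\cent_1,\cent_2,\$\}$, that boundary pair is never a redex, so $Wa \in IRR(R_1)$ and $W$ is $(Wa)_\downarrow$ with its last letter deleted. If $a$ is a $c$-symbol other than $c_{n+1}$, then $Wa$ is reducible only when $W$ ends in $\cent_1$ or in $\cent_2$ (otherwise $(Wa)_\downarrow = Wa$ ends in $a$); the single applicable $\cent$-rule fires once, producing $W'x_j\cent_1$ (resp.\ $W'y_j\cent_2$), and one checks that no new redex appears at either fresh junction, so this is already the normal form --- its last letter ($c_j$, $\cent_1$, or $\cent_2$) identifies the subcase, and stripping the known suffix $x_j\cent_1$ (resp.\ $y_j\cent_2$) recovers $W$. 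If $a \in \{\#_1,\#_2\}$, say $a=\#_1$, then $Wa$ is reducible only when $W$ ends in $a$ or $b$ (otherwise $(Wa)_\downarrow = W\#_1$ ends in $\#_1$, and deleting it recovers $W$); in that case the only reductions available are applications of $a\#_1\to\#_1 a$ and $b\#_1\to\#_1 b$, which march the new $\#_1$ leftward through the maximal $\{a,b\}$-suffix $W_R$ of $W$ and then stop, giving the normal form $W_L\#_1 W_R$ with $W = W_L W_R$. Since $W_R$ is exactly the maximal $\{a,b\}$-suffix of $(Wa)_\downarrow$ and the letter just before it is the displaced $\#_1$, deleting that $\#_1$ recovers $W$.

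The remaining case $a = c_{n+1}$ is the interesting one and is where I expect the only genuine subtlety. Here $Wa$ is reducible only when $W$ ends in $\cent_1$ or $\cent_2$ (otherwise $(Wa)_\downarrow$ ends in $c_{n+1}$); say $W = W'\cent_1$. The rule $\cent_1 c_{n+1}\to \#_1 x_{n+1}\$$ fires once, yielding $W'\#_1 x_{n+1}\$$, after which the new $\#_1$ migrates leftward through the maximal $\{a,b\}$-suffix $W'_R$ of $W'$ exactly as above, producing the normal form $W'_L\#_1 W'_R x_{n+1}\$$ with $W' = W'_L W'_R$. The symmetric subcase $W = W'\cent_2$ produces a normal form that also ends in $\$$ but carries a displaced $\#_2$ in place of $\#_1$; so when $x_{n+1} = y_{n+1}$ the two subcases are \emph{not} distinguishable by their $\{a,b\}$-shaped tails, and this is the obstacle. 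It is resolved by the observation that everything to the right of the displaced $\#$-symbol, up to the trailing $\$$, lies in $\{a,b\}^{*}$; therefore that symbol is the \emph{rightmost} occurrence of a symbol from $\{\#_1,\#_2\}$ in $(Wa)_\downarrow$, so the normal form records whether it was a $\#_1$ or a $\#_2$, the two subcases are disjoint, and in each we delete the known suffix $x_{n+1}\$$ (resp.\ $y_{n+1}\$$), then delete that rightmost $\#$-symbol, then append $\cent_1$ (resp.\ $\cent_2$). Together with the easy distinction ``last letter $= c_{n+1}$'' for the case where $W$ does not end in a $\cent$, this exhibits $W$ as a function of $\big((Wa)_\downarrow, a\big)$ in every case, giving the injectivity and hence the lemma. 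Beyond this point the proof is the finite, routine verification that the single rule application and the subsequent $\#$-migration create no redexes other than the ones described.
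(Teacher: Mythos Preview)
Your proof is correct and follows essentially the same route as the paper's: reduce to irreducible $Z_1,Z_2$ via convergence and then case-split on the appended letter, computing the normal form of $W a$ in each case. Your injectivity framing is a clean way to package the argument, and you are in fact more careful than the paper in the $c_{n+1}$ case---you explicitly handle the mixed subcase where one side ends in $\cent_1$ and the other in $\cent_2$ by locating the rightmost $\#$-symbol, whereas the paper tacitly treats only the case where both end in the same $\cent_i$. One cosmetic slip: in the $c_j$ ($j\le n$) subcase, stripping the suffix $x_j\cent_1$ yields $W'$, not $W$; you still need to re-append $\cent_1$, just as you correctly do in the $c_{n+1}$ case.
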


\begin{proof}
  Since $R_1^{}$ is convergent, all we need to prove is that
  if $Z_1, Z_2 \in IRR(R_1^{})$ and $a \in \widehat{\Sigma}$, then
  $Z_1 a \; \downarrow \; Z_2 a$ if and only if $Z_1 = Z_2$.

  Let $c \in \widehat{\Sigma}$ such that $Z_1 c \, \downarrow \, Z_2 c$
  and $Z_1 \neq Z_2$ where $Z_1$ and $Z_2$ are irreducible strings.
  Clearly either $Z_1 c$ or $Z_2 c$ must be reducible.
  \\Thus it has to be that
  $c \in \left\{ c_1, \ldots , c_{n+1} \right\} \cup \{ \#_1 , \#_2 \}$.
  We now need to consider three cases:

  \begin{enumerate}[label=\roman*.]

  \item $c \in \left\{ c_1, \ldots , c_{n} \right\}$:\\
  The observation of the rules and the set $c$ belongs to shows that, $Z_1$ and $Z_2$ should end with $\cent_1$ or $\cent_2$ to make $Z_1 c \; \downarrow \; Z_2 c$.
  Let $i$ be an index, such that $0 \leq i \leq n$ and $Z_1$ and $Z_2$ can be written as $Z_1 = {Z_1^{\prime}}\; \cent_1$ and $Z_2 = {Z_2^{\prime}}\; \cent_1$: 
  \begin{equation}
	\begin{aligned}[c]
		{Z_1^{\prime}}\;\cent_1\;c_i &\rightarrow {Z_1^{\prime}}\;x_i\;\cent_1 \\
		{Z_2^{\prime}}\;\cent_1\;c_i &\rightarrow {Z_2^{\prime}}\;x_i\;\cent_1
	\end{aligned}
  \end{equation}
  \noindent Since ${Z_1^{\prime}}, {Z_2^{\prime}} \in IRR(R_1^{})$, so are ${Z_1^{\prime}}\;x_i$ and ${Z_2^{\prime}}\;x_i$, since no left-hand sides end with either $a$ or $b$. 
  Hence, ${Z_1^{\prime}} = {Z_2^{\prime}}$.
    
  \item $c = \#_1$:\\
  Suppose $Z_1 = {Z_1^{\prime}}\;{Z_1^{\prime\prime}}\;$  and $Z_2 = {Z_2^{\prime}}\;{Z_2^{\prime\prime}}\;$ such that ${Z_1^{\prime\prime}}$ and ${Z_2^{\prime\prime}}$
  are the longest suffixes of $Z_1$ and $Z_2$ that belong to $\{a,b\}^{*}$. Thus:
   \begin{equation}
	\begin{aligned}[c]
		{Z_1^{\prime}}\;{Z_1^{\prime\prime}}\; \#_1&\rightarrow^{!} {Z_1^{\prime}}\; \#_1 \; {Z_1^{\prime\prime}} \\
		{Z_2^{\prime}}\;{Z_2^{\prime\prime}}\; \#_1&\rightarrow^{!} {Z_2^{\prime}}\; \#_1 \; {Z_2^{\prime\prime}}
	\end{aligned}
  \end{equation}
  Since ${Z_1^{\prime}} \#_1 {Z_1^{\prime\prime}}$ and ${Z_2^{\prime}} \#_1 {Z_2^{\prime\prime}}$ are irreducible, ${Z_1^{\prime}} = {Z_2^{\prime}}$ and 
  ${Z_1^{\prime\prime}} = {Z_2^{\prime\prime}}$.
  
  \item $c = \#_2$:\\  
  Follows the same construction as in the previous case, with the only difference being $\#_2$ instead of $\#_1$.
  
  \item $c = c_{n+1}$:\\
  Thus,
  \ignore{The similar construction to the case~(i) above, such that $\cent_1$ will be changed with $\#_1$ and $\$$ on the right-hand side of the reduction:
         }
  \begin{equation}
	\begin{aligned}[c]
		{Z_1^{\prime}}\;\cent_1\;c_{n+1} &\rightarrow {Z_1^{\prime}}\; \#_1 \;x_{n+1}\;\$  \\
		{Z_2^{\prime}}\;\cent_1\;c_{n+1} &\rightarrow {Z_2^{\prime}}\; \#_1 \;x_{n+1}\;\$
	\end{aligned}
  \end{equation}
  Since there are no left-hand sides that end with $a$, $b$
  or {\textdollar}, we have ${Z_1^{\prime}}\; \#_1 \downarrow {Z_2^{\prime}}\; \#_1$.
  By the case~(ii) above, we get ${Z_1^{\prime}} = {Z_2^{\prime}}$. \qedhere
  \end{enumerate}    
\end{proof}

\begin{thm}
The CT problem is decidable for the finite and convergent string rewriting system~$R_1^{}$.
\end{thm}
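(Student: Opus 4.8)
The plan is to reduce the CT problem for $R_1^{}$ to the (trivially decidable) question of whether $\alpha$ and $\beta$ have the same $R_1^{}$-normal form, using the right-cancellation lemma established just above, namely that $Z_1\, a \, \downarrow \, Z_2\, a$ if and only if $Z_1 \, \downarrow \, Z_2$ for every symbol $a \in \widehat{\Sigma}$ and all strings $Z_1, Z_2$.

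Recall that, since $R_1^{}$ is convergent, the CT problem for $R_1^{}$ asks, given $\alpha, \beta \in \widehat{\Sigma}^{*}$, whether there is a string $W$ with $\alpha W \downarrow \beta W$. First I would prove the equivalence
\[ \bigl(\exists W \in \widehat{\Sigma}^{*} :\ \alpha W \downarrow \beta W\bigr) \quad\Longleftrightarrow\quad \alpha \downarrow \beta . \]
The implication ``$\Leftarrow$'' is immediate: take $W = \lambda$. For ``$\Rightarrow$'' I would induct on $|W|$. The base case $W = \lambda$ is trivial; for the step, write $W = W' c$ with $c \in \widehat{\Sigma}$, so that $(\alpha W')\, c \downarrow (\beta W')\, c$, and apply the right-cancellation lemma to obtain $\alpha W' \downarrow \beta W'$, then conclude with the induction hypothesis. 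The one point deserving an explicit remark is that the lemma is here applied to $Z_1 = \alpha W'$ and $Z_2 = \beta W'$, which need not be irreducible; but the lemma is stated (and proved) for arbitrary strings, so nothing extra is required.

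Finally, because $R_1^{}$ is finite and convergent (by the lemma asserting its convergence), the normal forms $\alpha_\downarrow$ and $\beta_\downarrow$ are effectively computable, so ``$\alpha \downarrow \beta$'' is decidable, and hence, by the equivalence above, so is the CT problem for $R_1^{}$. I do not expect a genuine obstacle here: all of the content sits in the right-cancellation lemma, whose role is precisely to guarantee that appending a longer multiplier $W$ cannot manufacture an equivalence $\alpha W \downarrow \beta W$ that the empty multiplier does not already witness. The point worth stressing is the contrast with the preceding theorem, where CE was shown undecidable for this very same system $R_1^{}$; together these yield the claimed separation between CT and CE.
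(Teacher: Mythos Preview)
Your proposal is correct and is exactly the argument the paper has in mind: the right-cancellation lemma is placed immediately before this theorem precisely so that one can peel off the symbols of $W$ one at a time and reduce CT to the syntactic check $\alpha\downarrow\beta$. The paper in fact gives no explicit proof of the theorem, so your write-up would fill that gap faithfully.
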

}

\ignore{
we can check in polynomial time if there exist strings $X^\prime$, $Y^\prime$ and $V$.
}

\ignore{
  Since ~$R$ is a monadic rewriting system, there exist
  $a,b \in \Sigma \cup \{\lambda\}$, and strings
  $\alpha_1 , \alpha_2 , \alpha_3 , \alpha_4$, $X_1, X_2 , Y_1 , Y_2$ such that
  $\alpha = \alpha_1\; \alpha_2 = \alpha_3\; \alpha_4$, $X = X_1\, X_2$, $Y = Y_1\, Y_2$
  and \[ \alpha_2 X_1 \stackrel{*}{{\longrightarrow}_R} a , \; \;
  \alpha_4 Y_1 \stackrel{*}{{\longrightarrow}_R} b , \; \;
  \alpha_1 a X_2 ~ = ~ \alpha_3 b Y_2
  \]

  Now there are two cases (as before):

  \begin{enumerate}

  \item $X_2$ is a proper suffix of $Y_2$: Let $Y_2 \; = \; Z
    X_2$. Then $\alpha_1 a ~ = ~ \alpha_3 b Z$ and both $X_1$ and
    $Y_1 Z$ belong to~$RF( \alpha , \alpha_1 a )$.

  \item $Y_2$ is a suffix of $X_2$: Let $X_2 \; = \; UY_2$.
   Then $\alpha_1 a U ~ = ~ \alpha_3 b$. Thus both
   $X_1 U$ and $Y_1$ belong to~$RF( \alpha , \alpha_3 b )$.

  \end{enumerate}

\begin{itemize}
\item $\alpha_1\;a$, $\qquad \alpha_1 \in Prefix (\alpha)$
\item $\beta_1\;c$, $\qquad \beta_1 \in Prefix (\beta)$
\end{itemize}
The Lemmata ~\ref{CEOneMappingMonadicLemma1}, \ref{CEOneMappingMonadicLemma2}, 
\ref{CEOneMappingMonadicLemma3} and \ref{CEOneMappingMonadicLemma4} show that there exist different combinations
of $X$ and $Y$. To be able to find a solution with the given options, we should apply Lemma~\ref{FindMEFA} to be able to find the quotient of the strings 
as in $RF(\alpha, \gamma_1^{}) $.

As the second step, we should intersect the set of solutions with Lemma~\ref{MEFAIntersection} to find different strings for $X^\prime$ and $Y^\prime$. Then given a set of $X^\prime$ and $Y^\prime$s and
$X^{\prime} V, \, Y^{\prime} V$s, we apply Lemma~\ref{FindingStringsThruIntersection} to find the strings $X^{\prime}$, $Y^{\prime}$ and $V$.

By applying all those steps above, we find if there exists $X, Y \in \Sigma^*$ such that $\alpha X \stackrel{*}{{\longleftrightarrow}_R} \alpha Y \; \text{ and } \;
  \beta X \stackrel{*}{{\longleftrightarrow}_R} \beta Y$ holds.
}

\section*{Conclusion and Future Work}

Inspired by past works done on string rewriting systems, we explored the Fixed Point, Common Term, and Common Equation problems. For these problems we looked at string rewriting systems that were convergent, length-reducing, dwindling, and monadic. We provided complexity results for the Fixed Point problem for convergent and dwindling SRSs, for the Common Term problem we provided complexity results for dwindling SRSs, and for the Common Equation problem we provided complexity results for dwindling and monadic SRSs. 

For the sake of brevity and clarity, we only discussed string rewriting systems in this paper. Our future work will include the investigation of these problems for general term rewriting systems. Additionally, we will also explore forward closed SRSs.

\bibliographystyle{alpha}
\bibliography{unifs-duals-journal}

\end{document}